\numberwithin{equation}{section}
\newcommand{\ER}{Erd{\"o}s--R\'{e}nyi }
\newlength{\leftstackrelawd}
\newlength{\leftstackrelbwd}
\def\leftstackrel#1#2{\settowidth{\leftstackrelawd}%
{${{}^{#1}}$}\settowidth{\leftstackrelbwd}{$#2$}%
\addtolength{\leftstackrelawd}{-\leftstackrelbwd}%
\leavevmode\ifthenelse{\lengthtest{\leftstackrelawd>0pt}}%
{\kern-.5\leftstackrelawd}{}\mathrel{\mathop{#2}\limits^{#1}}}
 \theoremstyle{plain}
\newtheorem{Th}{Theorem}[section]
\newtheorem{Lemma}[Th]{Lemma}
\theoremstyle{definition}
\newtheorem{Rem}[Th]{Remark}
\newtheorem{?}[Th]{Problem}
\def\E{{\mathbb E}}
\def\H{{\mathsf H}}
\def\P{{\mathbb P}}
\def\A{{\mathcal A}}
\def\G{{\mathsf{G}}}
\def\e{{\varepsilon}}
\def\PC{{\mathcal{PC}}}
\def\V{{\mathcal{V}}}
\DeclareMathOperator*{\argmin}{arg\,min}
\begin{document}

\title{Finding planted cliques using gradient descent}

\author[Reza Gheissari]{Reza Gheissari$^*$}
\thanks{$*$; Department of Mathematics, Northwestern University. gheissari@northwestern.edu}

\author[Aukosh Jagannath]{Aukosh Jagannath$^\dagger$}
\thanks{$\dagger$: Department of Statistics and Actuarial Science, Department of Applied Mathematics, and Cheriton School of Computer Science, University of Waterloo. a.jagannath@uwaterloo.ca}

\author[Yiming Xu]{Yiming Xu$^\ddagger$}
\thanks{$\ddagger$: Department of Mathematics, University of Kentucky. yiming.xu@uky.edu}

\maketitle

\thispagestyle{empty}

\begin{abstract}
        The planted clique problem is a paradigmatic model of statistical-to-computational gaps: the planted clique is information-theoretically detectable if its size $k\ge 2\log_2 n$ but polynomial-time algorithms only exist for the recovery task when $k= \Omega(\sqrt{n})$. By now, there are many algorithms that succeed as soon as $k = \Omega(\sqrt{n})$. Glaringly, however, no black-box optimization method, e.g., gradient descent or the Metropolis process, has been shown to work. In fact, Chen, Mossel, and Zadik recently showed that any Metropolis process whose state space is the set of cliques fails to find any sub-linear sized planted clique in polynomial time if initialized naturally from the empty set.
        We show that using the method of Lagrange multipliers, namely optimizing the Hamiltonian given by the sum of the objective function and the clique constraint over the space of all subgraphs, succeeds. In particular, we prove that Markov chains which minimize this Hamiltonian (gradient descent and a low-temperature relaxation of it) succeed at recovering planted cliques of size $k =  \Omega(\sqrt{n})$ if initialized from the full graph. Importantly, initialized from the empty set, the relaxation still does not help the gradient descent find sub-linear planted cliques. We also demonstrate robustness of these Markov chain approaches under a natural contamination model.     
\end{abstract}

\section{Introduction}
The most commonly used method for solving constrained optimization problems is the method of Lagrange multipliers. Here one runs gradient descent on an energy function, or Hamiltonian, which is given by the sum of the objective function and the constraint. 
In many of the central problems arising out of the recent literature in computational complexity of statistical inference, however, there is a substantial gap between the performance guarantees for this black-box method and problem-specific methods. 
From this perspective, it is natural to try to close the gap in arguably the central problem in the field, namely the planted clique problem.

The planted clique problem is the algorithmic task of finding a planted clique with $k$ vertices inside an \ER $\mathsf{G}(n,\frac{1}{2})$ random graph. This task was first introduced in the work of Jerrum \cite{jerrum1992large} as a planted version of the maximum clique problem proposed by Karp \cite{karp1976probabilistic}. Since then, the planted clique problem has become a central problem in average-case complexity in its own right as it is one of the simplest models exhibiting a \emph{statistical-to-computational gap}: the clique is information-theoretically recoverable if $k\ge 2\log_2 n$, but it is expected to be algorithmically intractable to recover in polynomial time whenever $k=o(\sqrt{n}).$ Indeed, there is a substantial and influential line of recent work in which various problems are shown to have average case reductions to the planted clique problem (more precisely, its detection analogue) including the sparse PCA problem~\cite{BerthetRigollet} and certain community detection problems~\cite{Hajek-et-al}, and more broadly a hierarchy of reductions of~\cite{BrennanBresler} to planted clique with generalized priors.

By now, many algorithms are known to succeed for recovery of the planted clique when $k  = \Omega( \sqrt{n})$. 
Kucera~\cite{kuvcera1995expected} noticed that if $k\ge C \sqrt{n\log n}$, the $k$ vertices of the largest degree typically form the planted clique. The $\sqrt{n}$ threshold was then attained by a spectral algorithm in~\cite{alon1998finding}; see also the refinement of~\cite{McSherry} and semi-definite programs of \cite{feige2000finding,ames2011nuclear}. Other fast algorithms have been provided for the planted clique problem including multi-stage algorithms of~\cite{feige2010finding, dekel2014finding} and a message-passing-based algorithm by~\cite{DeshpandeMontanari}. On the hardness side, there have been low-degree semidefinite programming refutations~\cite{FeigeKrauthgamer}, statistical query lower bounds~\cite{Feldman-Statistical-query-PC}, and sum-of-squares lower bounds~\cite{MekaPotechinWigderson-SOS-lowerbounds,DeshpandeMontanari,HKP-degree-4-SOS} culminating in~\cite{Barak-SoS-refutation}, together indicating canonical families of ``low-degree'' algorithms cannot succeed at recovering the planted clique in polynomial time when $k=o(\sqrt{n})$.

Prior to these hardness results, the main argument for algorithmic intractability when $k=o(\sqrt{n})$ was the work of Jerrum~\cite{jerrum1992large}. Jerrum observed that gradient descent restricted to the constraint set (i.e., only moving on the space of cliques) fails to find the planted clique; moreover, he showed that a  Markov Chain Monte Carlo (MCMC) relaxation, namely the Metropolis process on the set of cliques is slow to mix. 
Since problem-specific algorithms work when $k=\Omega(\sqrt{n})$, Jerrum posited that if the Metropolis process could not also find large cliques above this scale it would be a ``severe indictment of the Metropolis process''. Despite the many algorithms that do succeed when $k = \Omega( \sqrt{n})$, for thirty years, there was no matching positive result for these off-the-shelf approaches.  

This question was revisited in recent years, with the work of Gamarnik and Zadik~\cite{GamarnikZadik} finding an annealed \emph{overlap gap property} (see e.g., the survey~\cite{GamarnikOGP}) in clique space that persisted well beyond the $\sqrt{n}$ threshold, suggesting that perhaps MCMC algorithms that move purely on cliques would not succeed even in the easy regime for the problem (at least from worst-case initializations). 
Remarkably, in subsequent work, Chen, Mossel, and Zadik~\cite{chen2023almost} proved that gradient descent and Metropolis processes restricted to the set of cliques fail to find the planted clique even when initialized from the empty configuration---the most natural uninformed initialization in the space of cliques---and that this failure holds whenever $k =O(n^{\alpha})$ for any $\alpha<1$!
They went further to obtain similar results even for simulated tempering.
This raised the question of whether such approaches were simply ill-suited to one of the central problems of average-case complexity and statistical-to-computational gaps. On the other hand, in~\cite{GamarnikZadik}, it was asked whether a relaxation of the state space could perhaps salvage the performance of such methods and attain the $\sqrt{n}$ threshold.

Our main result answers this in the affirmative.
We show that the black-box method of Lagrange multipliers succeeds from a natural uninformed start. 
 Specifically, we allow the Markov chain to evolve on the space of \emph{all subsets} of $V$ by adding a penalty term to the Hamiltonian for the missing edges; see~\eqref{myH}. We show that when initialized from the full vertex set, $V$, both the gradient descent and its low-temperature MCMC analogue recover the planted clique in linearly many steps, as long as $k = \Omega(\sqrt{n})$: see Theorem~\ref{mainthm:success}. Note, that as we have relaxed the problem to the space of all subgraphs, there are now \emph{two} natural uninformed initializations: the empty set and the full graph. Importantly, we find that the choice of which uninformed start is crucial to the success of the algorithm: like the Metropolis process on cliques, gradient descent is unable to find the planted clique when started from the empty set; see Theorem~\ref{mainthm:failure}.

\subsection{Main results}
The planted clique distribution $\mathsf{G}(n,\frac{1}{2},k)$ is the distribution over random graphs $G =(V,E)$ in which $V= [n]$, and the edge set $E$ is randomly drawn as follows: pick a subset $\mathcal P \mathcal C \subset [n]$ of $k$ vertices uniformly at random, and include every internal edge of $\PC$ with probability $1$, while independently including every other edge with probability $\frac{1}{2}$.

The planted clique problem is the algorithmic task of recovering the vertex set $\PC$, given $G$.  Above the information theoretic threshold, this is well-known to be equivalent to the maximum clique problem. That is, one seeks to find 
\begin{align*}
     \max_{U\subset V}\,  &|E(U)| \\
     \text{subject to: }\,   &|E(U)| = \binom{|U|}{2} 
\end{align*}
where $E(U)$ is the edge-set of the subgraph induced by $G$ on $U$. 
The method of Lagrange multipliers for this problem amounts to optimizing the following Hamiltonian over $U \subset V=[n]$: for $\gamma>1$, let 
\begin{align}\label{myH}
    H(U) = H_{G,\gamma}(U)= - |E(U)| + \gamma \Big[ \binom{|U|}{2} - |E(U)|\Big]\,.
\end{align}
The restriction to $\gamma>1$ is because this ensures that $H$ is minimized by $\PC$: see Theorem~\ref{thm:global optimum}.
When restricted to cliques this is essentially the energy function of~\cite{jerrum1992large}.\footnote{Jerrum's Hamiltonian had the number of vertices in $U$ as opposed to the number of edges $E(U)$. For cliques, these are effectively the same from an optimization standpoint.} The only difference is that the state space is expanded to all subsets, and the constraint is imposed by the Lagrange multiplier $\gamma$ penalizing the number of missing internal edges of $U$. 
A related relaxation---with a fixed number of non-clique vertices---was suggested in~\cite{GamarnikZadik} as a possible approach to circumventing some bottlenecks of the energy landscape.

We consider the gradient descent as well as its low-temperature, local Markov chain relaxation. In what follows, we write $U \sim U'$ if they are at Hamming distance at most $1$ from one another. Gradient descent on~\eqref{myH} is the following Markov chain $\{S_i\}_{i\geq 0}$: initialize from some $S_0\subset V$;
\begin{enumerate}[label=(\arabic*)]
    \item For every $i\ge 1$, if $H(S_{i-1})> \min\{H(U):U\sim S_{i-1}\}$, draw $S_{i}$ uniformly at random from $\argmin\{H(U): U\sim S_{i-1}\}$ (this may be non-singleton in the case of ties). 
    \item Else, let $S_{i}= S_{i-1}$ (terminating the process). 
\end{enumerate}
Evidently, this process will eventually terminate in an absorbing state (local minimum) of $H$. Let us note here that whenever $k \gg \log n$, the unique global minimizer of~\eqref{myH} is the planted clique itself with probability tending to one, but also that there exist many small, non-$\mathcal P\mathcal C$, local minimizers of the Hamiltonian~\eqref{myH}. In particular,  the landscape is not at all convex. See Section~\ref{sec:energy} where we study the energy landscape in more detail.

We also consider a positive temperature, Gibbs sampler, relaxation of the gradient descent. It is natural to look at Markov chains whose invariant measure is proportional to $\exp(-\beta H(U))$ for the same Hamiltonian. Note, however, that the zero temperature limits of popular chains with this invariant measure, such as Metropolis or Glauber with respect to $H$, are \emph{not} given by gradient descent; their zero-temperature limits make uniform-at-random choices over \emph{all} lower-energy neighbors. To correct for this entropic effect, we instead consider the random walk on the hypercube with transition probabilities given by the local Gibbs probabilities. Namely, we consider the discrete-time Markov chain $\{S_i^\beta\}_{i\ge 0}$ with transition probability from $W$ to $U$ given by
\begin{align}
    P(W,U) =  \begin{cases}\frac{e^{ - \beta H(U)}}{Z(W)} & U\sim W \\ 0 & \text{else}
       \end{cases}\, ,\label{glauber}
    \end{align}
where $Z(W)=\sum_{U'\sim W}e^{ - \beta H(U')}$.
The zero-temperature ($\beta \to \infty$) limit is indeed the gradient descent chain for $H(U)$. This Markov chain is itself a Gibbs sampler as it is easily checked to be reversible with respect to the tilted Gibbs measure $\nu(U)\propto  e^{ - \beta H(U) + \log Z(U)}$.  When $\beta$ is large, $\nu(U)$ is concentrated on $\PC$, 
and thus this is also a reasonable approach to finding the planted clique.  Intuitively, one can view this new measure $\nu$ as the Gibbs measure where the Hamiltonian is modified by the free energy of the 1-neighborhood of $U$. It could be interesting to compare the performance of this Gibbs sampler to the standard Glauber dynamics for $H$ in this and other discrete optimization problems.  

Our main result is that initialized from $S_0 = S_0^\beta = V$, the gradient descent, and the low-temperature (large $\beta$) Gibbs sampler, described above, both find the planted clique in $O(n)$ steps. 

\begin{Th}\label{mainthm:success}
Suppose $\gamma>3$. 
For every $\e>0$, there exists $C(\e,\gamma)>0$ such that for all $k\geq C\sqrt{n}$, with probability at least $1-\e$, the gradient descent $S_t$ initialized from $S_0 = [n]$ achieves 
\begin{align*}
S_t = \PC \qquad \text{ for all \,$t\ge n+2k$}\,.
\end{align*}
The same holds for the low-temperature chain $S_t^\beta$ for all $n+2k \le t \le n^{k/C}$ if $\beta \ge C \log n$.
\end{Th}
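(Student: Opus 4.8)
The plan is to work entirely from the exact formula for the discrete gradient of $H$. Writing $m_U(v)$ for the number of non-neighbours of $v$ inside $U$, removing $v\in U$ changes the energy by
\begin{align*}
    H(U\setminus v)-H(U)=(|U|-1)-(1+\gamma)\,m_U(v),
\end{align*}
while adding $w\notin U$ changes it by $(1+\gamma)\,m_U(w)-|U|$. Thus a removal lowers $H$ iff $m_U(v)>\tfrac{|U|-1}{1+\gamma}$, an addition lowers it iff $m_U(w)<\tfrac{|U|}{1+\gamma}$, and the gradient descent deletes the present vertex of largest internal non-degree or inserts the absent vertex of smallest internal non-degree, whichever drops the energy more. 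Since $\gamma>1$, a ``typical'' vertex (internal non-degree $\approx|U|/2$) is always worth deleting and never worth re-inserting, so the chain started at $S_0=[n]$ initially just deletes vertices. I would then record the high-probability regularity of $G\sim \mathsf G(n,\frac12,k)$ that is needed: for clique vertices $v$ and all relevant $A$, $m_A(v)$ is concentrated near $\tfrac12|A\setminus\PC|$, and for non-clique $w$ near $\tfrac12|A|$, with Chernoff-scale errors; and, via a quantitative strengthening of Theorem~\ref{thm:global optimum} (cf.\ Section~\ref{sec:energy}), that w.h.p.\ $\PC$ is the \emph{only} local minimum of size $\gg\log n$, since a local minimum has every internal vertex adjacent to a $\tfrac{\gamma}{1+\gamma}>\tfrac34$ fraction of itself and every external vertex missing a $\tfrac1{1+\gamma}$ fraction, which first moments rule out for every set $\neq\PC$ of size $\ge C'\log n$.

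The main claim about the trajectory is that, w.h.p., as long as $\PC\subseteq S_t$ and $S_t\neq\PC$, the largest internal non-degree among the non-clique vertices of $S_t$ strictly exceeds the largest internal non-degree among its clique vertices; the former is $\approx|S_t|/2$ (over $|S_t\setminus\PC|$ candidates) and the latter $\approx|S_t\setminus\PC|/2$ (over only $k$ candidates), so their means differ by $\approx k/2$. Granting this, each step deletes a non-clique vertex, so $S_t=\PC$ after exactly $n-k$ steps and the chain halts there, giving $S_t=\PC$ for all $t\ge n-k$. The robust version of the argument---which I expect is what is really needed at $k=\Theta(\sqrt n)$---permits the rare event that a clique vertex momentarily has the largest non-degree and is deleted; here one shows this occurs for at most $O(k)$ vertices overall (a clique vertex deleted while $j:=|S_t\setminus\PC|$ non-clique vertices remain has $\gtrsim(k+j)/2$ internal non-neighbours, hence is re-inserted exactly once $j$ falls below $\approx k$, and, w.h.p., is not deleted again), costing $\le 2$ extra moves each, which yields the stated bound $n+2k$; together with uniqueness of the large local minimum this gives $S_t=\PC$ for all $t\ge n+2k$.

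The hard part is precisely the comparison of these extreme non-degrees. With $k=\Theta(\sqrt n)$ the margin is only a \emph{bounded} number of single-vertex standard deviations, so a naive union bound over the $\Theta(n)$ steps diverges; two structural features must be used instead. First, the trajectory is nested, so only $O(n)$ subsets are ever tested, and the surviving non-clique vertices are \emph{negatively} biased towards having many clique non-neighbours---a non-clique vertex with many clique non-neighbours has larger non-degree and is deleted earlier---so that, writing $N^c(v)$ for the non-neighbourhood of $v$ in $[n]$, $m_{S_t}(v)=|N^c(v)\cap(S_t\setminus\PC)|$ is stochastically smaller than $\tfrac12|S_t\setminus\PC|$ for a clique vertex $v$. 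Second, a clique vertex $v$ can be deleted at all only if $|N^c(v)\cap([n]\setminus\PC)|$ is anomalously large, so one controls the total number of such deletions by a first-moment bound on the number of ``at-risk'' clique vertices rather than a step-by-step union bound. Marrying the exact gradient identity to these two inputs is the technical heart of the proof.

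For the low-temperature chain I would couple $S_t^\beta$ to the gradient descent directly. At a state $W$, the kernel~\eqref{glauber} assigns all but at most $(n+1)e^{-\beta\delta(W)}$ of its mass to $\argmin\{H(U):U\sim W\}$, where $\delta(W)>0$ is the energy gap between the best and second-best neighbours; conditioned on landing in that set it is uniform there (as $H$ is constant on it), i.e.\ it makes exactly a gradient-descent step. Along the descent $\delta(S_t)$ is bounded below by a positive constant: during the deletion phase competing moves differ by multiples of $1+\gamma>4$, and in the short clean-up phase the only near-tie, between deleting a non-clique vertex and re-inserting a previously deleted clique vertex, is opened up to $\Omega((\gamma-3)\,k)$ by the hypothesis $\gamma>3$ (this is where $\gamma>3$, rather than $\gamma>1$, enters). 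Hence for $\beta\ge C\log n$ the coupling survives the first $n+2k=O(n)$ steps with probability $1-O(n^2)e^{-\Omega(\beta)}=1-o(1)$, so $S^\beta_{n+2k}=\PC$. Finally, every neighbour of $\PC$ has energy larger by $\Omega(k)$ (deleting a clique vertex costs $k-1$; inserting a non-clique vertex costs $\ge\tfrac{\gamma-1}{2}k-o(k)$), so $P(\PC,\PC)\ge 1-(n+1)e^{-\Omega(\beta k)}$, and a union bound over $t\le n^{k/C}$ keeps $S_t^\beta=\PC$ throughout once $C$ is large, completing the argument.
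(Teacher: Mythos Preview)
Your skeleton is right—a deletion phase followed by a short clean-up—and you correctly flag that the hard step is the extreme non-degree comparison at a margin of only $\Theta(\sqrt n)$. But there is a concrete error in where $\gamma>3$ is used, and the adaptivity control is not actually carried out.

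The hypothesis $\gamma>3$ is \emph{not} needed in the clean-up phase; it is what makes the deletion phase a pure peeling process. In Lemma~\ref{lemma:monotone} one must show that while $|S_t\cap[k]^c|\ge c_0\log n$, the best removal beats not only $H(S_t)$ but \emph{every} addition, so that gradient descent coincides with the peeling process $Y_t$. The best removal drops $H$ by about $[\gamma-\tfrac{1+\gamma}{2}](|S_t|-1)$, whereas the best conceivable addition (a vertex adjacent to all of $S_t$) would drop it by $|S_t|$; the former wins exactly when $\tfrac{\gamma+1}{2(\gamma-1)}<1$, i.e.\ $\gamma>3$. Without this you cannot transfer any analysis of $Y_t$ to $S_t$. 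The clean-up phase (Lemma~\ref{lemma:never}) only needs $\gamma>1$, and the relevant energy gap there is $\ge 1$ for reasons unrelated to $\gamma-3$.

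On adaptivity, your two ``structural features'' do not add up to an argument. The claim that a clique vertex $v$ can be deleted only if $|N^c(v)\cap([n]\setminus\PC)|$ is globally anomalous is false at this scale: since $Y_t$ is chosen adaptively, a clique vertex with perfectly typical global non-degree can still have atypically many non-neighbours \emph{inside} $Y_t$. The paper's fix (Lemma~\ref{lemma:A}, following Feige--Ron) is an edge-exposure coupling: for fixed $x\in\PC$, reveal the edge $(x,z)$ only when a non-clique $z$ becomes the tentative minimum-degree candidate; this couples $\deg(x,\,Y_t^c\cap[k]^c)$ below a genuine $\mathrm{Bin}(\cdot,\tfrac12)$ process, and the Optional Stopping Theorem plus Doob's maximal inequality give a uniform-in-$t$ deviation of $O(\sqrt n)$ for each $x$. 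Only then does a first-moment (Markov) bound over $x\in[k]$ show that some set $\mathcal A$ of $(1-\eta)k$ clique vertices is never removed. The paper's phase split is correspondingly cleaner than tracking individual re-insertions: peel until $|S_t\cap[k]^c|\le c_0\log n$ (giving $\tau_0\le n$), then note that from any state with $|S\cap[k]|\ge(1-\eta)k$ and $|S\cap[k]^c|\le c_0\log n$ the Hamming distance to $\PC$ is strictly decreasing (Lemma~\ref{lemma:never}), costing at most $2k$ further steps—hence the $n+2k$.
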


\begin{Rem}
    We are not careful here about the constant $C$ and its dependence on $\e$. Note that one could use 
    a boosting scheme of~\cite{alon1998finding} which takes any algorithm that works at $k \ge C\sqrt{n}$ to one that work when $k\ge \frac{C}{\sqrt 2}\sqrt{n}$ by paying a $\sqrt{n}$ in the running time.
\end{Rem}

The mechanism behind the success proved by Theorem~\ref{mainthm:success} when $S_0 = V$, is roughly described as follows: it initially peels off vertices of the lowest degree within $S_t$, then after some time, vertex additions may be possible. At that point, it typically oscillates in size, adding $\PC$ vertices while removing extraneous non-$\PC$ vertices, before eventually converging to the full $\PC$. 
Analysis of the landscape of $H$~\eqref{hamiltonian} reveals that the beginning and end of this trajectory resemble the two-stage algorithm of Feige and Ron~\cite{feige2010finding}. We emphasize, however, that in the middle the trajectories will differ somewhat, and that this is produced by a black box constrained optimization approach.

We remark that while Theorem~\ref{mainthm:success} only guarantees the success of the positive temperature chain for $\beta = \Omega( \log n)$, we expect that at $\beta$ sufficiently large, but $O(1)$, the same success should hold. The proof uses stochastic domination for vertex degrees, which seems too fragile to handle the situation in which $\beta = O(1)$, and non-energy-minimizing moves are taken with uniformly positive probability.  

While tools like spectral gaps, mixing times, and overlap gap properties are very useful in proving refutation results for Markov chains with worst-case initializations, Theorem~\ref{mainthm:success} demonstrates the power of a well-chosen (though still completely uninformative) initialization to help a Markov chain succeed at sampling and optimization when worst-case mixing times are slow (e.g., due to the presence of local minima in $H$ with zero overlap with $\PC$, per Theorem~\ref{prop:many-local-minimum}). 
The success of gradient descent initialized uninformatively, even when worst-case initialization fails, was for instance also leveraged in MCMC analysis of the tensor PCA problem~\cite{MontanariRichard} in~\cite{BGJ20}.

\subsubsection{The importance of the initialization}
It is natural to wonder if these Markov chains are successful from every $G$-independent initialization, or if there are bottlenecks in the space that the full initialization is circumventing. It turns out that when initialized from $S_0 = \emptyset$, they face the same obstruction that Metropolis processes on cliques do, even well beyond the $\sqrt{n}$ threshold. 

\begin{Th}\label{mainthm:failure}
If $\gamma>1$ and $k \le n^\alpha$ for some $\alpha< 1$, with probability $1-o(1)$, the gradient descent $S_t$ initialized from $S_0 = \emptyset$ fails to find $\PC$ in the planted clique model $\G(n, \frac{1}{2}, k)$, i.e., it absorbs in $\text{polylog}(n)$ steps into a configuration $S_\infty$ of size $O(\log n)$ with no intersection with $\PC$.
\end{Th}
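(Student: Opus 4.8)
\emph{Proof proposal.}
I would track the trajectory $\{S_t\}$ started at $S_0=\emptyset$ directly. Writing $H(U)=\gamma\binom{|U|}{2}-(1+\gamma)|E(U)|$ and letting $d_U(u)$ denote the number of neighbours of $u$ inside $U$, a one-line computation shows that the only $H$-decreasing moves out of $U$ are: adding $u\notin U$ with $d_U(u)>\tfrac{\gamma}{\gamma+1}|U|$ (decreasing $H$ by $(1+\gamma)d_U(u)-\gamma|U|$), and deleting $v\in U$ with $d_U(v)<\tfrac{\gamma}{\gamma+1}(|U|-1)$ (decreasing $H$ by $\gamma(|U|-1)-(1+\gamma)d_U(v)$). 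Hence the argmin move adds the vertex of largest $d_U(\cdot)$ outside $U$ or deletes the vertex of smallest $d_U(\cdot)$ in $U$, whichever yields the larger positive gain (with uniform tie-breaking), and $H$ strictly decreases at every step (after an initial move to a uniformly random singleton, $\emptyset$ being a flat point of $H$), so the chain is eventually absorbed at a local minimum. Since $H(S_0)=0$ we get $H(S_t)\le 0$ for every $t$, so \emph{every} $S_t$ has edge density $\ge\tfrac{\gamma}{\gamma+1}>\tfrac12$. On the other hand, because $\gamma>1$ one can fix constants $C_1=C_1(\gamma)$ and $\eta=\eta(\gamma)>0$ such that a first-moment bound over subsets — the edges forced by $j\le\eta s$ planted vertices number at most $\eta^2\binom{s}{2}$, and $\tfrac{\gamma}{\gamma+1}-\eta^2>\tfrac12$, so each subset of size $s$ fails with probability $e^{-\Omega_\gamma(s^2)}$, which beats the $\le n^{(1+\eta)s}$ choices — shows that with probability $1-o(1)$ over $G$, no $U\subseteq[n]$ with $|U|\ge C_1\log n$ and $|U\cap\PC|\le\eta|U|$ has edge density $\ge\tfrac{\gamma}{\gamma+1}$. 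Consequently it suffices to prove that, with probability $1-o(1)$, $S_t\cap\PC=\emptyset$ for all $t$: this forces $|S_t|\le C_1\log n$ throughout, and since $H$ is a function of $(|S_t|,|E(S_t)|)$ and strictly decreases, the chain is then absorbed within $O((C_1\log n)^3)=\mathrm{polylog}(n)$ steps at a set $S_\infty$ of size $O(\log n)$ disjoint from $\PC$.

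To control $S_t\cap\PC$, I would couple $\{S_t\}$ with the analogous greedy descent $\{\widetilde S_t\}$ run inside the induced subgraph $G[[n]\setminus\PC]\sim\G(n-k,\tfrac12)$. Applying the density bound of the previous paragraph to $\{\widetilde S_t\}$ — where there is no planted clique to worry about — shows $\{\widetilde S_t\}$ stays of size $\le C_1\log n$ and is absorbed after $\widetilde T=\mathrm{polylog}(n)$ steps, staying inside $[n]\setminus\PC$. The two chains agree until the first step $t^\ast$ at which the argmin move of $\{S_t\}$ prefers a planted vertex (possibly $t^\ast=1$, if the initial random singleton is planted); if $t^\ast>\widetilde T$ then $\{S_t\}$ has coincided with an already-absorbed plant-free chain and we are done, so it remains to show $\mathbb{P}(t^\ast\le\widetilde T)=o(1)$.

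For this I would reveal the $\PC$-to-$([n]\setminus\PC)$ edges only \emph{after} running $\{\widetilde S_t\}$. Conditionally on $G[[n]\setminus\PC]$ the sets $\widetilde S_t$ and the quantities $M_t:=\max_{u\notin\PC\cup\widetilde S_t}d_{\widetilde S_t}(u)$ are fixed, while for each $u\in\PC$ the connections of $u$ into $\widetilde S_t$ are i.i.d.\ fair coins, so $d_{\widetilde S_t}(u)\sim\mathrm{Bin}(|\widetilde S_t|,\tfrac12)$ independently over $u\in\PC$ and independently of the plant-free trajectory. A short inspection of the move rule shows that step $t$ of $\{S_t\}$ adds a planted vertex only if some $u\in\PC$ has $d_{\widetilde S_t}(u)\ge M_t$ and $d_{\widetilde S_t}(u)>\tfrac{\gamma}{\gamma+1}|\widetilde S_t|$, and, when $d_{\widetilde S_t}(u)=M_t$, only if the uniform tie-break selects a planted maximizer. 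One then bounds the per-step probability of this by $O(k/n)$: while $|\widetilde S_t|$ is comfortably below $\log_2 n$ the chain is extending a clique, $M_t=|\widetilde S_t|$, and w.h.p.\ (uniformly over the reachable cliques) there are $\Theta(n\,2^{-|\widetilde S_t|})$ non-planted vertices joined to all of $\widetilde S_t$ against at most $\mathrm{Bin}(k,2^{-|\widetilde S_t|})$ planted ones, so the tie-break picks a planted vertex with probability $O(k/n)$; near $|\widetilde S_t|\approx\log_2 n$ the expected number of planted vertices matching the maximum degree is already $k\,2^{-|\widetilde S_t|+O(1)}=n^{\alpha-1+o(1)}=o(1)$; and once $|\widetilde S_t|\ge C_2(\gamma)\log n$ one has $\mathbb{P}(\mathrm{Bin}(|\widetilde S_t|,\tfrac12)>\tfrac{\gamma}{\gamma+1}|\widetilde S_t|)\le n^{-2}$, so a crude union bound over the $k$ planted vertices is $o(1)$ outright. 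Summing the per-step bounds over the $\widetilde T=\mathrm{polylog}(n)$ steps and using $k\le n^\alpha$ with $\alpha<1$ gives $\mathbb{P}(t^\ast\le\widetilde T)=O(\mathrm{polylog}(n)\cdot n^{\alpha-1})=o(1)$, which together with the first paragraph completes the proof.

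The main obstacle is this last step. The filtration naturally generated by $\{S_t\}$ reveals \emph{all} edges incident to $S_t$, which destroys the freshness of the planted vertices' degrees; one has to fix the edge-revelation order so that planted vertices genuinely remain statistically interchangeable with a linear-sized reservoir of non-planted vertices, and carry this bookkeeping along the whole polylog-length trajectory. A second delicate point, entering for $\gamma$ close to $1$, is the intermediate regime $\log_2 n\lesssim|\widetilde S_t|\lesssim C_2(\gamma)\log n$, where the near-clique keeps growing, $M_t$ is strictly below $|\widetilde S_t|$, and the number of degree-maximizers may be only $O(1)$: there one must control the ``maximum degree into $\widetilde S_t$'' statistic of the evolving near-clique to rule out a planted vertex sneaking in. By contrast, the structural ``no large dense subgraph off $\PC$'' estimate and the analysis of the plant-free greedy descent are routine first- and second-moment arguments.
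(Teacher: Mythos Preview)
Your overall architecture---couple with an unplanted model, show the reference chain absorbs fast at a small state, and show the planted chain never touches $\PC$ before that happens---matches the paper's. But your choice of reference model is the crucial point of departure, and it is what leaves you with the obstacle you flag at the end.

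You couple $S_t$ with a descent $\widetilde S_t$ on $G[[n]\setminus\PC]$, i.e.\ you \emph{remove} the planted vertices from the reference model. The paper instead couples with a descent $\widetilde S_t$ on the full $n$-vertex Erd\H{o}s--R\'enyi graph $G_0$, where $G$ is obtained from $G_0$ by adding the missing internal $\PC$ edges. As long as $\widetilde S_t\cap[k]=\emptyset$, all energies visible to the two chains (those of $\widetilde S_t$ and of its Hamming-$1$ neighbours) coincide, since the only edge differences are internal to $[k]$. Now in $G_0$ the set $[k]$ is a completely arbitrary $k$-subset of an exchangeable model: by symmetry, at any step where $\widetilde S_t$ adds a vertex, the probability that this vertex lies in $[k]$ is at most $k/|\widetilde S_t^c|\le k/(n-L)$. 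A union bound over $L=\text{polylog}(n)$ steps gives $\P(\tau\le L)=O(k\,\text{polylog}(n)/n)=o(1)$. This one-line exchangeability argument replaces your entire third paragraph and eliminates the intermediate regime $\log_2 n\lesssim |\widetilde S_t|\lesssim C_2(\gamma)\log n$ where you would otherwise need to compare the max of $k$ fresh binomials to the (adaptively conditioned, hence non-fresh) max degree $M_t$ among non-planted externals. That comparison is genuinely awkward---for $\alpha$ close to $1$ and $\gamma$ close to $1$ the Chernoff bound on $\mathrm{Bin}(|\widetilde S_t|,\tfrac12)>\kappa|\widetilde S_t|$ is not small enough to beat $k$, and you have no clean handle on $M_t$---so your acknowledged obstacle is real, not merely bookkeeping.

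On the other hand, your absorption-time argument is slicker than the paper's. You observe that $H(U)$ depends only on $(|U|,|E(U)|)$, and once $|\widetilde S_t|\le C_1\log n$ this pair takes $O((\log n)^3)$ values; since $H$ strictly decreases before absorption, $\widetilde T=O((\log n)^3)$. The paper instead proves a lower bound of $\tfrac12$ on the energy drop in at least half the steps via a delicate number-theoretic argument on rational $\gamma$ (then extended to irrational $\gamma$ by continuity). Your counting argument avoids all of that.

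In short: keep your absorption-time idea, but swap your reference model for the full $n$-vertex $G_0$ so that exchangeability does the coupling step for free.
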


\begin{figure}[t]
 \centering 
\begin{subfigure}{0.4\textwidth}{\includegraphics[width=\linewidth, trim={0cm 0.0cm 0cm 0cm},clip]{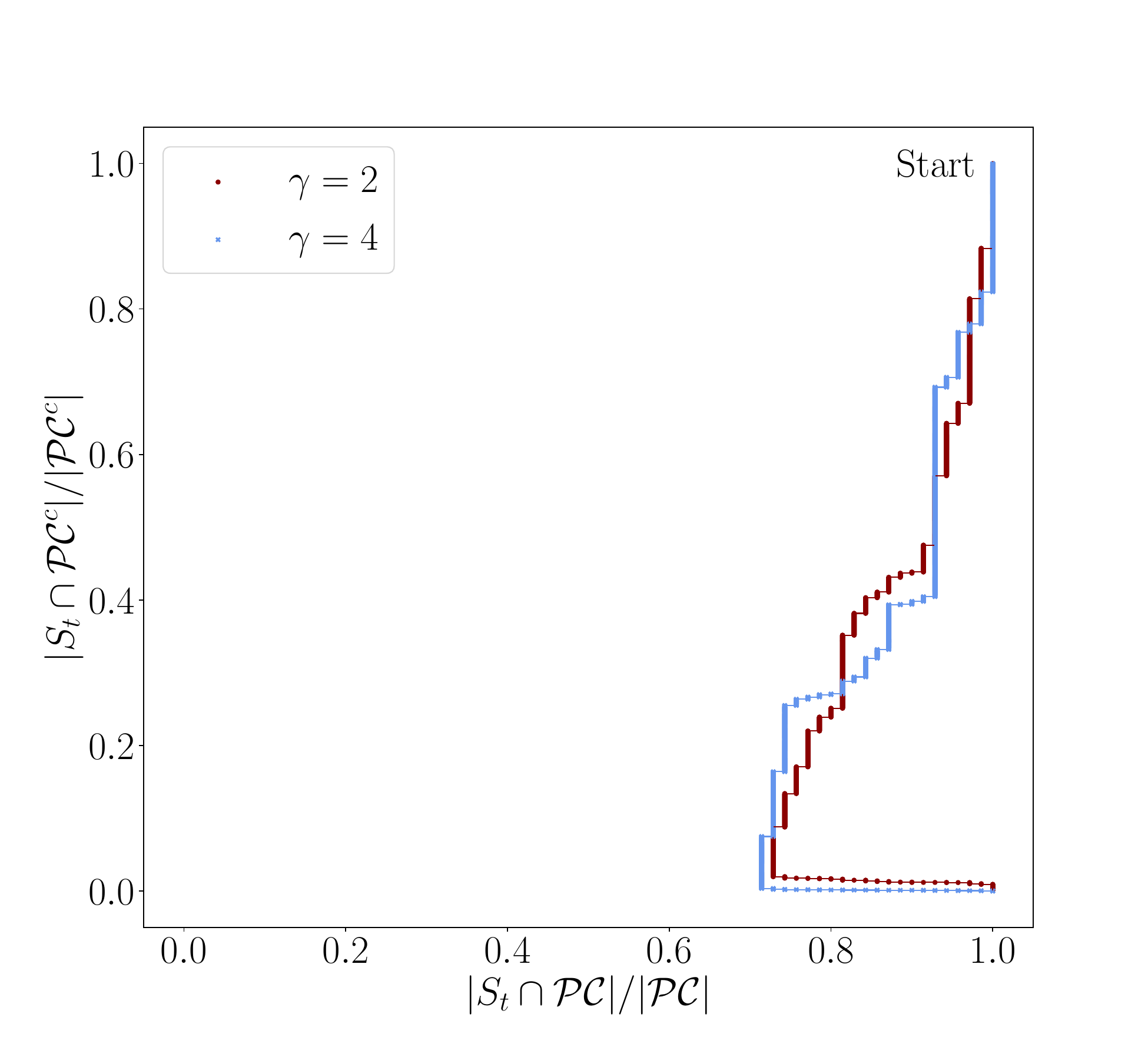}}\label{11:(d)}
\end{subfigure} \hspace{1 cm}
\begin{subfigure}{0.4\textwidth}{\includegraphics[width=\linewidth, trim={0cm 0.0cm 0cm 0cm},clip]{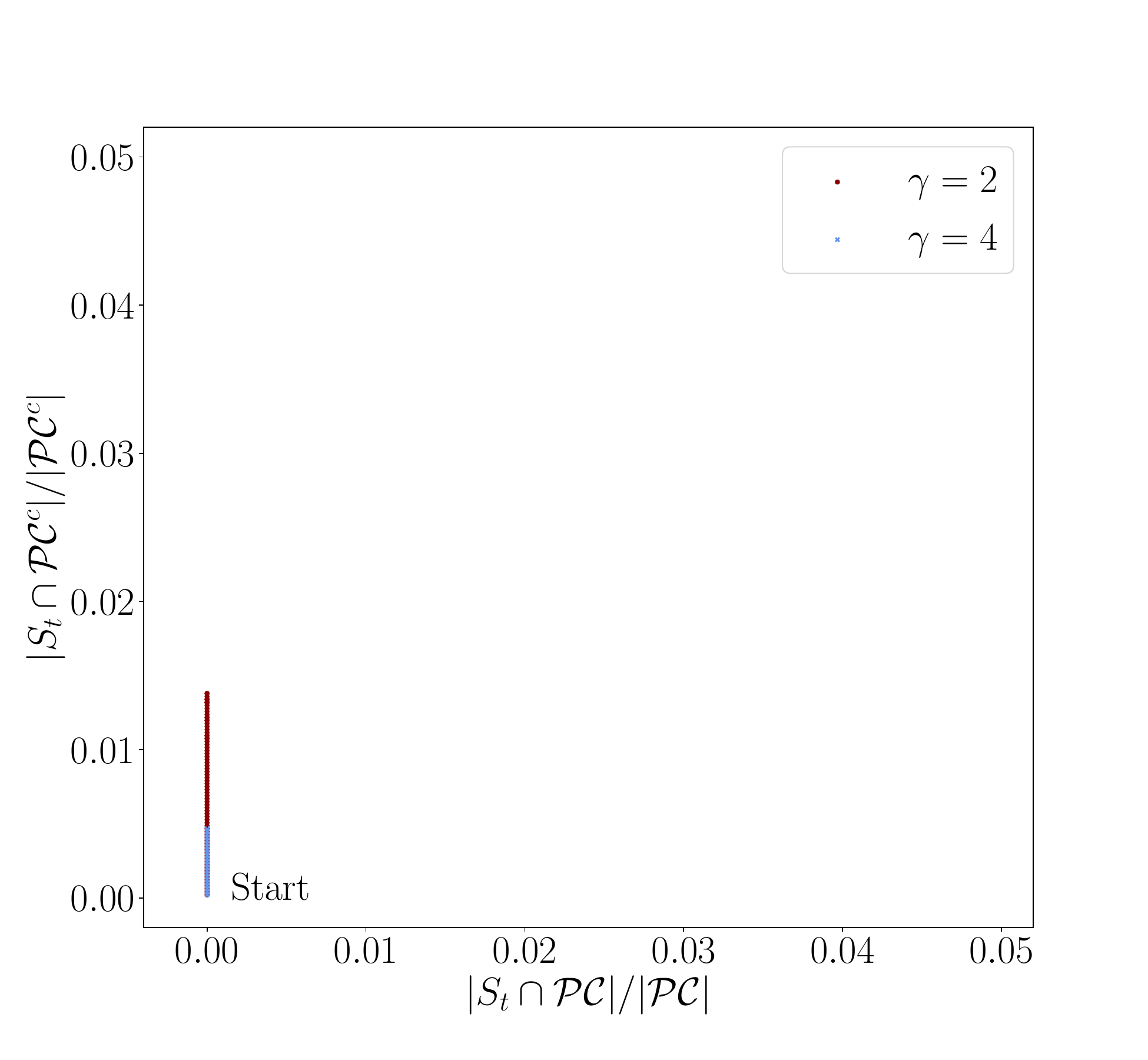}}\label{11:(s)}
\end{subfigure} 
\caption{Simulated trajectories of the relative ratio of $\PC$ and non-$\PC$ vertices in $S_t$ while applying the gradient descent with full-graph initialization (left) and empty-set initialization (right) to find the planted clique in $\G(5000, \frac{1}{2}, 70)$ under different values of $\gamma$. The $x$-axis is the overlap with $\PC$ and the $y$-axis is the overlap with $\PC^c$.
Both trajectories in the left plot start from $(1, 1)$ and terminate in $\PC$, whereas both trajectories in the right plot start from $(0, 0)$ and stop at local minima that do not have any overlap with $\PC$. 
}\label{fig:0}
\end{figure}

Theorem~\ref{mainthm:failure} demonstrates that not only do these chains fail from worst-case initialization, they even fail from the natural uninformative initialization of $S_0 = \emptyset$. Importantly, however, with the state space relaxed to all subsets, $S_0 = V$ is an alternative natural uninformative initialization that does not face this bottleneck. The source of this discrepancy can roughly be described as follows: on the way up from $\emptyset$, the chain doesn't behave that differently from one that moves purely on cliques, and it therefore gets trapped by the large entropy of $\Theta(\log n)$ (near-)cliques that are disjoint from $\PC$.  Conversely, from $S_0  = V$ it can descend straight to $\PC$ without facing any entropy-induced bottleneck: i.e., the algorithm is starting on the ``right side'' of the overlap gap. This dichotomy of behaviors depending on the initialization is shown in Figure~\ref{fig:0}. 

To make this into a proof, we show that the gradient descent initialized from $S_0 = \emptyset$ performed on $G$ is coupled with a gradient descent process on \ER $G_0$ (before the edges of $\PC$ were forced to be included). This latter process absorbs quickly into one of the many near-clique local minima of $H$, with zero overlap with $\PC$. This approach for proving failure by coupling the dynamics with one in an unplanted (zero-signal) model may be of independent interest.

\subsubsection{Robustness to adversary}
Let us end by briefly discussing the robustness of the gradient descent and positive temperature chain to adversarial planting. Robustness to an adversary tweaking non-$\PC$ edges has been studied extensively for the planted-clique model since it has been suggested that the fact that the highest degree vertices recover $\PC$ nearly down to the predicted algorithmic bound makes the planted clique problem less realistic than its robust versions. Some of the different robust variants that have been introduced include the monotone adversaries model of~\cite{BlumSpencer} and the semi-random model of~\cite{Feige-Semirandom}; information-theoretic and algorithmic thresholds for semi-random graph problems have seen much attention, e.g.,~\cite{CojaOghlan-semirandom,steinhardt2018does,Charikar-semirandom,McKenzie-et-al-semi-random,Buhai-Kothari-Steurer}. 

We consider a weaker form of robustness, but one strong enough that the adversary can still change the set of high-degree vertices to differ significantly from $\PC$, for instance. Namely, we allow an adversary to change the edge-probability $1/2$ to $q>1/2$ for some $m=O(n^{3/4-o(1)})$  number of the vertices in $G$. Note that with this type of modification, the largest clique in $G$ will remain $\PC$ when $\PC\gg \log n$, but for instance, the highest degree vertices will contain those modified by the adversary if $q>1/2$, rather than being $\PC$, even well into the algorithmically tractable regime.

In Section~\ref{sec:rob}, we prove that the gradient descent and positive temperature chain for the Hamiltonian~\eqref{myH} are robust to such adversaries: see Theorem~\ref{lemma:n1+v}. Namely, they will still recover $\PC$ in linearly many steps. It would be of interest to explore MCMC approaches to the semi-random graph problems in the abovementioned literature, as well as on the hosts of other combinatorial optimization tasks, where problem-specific algorithms are well-understood but MCMC from natural uninformative initializations are not.

\section{The energy landscape}\label{sec:energy}
In this section, we collect results about the energy landscape of~\eqref{myH}. Our first result is that~$\PC$ is the global minimizer of~\eqref{myH}; this will follow from a concentration estimate on the degree counts between subsets $U$ and their intersections with $\PC^c$---Lemma~\ref{lemma:000}.

\begin{Th}[Energy landscape: global minimum]\label{thm:global optimum}
For any $\gamma>1$ and $0<\alpha\leq 1$, if $k \geq n^\alpha$, then with probability $1-o(1)$, $\argmin_{U\subseteq [n]} H_{G,\gamma}(U) = \PC$.
\end{Th}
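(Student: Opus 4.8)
The plan is to show that with probability $1-o(1)$, $H_{G,\gamma}(U) > H_{G,\gamma}(\PC)$ \emph{strictly} for every $U\ne\PC$, and then conclude by a union bound. The starting point is an exact formula. Writing $A = U\cap\PC$ and $B = U\cap\PC^c$, with $a=|A|$ and $b=|B|$: every internal edge of $A$ is present since $A\subseteq\PC$, so a direct count of edges and missing edges of $U$ yields
\[
H_{G,\gamma}(U) - H_{G,\gamma}(\PC) \;=\; \frac{(k-a)(k+a-1)}{2} \;+\; \gamma N \;-\; (1+\gamma)\,e_B,
\]
where $N := \binom b2 + ab$ is the number of vertex pairs inside $U$ with at least one endpoint in $B$, and $e_B$ is the number of edges of $G$ among those $N$ pairs; in particular $H_{G,\gamma}(\PC)=-\binom k2$. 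The key structural point is that none of these $N$ pairs is a $\PC$-internal pair, so the corresponding edge indicators are i.i.d.\ $\mathrm{Bernoulli}(1/2)$, i.e.\ $e_B\sim\mathrm{Bin}(N,1/2)$ with $\E e_B = N/2$.

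Consequently it suffices to show that, simultaneously for all $U\ne\PC$,
\[
e_B \;<\; \frac N2 + \Delta(U), \qquad \Delta(U) := \frac{(k-a)(k+a-1) + (\gamma-1)N}{2(1+\gamma)} \;\ge\; 0,
\]
where $\gamma>1$ guarantees $\Delta(U)\ge0$, with equality only if $N=0$ and $a=k$, i.e.\ $U=\PC$. When $N=0$ (which happens exactly when $U\subseteq\PC$ or $U$ is a single non-$\PC$ vertex) one has $e_B=0$ and the bound holds deterministically, since $\Delta(U)=\frac{(k-a)(k+a-1)}{2(1+\gamma)}>0$ for $a<k$. When $N\ge1$ we apply Hoeffding's inequality, $\P\big(e_B \ge \tfrac N2+\Delta(U)\big)\le\exp\!\big(-2\Delta(U)^2/N\big)$, and union bound: grouping the subsets by $(a,b)$ there are $\binom ka\binom{n-k}b$ of them, so the probability that some $U\ne\PC$ violates the inequality is at most $\sum_{(a,b)\ne(k,0)}\binom ka\binom{n-k}b\exp(-2\Delta^2/N)$. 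Controlling $e_B$ uniformly over $U$ in this way is precisely the content of the concentration estimate Lemma~\ref{lemma:000}.

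The heart of the argument is to check that the Hoeffding exponent beats the entropy of the union bound. Writing $L(a,b):=\log\binom ka+\log\binom{n-k}b\le\big(\min(a,k-a)+\min(b,n-k-b)\big)\log n$, it is enough to have $2\Delta^2/N\ge L(a,b)+4\log n$ for all admissible $(a,b)$ with $N\ge1$, since then the sum is at most $n^2\,n^{-4}=o(1)$. From the two one-sided bounds $\Delta\ge\frac{(\gamma-1)N}{2(1+\gamma)}$ and $\Delta\ge\frac{(k-a)(k+a-1)}{2(1+\gamma)}$ one obtains $\Delta^2/N\ge c_\gamma\big(N + (k-a)^2(k+a-1)^2/N\big)$: the term $c_\gamma N$ dominates $L(a,b)$ whenever $b$ or $N$ is large (using $N\ge\binom b2$), while the second term, together with $k+a-1\ge k-1$ and the hypothesis $k\ge n^\alpha\gg\log n$, dominates when there are few extraneous and few missing vertices. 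I expect the main obstacle to be the bookkeeping at the two ``corners'': $b=0$ (proper subsets of $\PC$), already dispatched deterministically; and $a=k$ (proper supersets $\PC\subsetneq U$), where the only freedom is the choice of $B$ and one must notice that $N\ge kb\ge n^\alpha$ makes the Hoeffding bound overwhelmingly strong. Away from these corners $k\ge n^\alpha$ leaves a large margin, and assembling the cases gives $o(1)$ total failure probability, so $\argmin_{U\subseteq[n]}H_{G,\gamma}(U)=\PC$ with probability $1-o(1)$.
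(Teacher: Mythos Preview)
Your approach is correct and genuinely different from the paper's. You compute $H(U)-H(\PC)$ explicitly in terms of $(a,b)=(|U\cap\PC|,|U\cap\PC^c|)$, note that the only randomness is the binomial $e_B\sim\mathrm{Bin}(N,\tfrac12)$, and close with Hoeffding plus a union bound over the $\binom{k}{a}\binom{n-k}{b}$ sets of each type. The paper instead argues structurally: using the degree-concentration Lemma~\ref{lemma:000} it shows that whenever $|U\cap\PC^c|>c\log n$ some vertex can be removed to lower $H$, and whenever $|U\cap\PC|>\gamma|U\cap\PC^c|$ a $\PC$-vertex can be added to lower $H$; this funnels the global minimizer into $\{U:[k]\subseteq U,\ |U\setminus[k]|\le c\log n\}$, where a final Hoeffding bound on $\max_{x\notin[k]}|E(x,[k])|$ forces $U=[k]$. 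Your route is more elementary and self-contained for this theorem; the paper's route yields more (no absorbing states in those regions, cf.\ Remark~\ref{rem:no-complexity-above}), which it later reuses for the gradient-descent analysis.

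Two small comments. First, your sentence ``Controlling $e_B$ uniformly over $U$ in this way is precisely the content of Lemma~\ref{lemma:000}'' is misleading: Lemma~\ref{lemma:000} concerns $\deg(U',U)$ and requires $|U'|\ge c\log n$, so it does not directly cover the small-$b$ cases you need; what you are really doing is an independent Chernoff/Hoeffding union bound in the same spirit. Second, in your exponent bookkeeping the key inequality is $(X+Y)^2\ge X^2+Y^2$ for $X=(k-a)(k+a-1)$, $Y=(\gamma-1)N\ge 0$, which gives $2\Delta^2/N\ge c_\gamma\big(N+X^2/N\big)\ge 2c_\gamma X$; from there the entropy is beaten because $X\ge (k-a)(k-1)$ dominates $\log\binom{k}{a}\le(k-a)\log(ek)$ once $k\gg\log n$, while either $c_\gamma N\ge c_\gamma\binom{b}{2}$ (for $b\gtrsim\log n$) or $c_\gamma X^2/N\gtrsim k^3/b$ (for $b\lesssim\log n$, using $N\le kb+b^2$) dominates $b\log n$. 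Making this two-case split explicit would turn your sketch into a complete proof.
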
 

The other important landscape result for~\eqref{myH} we can establish is that the landscape exhibits \emph{complexity} in the sense that it has $n^{O(\log n)}$ local minima that are subsets of size $O(\log n)$, and that are completely uncorrelated with the planted clique $\PC$. We say a subset $U$ is a (strict) local minimum of $H$ if for all $W\sim U: W\ne U$, we have $H(W)>H(U)$. 
Let $h(p) = -p\log_2 p -(1-p)\log_2(1-p)$ be the binary entropy function and $\kappa = \frac{\gamma}{1+\gamma}\in (\frac{1}{2}, 1)$.

\begin{Th}[Energy landscape: local minima]\label{prop:many-local-minimum}
If $\gamma>9$ (whence $h(\kappa)<\frac{1}{2}$), then for any $c\in(\frac{1}{1-h(\kappa)}, 2)$, with probability at least $1-o(1)$, there are at least $n^{(1-\frac{1}{2}c(1-h(\kappa)) +o(1))m}$ many local minimizers of $H_{G,\gamma}$ with size $m=c\log_2 n$ with empty intersection with $\PC$. 
\end{Th}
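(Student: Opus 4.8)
The plan is to produce these local minimizers directly and lower-bound their number by a second moment (Paley--Zygmund) argument, after first reducing ``being a strict local minimum of $H_{G,\gamma}$ contained in $\PC^c$'' to two \emph{independent} events about vertex degrees, and running a first moment.

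\textbf{Characterization and first moment.} Write $d_U(v) := |\{u \in U : \{u,v\}\in E\}|$. From~\eqref{myH}, for $|U| = m$ one computes $H(U\setminus\{v\}) - H(U) = (1+\gamma)d_U(v) - \gamma(m-1)$ for $v\in U$ and $H(U\cup\{v\}) - H(U) = \gamma m - (1+\gamma)d_U(v)$ for $v\notin U$, so a set $U\subseteq \PC^c$ with $|U| = m$ is a strict local minimum iff (a) $d_U(v) > \kappa(m-1)$ for all $v\in U$, and (b) $d_U(v) < \kappa m$ for all $v\notin U$. Since $U\subseteq\PC^c$, all edges touching $U$ are i.i.d.\ $\mathrm{Ber}(1/2)$; since (a) involves only the $\binom m2$ edges inside $U$ and (b) only the $m(n-m)$ edges leaving $U$, the two events are independent and their probabilities depend only on $m$. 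Let $X$ count such sets; then $\E[X] = \binom{n-k}{m}\P(\mathrm{(a)})\P(\mathrm{(b)})$. For (b), a union bound over the $n-m$ external vertices and $\P(\mathrm{Bin}(m,1/2) \ge \kappa m) = 2^{-(1-h(\kappa))m + o(m)} = n^{-c(1-h(\kappa)) + o(1)}$ give $\P(\mathrm{(b)}^c) \le n^{1 - c(1-h(\kappa)) + o(1)} = o(1)$, using $c > \tfrac{1}{1-h(\kappa)}$, so $\P(\mathrm{(b)}) = 1-o(1)$. For (a), the cheapest way for all $m$ internal degrees to exceed $\kappa m$ (well above their mean $\approx m/2$) is for $G[U]$ to sit essentially on the constraint surface, i.e.\ to be (near-)$\kappa m$-regular; exponential tilting of $G(m,\tfrac12)$ to edge-density $\kappa$ (or the asymptotic count of dense near-regular graphs) gives $\P(\mathrm{(a)}) = 2^{-(1-h(\kappa))\binom m2 + o(m^2)}$. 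As $m^2 = c\,m\log_2 n$, this is $n^{-\tfrac{c}{2} (1-h(\kappa))m + o(m)}$, and with $\binom{n-k}m = n^{(1-o(1))m}$ we obtain $\E[X] = n^{(1 - \tfrac12 c(1-h(\kappa)) + o(1))m}$, exactly the claimed count, which tends to $\infty$ since $c(1-h(\kappa)) < 2$. (The hypothesis $\gamma>9$, i.e.\ $h(\kappa)<\tfrac12$, is used only so that the window $(\tfrac{1}{1-h(\kappa)},2)$ for $c$ is nonempty.)

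\textbf{Second moment.} It suffices to show $\E[X^2] = (1+o(1))\E[X]^2$: then Paley--Zygmund gives $X \ge (1-o(1))\E[X]$ with probability $1-o(1)$. Write $\E[X^2] = \sum_{j=0}^m N_j p_j$ with $N_j$ the number of ordered pairs $(U,U')$ of such sets with $|U\cap U'| = j$ and $p_j = \P(U, U'\text{ both local minima})$. The term $j=m$ is $\E[X]$, negligible. The term $j=0$ is $(1-o(1))\E[X]^2$: for disjoint $U,U'$ the two events decouple apart from the $m^2$ edges between $U$ and $U'$ entering condition (b) for both, a discrepancy absorbed into $1+o(1)$ by a union bound since each (b)-event has probability $1-o(1)$, while $N_0 = (1-o(1))\binom{n-k}m^2$. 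For $1\le j\le m-1$: $N_j \le n^{-j(1-o(1))}\binom{n-k}m^2$, while the positive correlation between the two events is produced \emph{solely} by the $\binom j2$ edges shared inside $A:=U\cap U'$ (for $j\le1$ there are no shared edges, and the events are genuinely independent). Bounding this correlation by $2^{(1-\eta)\binom j2 + o(m^2)}$ for a constant $\eta=\eta(\gamma)>0$ when $j$ is small --- via conditioning on $G[A]$, using conditional independence of (a)-for-$U$ and (a)-for-$U'$ given $G[A]$, and estimating how much a favourable $G[A]$ can boost each --- and using the trivial bound $p_j\le\P(\text{local min})=2^{-(1-h(\kappa))\binom m2 + o(m^2)}$ when $j$ is of order $m$, one checks that $n^{-j}$ overcomes the correlation gain for \emph{all} $1\le j\le m-1$ precisely because $c<2$ (whence both $\tfrac{1-\eta}{2} j < \tfrac{m}{c}$ and $\tfrac{1-h(\kappa)}{2} m < \tfrac{m}{c}$ for $j\le m$). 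Hence $\sum_{j\ge1}N_jp_j = o(\E[X]^2)$, which finishes the second moment and hence the theorem.

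\textbf{Main obstacle.} The crux is the correlation estimate for intermediate overlaps: one must show that conditioning on (and weighting by) a favourable induced subgraph on $A = U\cap U'$ inflates each of the two conditional probabilities by at most roughly $2^{(1-\eta)\binom{j}{2}/2}$, with a large-deviation error term that is not merely $o(m^2)$ but effectively $o(jm)$ --- so that it remains harmless even at $j=1$, where the true gain is exactly $1$ --- and one must verify that the ``small-$j$'' and ``large-$j$'' bounds overlap, which comes down to $(1-\eta)(1-h(\kappa)) < 1$ combined with $c<2$. The residual dependence between the (a)- and (b)-type events for overlapping pairs (the edges from $U'\setminus U$ into $A$ appear both in ``(a)-for-$U'$'' and in ``(b)-for-$U$'') is minor bookkeeping, costing only $1+o(1)$ factors.
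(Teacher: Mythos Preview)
Your overall architecture (characterize local minima, first moment, second moment, Paley--Zygmund) is exactly the paper's, and your first-moment computation is correct and matches the paper's. The gap is in the second moment, and you essentially flag it yourself in the ``Main obstacle'' paragraph: you never actually establish the correlation bound you need. The stated bound $p_j \le p_0 \cdot 2^{(1-\eta)\binom{j}{2} + o(m^2)}$ is useless for small $j$ because the $o(m^2)$ error swamps everything (at $j=1$ it gives $p_1 \le p_0 \cdot n^{o(m)}$, while $N_1/N_0 \asymp m/n$), and you acknowledge that what is really needed is an $o(jm)$ error, without indicating how to get it. The two-regime patching (small $j$ via tilting, large $j$ via the trivial $p_j \le \P(\text{(a)})$) is also not shown to cover all $j$.

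The paper sidesteps all of this with two simplifications you are missing. First, it does \emph{not} run the second moment on the full local-minimum event: it shows separately (a one-line first-moment argument, your own $\P(\text{(b)}^c)=o(1)$ computation upgraded via Markov to ``(b) holds for \emph{every} size-$m$ set in $\PC^c$ simultaneously'') that condition~(b) is automatic whp, so it suffices to second-moment the count of sets satisfying~(a) alone. Second, and this is the key point, event~(a) is \emph{increasing} in the edge set; hence for overlapping $U,U'$ with $|U\cap U'|=j$, monotonicity gives $\E[Z(U')\mid E(U)] \le \E[Z(U')\mid \text{all $\binom{j}{2}$ shared edges present}]$, and then $\P(A\mid B)\le \P(A)/\P(B)$ yields the clean, error-free bound
\[
\E[Z(U)Z(U')] \le \E[Z(U)]\,\E[Z(U')]\cdot 2^{\binom{j}{2}}.
\]
Combined with $N_j/N_0 \asymp (m/n)^j$, the ratio is $2^{\,j(\log_2 m + (j-1)/2 - \log_2 n)}$, which is at most $n^{-(1-c/2)/2}$ for every $1\le j\le m$ precisely because $j\le m = c\log_2 n$ with $c<2$. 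No $\eta$, no splitting into regimes, no large-deviation error term to control. Your attempt to squeeze out a factor $(1-\eta)$ is both unnecessary and the source of the uncontrolled error.
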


Landscape complexity has been extensively studied and used as a heuristic explanation for hardness of optimization in the spin glass literature (see e.g., the works of~\cite{ABA13,ABC13}), and more recently for tensor PCA~\cite{BMMN17}, and even risk landscapes of generalized linear models~\cite{pmlr-Maillard}. Intuitively, the complexity of the landscape~\eqref{myH} at subsets of size $O(\log n)$ is consistent with our success and failure results. There is complexity in between the empty-set and the $\PC$, and the gradient descent fails from that initialization; on the other hand our proof of Theorem~\ref{thm:global optimum} demonstrates that there are no local minima of sizes larger than $O(\log n)$ except the planted clique itself, so there is no such complexity blocking descent from the full initialization: see Remark~\ref{rem:no-complexity-above}. See Figure~\ref{fig:1} for a visualization of the complexity landscape.

 \begin{figure}[t]
  \centering 
 \includegraphics[width=0.42\linewidth, trim={0cm 0cm 0cm 0cm},clip]{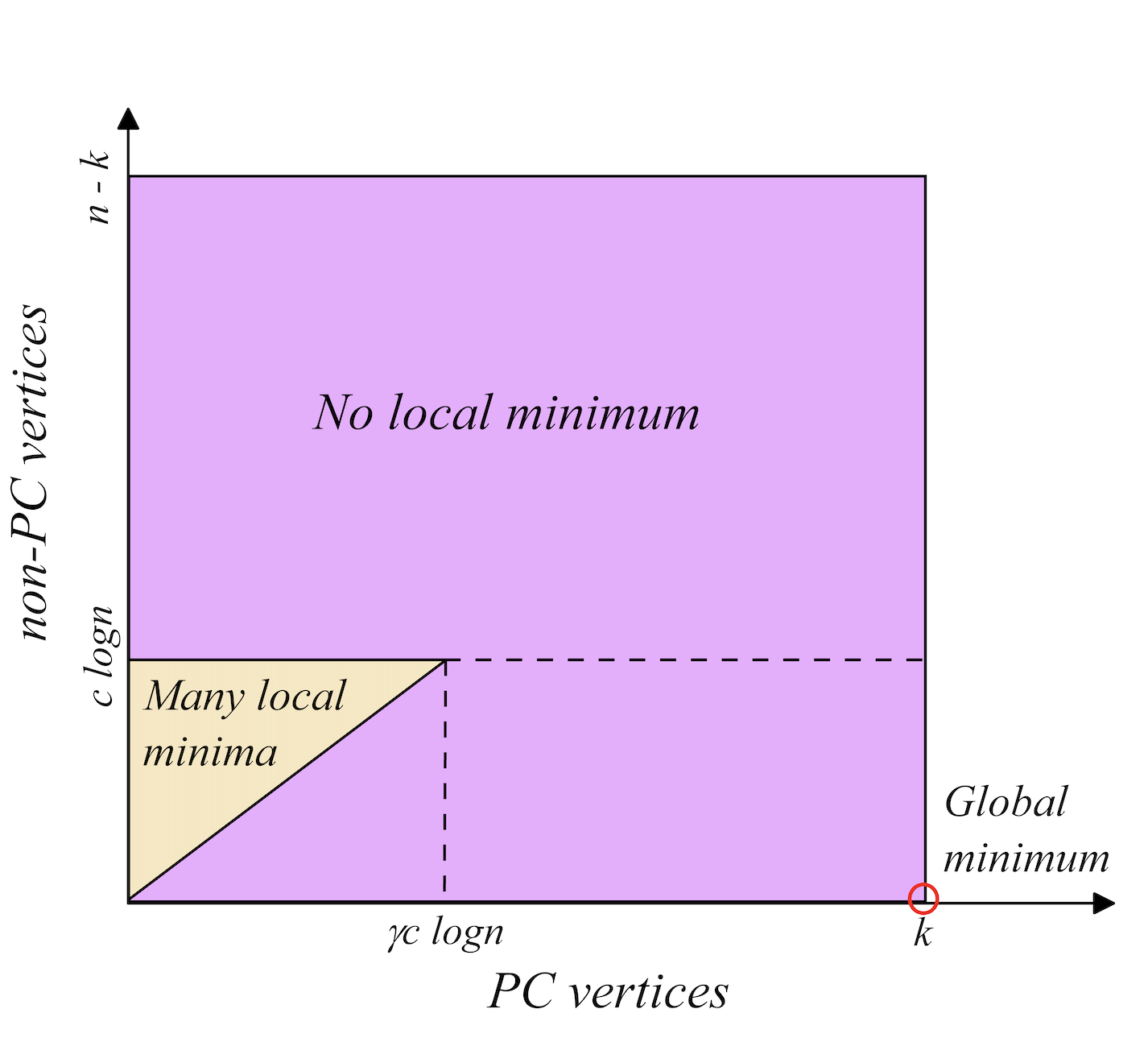}
 \caption{Phase diagram in terms of $U\cap \PC$ and $U\cap \PC^c$, depicting the regions where $H(U)$ has complexity and admits local minima. The global minimum (circled red) is exactly $\PC$.} \label{fig:1}
 \end{figure}

Theorem~\ref{thm:global optimum} is proved in Subsection~\ref{subsec:PC-is-global-min}.
Since Theorem \ref{prop:many-local-minimum} will not be used in the algorithmic results, its proof is postponed to Section~\ref{sec:proofs}. Its proof goes via a second-moment method.  

\subsection{Planted clique is the global energy minimizer}\label{subsec:PC-is-global-min}
Without loss of generality and for convenience, in the remainder of the paper, let $\PC = [k]:=\{1, \ldots, k\}$ with $k \geq n^\alpha$ for some $\alpha\in (0, 1)$, and let $E$ denote the edge set of the graph $G\sim \mathsf{G}(n,\frac{1}{2},k)$. 
For any $U\subseteq [n]$, we write $U = U_1\cup U_2$, where $U_1 = U\cap [k]$, $U_2 = U\cap [k]^c$, and define $H(U_1, U_2) = H(U)$. 
When $U$ is fixed, we typically use $n_1 = |U_1|$ and $n_2 = |U_2|$. 
We will frequently use the following energy-change relations when adding/removing a vertex to/from $U$:
\begin{align}
H(U\cup \{x\}) - H(U) &= -(1+\gamma)|E(x, U)| + \gamma |U|& x\in U^c\label{pl1}\\
H(U\setminus \{z\}) - H(U) &= (1+\gamma)|E(z, U)| - \gamma (|U|-1)&z\in U\label{pl2}.
\end{align}
where throughout the paper, we use $E(A,B)$ to denote the set of edges between $A$ and $B$.  
In what follows, we always assume $n$ is taken to be sufficiently large.  

The proof of Theorem \ref{thm:global optimum} is based on the following lemma, which will be frequently appealed to. This lemma is based on standard concentration bounds and union bounds and its proof is deferred to Section~\ref{sec:proofs}. 

\begin{Lemma}\label{lemma:000}
There exists an absolute constant $c>0$ such that for every $k\ge 0$, with probability $1-o(n^{-2})$, for all pairs $(U', U)$ satisfying $U'\subseteq U\subseteq [n]$ with $|U'|\geq c\log n$, 
\begin{align}\label{eq:degree-concentration}
\left(1-\delta\right)\E[\deg(U', U)]\leq \deg(U', U)\leq \left(1+\delta\right)\E[\deg(U', U)],
\end{align}
where $\deg(U', U) = \sum_{x\in U'} | E(x,U)|$ is the total degree of $U'$ in $U$ and $\delta\geq \sqrt{\frac{96\log n}{|U'|}}$.
Consequently,  
\begin{align}
\min_{x\in U'}|E(x, U)|\leq\left(1+\delta\right)\frac{\E[\deg(U', U)]}{|U'|}.\label{panger}
\end{align}
\end{Lemma}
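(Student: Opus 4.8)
The plan is to prove the concentration bound~\eqref{eq:degree-concentration} for a \emph{single} fixed pair $(U',U)$ with good enough tail probability, and then union bound over all pairs. For fixed $U' \subseteq U \subseteq [n]$, the quantity $\deg(U',U) = \sum_{x \in U'} |E(x,U)|$ counts (with multiplicity in the obvious bipartite-plus-internal sense) edges incident to $U'$ inside $U$. Writing it as a sum over unordered pairs: $\deg(U',U) = 2|E(U')| + |E(U', U \setminus U')|$. The first term is a sum of $\binom{|U'|}{2}$ independent Bernoulli$(1/2)$-type variables (modulo the planted edges, which only \emph{increase} $\deg$ and only help the lower bound), and the second is a sum of $|U'|\cdot|U \setminus U'|$ independent Bernoullis. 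So $\deg(U',U)$ is, up to the deterministic contribution of planted edges inside $U' \cap \PC$, a sum of independent $\{0,1\}$ random variables with mean $\mu := \E[\deg(U',U)] \ge \frac{1}{2}|U'|(|U|-1) \ge \frac{1}{2}|U'|(|U'|-1)$. A Chernoff/Bernstein bound then gives
\begin{align*}
\P\big(|\deg(U',U) - \mu| > \delta \mu\big) \le 2\exp\!\Big(-\frac{\delta^2 \mu}{3}\Big)\,.
\end{align*}
Since $\mu \ge \frac{1}{2}|U'|(|U'|-1) \gtrsim |U'|^2$ when $|U'| \ge c\log n$, choosing $\delta = \sqrt{96\log n / |U'|}$ makes $\delta^2 \mu \gtrsim \log n \cdot |U'| \gg \log n$; more precisely $\delta^2\mu \ge 96 \log n \cdot \frac{|U'|-1}{2} \ge 24 \log n \cdot |U'|$ for $|U'|$ large, so the tail is at most $2\exp(-8|U'|\log n) \le 2 n^{-8|U'|}$.

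Next I would handle the union bound. The number of pairs $(U',U)$ with $|U'| = \ell$ is at most $\binom{n}{\ell}\binom{n}{|U|} \le n^\ell \cdot 2^n$. This $2^n$ factor from the choice of $U$ is the thing to watch, so the key observation is: for fixed $U'$, the bound~\eqref{eq:degree-concentration} only involves $\deg(U',U)$ for $U \supseteq U'$, but actually it suffices to prove concentration of $|E(x,U)|$ type quantities — hmm, no: cleaner is to note that for fixed $U'$ with $|U'| = \ell \ge c \log n$, the failure probability $2n^{-8\ell}$ must beat the number of choices of $U$, which is $2^{n-\ell} \le 2^n$. Since $n^{-8\ell} \le n^{-8c\log n}$ decays faster than any polynomial but $2^n$ does not, this naive union bound fails. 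The fix is that we do not need uniformity over $U$ given $U'$ with such a weak per-pair bound: instead, observe $\deg(U',U)$ is \emph{monotone} in $U$ and sandwich. Alternatively — and this is the route I would actually take — apply the concentration inequality to the collection $\{|E(x,W)| : x \in [n], W \subseteq [n]\}$ only through the decomposition $\deg(U',U) = \sum_{x\in U'}\sum_{y \in U} \mathbf 1[xy \in E]$ and use a Bernstein bound that already incorporates that we range over $2^{O(\ell \log n)}$ choices of $U'$ and we want the statement to hold simultaneously; since $\delta^2\mu \gtrsim \ell \log n$ we get failure probability $n^{-\Omega(\ell)}$ per pair, and summing over $\ell \ge c\log n$ and over $\binom{n}{\ell}^2 \le n^{2\ell}$ pairs (choosing $c$ large enough that $n^{2\ell} \cdot n^{-\Omega(\ell)} = n^{-\Omega(\ell)}$) yields total failure probability $\sum_{\ell \ge c\log n} n^{-\Omega(\ell)} = o(n^{-2})$, provided $c$ is chosen large relative to the constant in the exponent.

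The anticipated main obstacle is exactly this bookkeeping: the set $U$ ranges over $2^n$ possibilities, so one cannot afford to union bound over $U$ and $U'$ independently with only a $n^{-\Omega(|U'|)}$ tail. The resolution is that $U$ does not appear "freely" — the relevant reformulation is that it suffices to prove concentration of $\deg(x, W)$ for all $x$ and all $W$, since $\deg(U',U) = \sum_{x \in U'} |E(x,U)|$ and we can control each term, but that reintroduces $2^n$ choices of $W$ with only $n^{-\Theta(\log n)}$ tail for a single vertex degree, which again is insufficient. So the genuinely correct approach is to exploit that in all applications (see~\eqref{panger}) we only need the \emph{aggregate} $\deg(U',U)$ when $|U'| \ge c\log n$, for which $\mu \gtrsim |U'| |U| \ge |U'| \cdot c\log n$, giving $\delta^2 \mu \gtrsim |U'|\log n$ when we also allow $\delta \gtrsim \sqrt{\log n / |U'|}$; then the per-pair failure probability $\exp(-\Omega(|U'|\log n)) = n^{-\Omega(|U'|)}$ beats $\binom{n}{|U'|}\binom{n}{|U|} \le n^{|U'|}2^n$ \emph{only if} $|U'| \log n \gg n$, which is false for small $U'$. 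Hence the truly correct statement must be using that $\E[\deg(U',U)] \ge \frac12|U'|(|U|-1)$ and in the regime where $|U|$ can be as large as $n$ we get $\mu \gtrsim |U'| n$, so $\delta^2\mu \gtrsim \log n \cdot n \gg n$: this does beat $2^n$... no. I would resolve this cleanly by a two-parameter union bound stratified by $\lceil\log_2|U|\rceil$ and $\lceil \log_2 |U'|\rceil$ together with the observation that when $|U'| \ge c\log n$ the bound $\delta^2\mu \ge 96\log n \cdot \frac{|U|-1}{2} \ge 48 n' \log n$ where $n' = |U|$, and $\binom{n}{n'} \le (en/n')^{n'} \le \exp(n' \log(en/n'))$, so we need $48 n' \log n \gg n'\log(en/n') + |U'|\log n$, i.e. $48\log n \gg \log(en/n') + \log n$, which holds. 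That is the argument; writing it carefully so the exponents genuinely close is the one nontrivial step, and everything else (the Chernoff bound itself, the reduction of~\eqref{panger} from~\eqref{eq:degree-concentration} via $\min_{x \in U'}|E(x,U)| \le \frac{1}{|U'|}\deg(U',U)$, and the harmless role of planted edges which only push $\deg$ up) is routine.
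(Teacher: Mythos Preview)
Your approach is the paper's: Chernoff bound for a fixed pair $(U',U)$, then union bound. You eventually land on the key point—that $\E[\deg(U',U)]\ge \tfrac12|U'|(|U|-1)$ makes the per-pair tail $n^{-\Omega(|U|)}$ rather than merely $n^{-\Omega(|U'|)}$—but your route there is needlessly tortured; the paper sidesteps all the back-and-forth by using that $U'\subseteq U$, so the number of pairs with $|U|=r$ is only $\binom{n}{r}\cdot 2^r$, and then $\sum_{r\ge c\log n}\binom{n}{r}\cdot 2^r\cdot 2n^{-3r}\le \sum_{r\ge c\log n} 2(2/n^2)^r = o(n^{-2})$ closes the union bound in one line.
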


Lemma \ref{lemma:000} holds for all $(U', U)$ in a wide regime simultaneously with probability tending to one.
Sometimes it is also useful to have the following sharper bound in a narrower regime. 
\begin{Lemma}\label{lemma:000+}
There exists an absolute constant $\rho>0$ such that, for any $\alpha\in (0, 1)$ and $n^\alpha \le k \le \frac{n}{\log n}$, with probability at least $1-o(1)$, for all $U$ with $|U\cap [k]^c|\geq\frac{n}{\log n}$ (taking $U' = U\cap [k]^c$ in Lemma \ref{lemma:000}), 
\begin{align*} 
\min_{x\in U\cap [k]^c}|E(x, U)|\leq \frac{\E[\deg(U\cap [k]^c, U)]}{|U\cap [k]^c|} + \rho\sqrt{n} = \frac{1}{2}(|U|-1) + \rho\sqrt{n}.
\end{align*}
\end{Lemma}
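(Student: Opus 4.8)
The plan is to reduce the minimum to an average and then show that average concentrates strongly enough to survive a union bound over all subsets. Write $U_1=U\cap[k]$ and $U_2=U\cap[k]^c$. Since the minimum of finitely many numbers is at most their mean,
\[
\min_{x\in U_2}|E(x,U)|\ \le\ \frac{1}{|U_2|}\sum_{x\in U_2}|E(x,U)|\ =\ \frac{\deg(U_2,U)}{|U_2|},
\]
so it suffices to prove that $\deg(U_2,U)\le \E[\deg(U_2,U)]+\rho\sqrt n\,|U_2|$ holds simultaneously over all admissible $U$ with probability $1-o(1)$. A direct count gives $\E[\deg(U_2,U)]=2\binom{|U_2|}{2}\cdot\tfrac12+|U_2||U_1|\cdot\tfrac12=\tfrac12|U_2|(|U|-1)$, which in particular yields the stated identity $\E[\deg(U_2,U)]/|U_2|=\tfrac12(|U|-1)$.

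The observation that makes this work is that $U_2$ is disjoint from the planted clique $[k]$, so every edge incident to a vertex of $U_2$ is an independent fair coin flip. Hence $\deg(U_2,U)=\sum_{\{x,y\}\subseteq U_2}2\,\mathbf 1[xy\in E]+\sum_{x\in U_2,\,z\in U_1}\mathbf 1[xz\in E]$ is a sum of $\binom{|U_2|}{2}+|U_2||U_1|$ independent random variables bounded by $2$, and Hoeffding's inequality bounds its upper tail by $\exp\!\big(-2t^2/(4\binom{|U_2|}{2}+|U_2||U_1|)\big)$. This is the step where the hypotheses enter: because $|U_1|\le k\le n/\log n$ and $|U_2|\ge n/\log n$, the denominator is at most $2|U_2|^2+|U_2|\cdot(n/\log n)\le 3|U_2|^2$, so choosing $t=\rho\sqrt n\,|U_2|$ makes the tail at most $\exp(-\tfrac23\rho^2 n)$, uniformly in $U$.

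I would finish with a union bound over the at most $2^k\cdot 2^{\,n-k}=2^n$ choices of $U$: the total failure probability is at most $2^n\exp(-\tfrac23\rho^2 n)$, which is $o(1)$---indeed exponentially small---as soon as $\rho$ is a large enough absolute constant (e.g.\ $\rho=2$). On the complementary event, dividing by $|U_2|$ gives the lemma. There is no real obstacle here; the only point requiring care is the one highlighted above, namely bounding the Hoeffding variance proxy by $O(|U_2|^2)$ rather than the crude $O(|U_2|\cdot|U|)$. That is exactly what the restrictions $n/\log n\le|U_2|$ and $k\le n/\log n$ buy us, and it is what lets us upgrade the multiplicative slack $(1+\delta)$ of the bound in Lemma~\ref{lemma:000} to the additive slack $\rho\sqrt n$ in this narrower regime.
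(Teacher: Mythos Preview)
The proposal is correct and follows essentially the same route as the paper: reduce the minimum to the average, apply a concentration inequality to $\deg(U_2,U)$, and absorb a union bound over all $2^n$ subsets by choosing $\rho$ large enough. The paper uses a multiplicative Chernoff bound and then converts to an additive $\rho\sqrt n$, while you apply Hoeffding directly in additive form, but the key observation is identical in both: the hypotheses $|U_2|\ge n/\log n\ge k\ge |U_1|$ force the Hoeffding variance proxy (equivalently, the paper's $n_{U'}$) to be $\Theta(|U_2|^2)$ rather than the naive $O(|U_2||U|)$, which is exactly what upgrades the slack from multiplicative to additive.
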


With these inputs, we can show that indeed $\PC$ is the global minimizer of $H$.  

\begin{proof}[\textbf{\emph{Proof of Theorem \ref{thm:global optimum}}}]
We first claim that there exists $c(\gamma)>0$ such that with probability at least $1-2n^{-2}$, for every $U_1$, 
\begin{align}
\min_{|U_2|> c\log n}H(U_1, U_2)> \min_{|U_2|\leq c\log n}H(U_1, U_2). \label{large-large-regime}
\end{align}
The proof of~\eqref{large-large-regime} follows from Lemma \ref{lemma:000} and is deferred momentarily. 
Meanwhile, for fixed $U_2$ and $x\in [k]\setminus U_1$,  by~\eqref{pl1}, if $n_1>\gamma n_2$,
\begin{align}
H(U_1\cup\{x\}, U_2) - H(U_1, U_2) \leq - n_1 + \gamma n_2<0. \label{obs}
\end{align}
Combining \eqref{large-large-regime} and \eqref{obs}, we conclude that, with probability at least $1-n^{-2}$, 
\begin{align}
\min_{U_1, U_2}H(U_1, U_2) &= \min\left\{\min_{|U_2|\leq c\log n}H([k], U_2), \min_{|U_1|\leq\gamma |U_2|\leq \gamma c\log n}H(U_1, U_2)\right\}.
\end{align}
Now notice that $H([k],\emptyset)$ is an upper bound for the first minimum, and equals $- \binom{k}{2}$. 
At the same time, 
\begin{align*}
\min_{|U_1|\leq\gamma |U_2|\leq \gamma c\log n} H(U_1, U_2)\geq - {(1+\gamma)c\log n\choose 2}>- \binom{k}{2}\,
\end{align*}
so as long as $k>(1+\gamma)c\log n$.
Consequently, 
\begin{align}
    \min_{U} H(U) = \min_{U_1, U_2}H(U_1, U_2) = \min_{|U_2|\le c\log n} H([k],U_2)\,. \label{hamiltonian}
\end{align}

To show that this minimum is attained with $U_2 = \emptyset$, start by letting $\tau = \frac{1}{2}(\frac{1}{2}+\frac{\gamma}{1+\gamma})>\frac{1}{2}$ so that $(1+\frac{1}{\gamma})\tau<1$. 
Applying Hoeffding's inequality for each non-$\PC$ vertex followed by a union bound, we have $\max_{x\in [k]^c}|E(x, [k])|\leq \tau k$ holds with probability at least $1-n^{-2}$. 
On this event, for any $U = [k]\cup U_2$ with $|U_2|\le c\log n$, and any $x\in U_2\neq\emptyset$, by~\eqref{pl2}
\begin{align}
H([k], U_2\setminus\{x\})-H([k], U_2) & = (1+\gamma)|E(x, U)| - \gamma(|U|-1)\label{obs1}\\
=\ & (1+\gamma)|E(x, [k])| -\gamma k+ (1+\gamma)|E(x, U_2)|-\gamma(|U_2|-1)\nonumber\\
\leq\ & \gamma\big[\big(1+1/\gamma\big)\tau-1\big]k + c\log n<0,\nonumber
\end{align}
where the last step is followed by $k\gg\log n$. 
This combined with \eqref{hamiltonian} yields the desired result. 

It remains to prove \eqref{large-large-regime}.
If we let $\delta = \sqrt{96/c}$ 
in Lemma \ref{lemma:000} (such a choice of $\delta$ works for all $|U'|\geq c\log n$ in Lemma \ref{lemma:000}), 
then for any $U_1, U_2$ with $|U_2|>c\log n$, with probability at least $1-2n^{-2}$, there exists $x\in U_2$ such that $|E(x, U)|\leq \frac{1+\delta}{2}(n_1+n_2-1)$. Consequently, 
\begin{align*}
&H(U_1, U_2\setminus\{x\})-H(U_1, U_2)\leq -\gamma\big[1-\tfrac{1}{2}\big(1+\tfrac{1}{\gamma}\big)(1+\delta)\big](n_1+n_2-1)\,.
\end{align*}
This is strictly negative if $\delta$ is sufficiently small, i.e., $c$ is sufficiently large. 
\end{proof}

\begin{Rem}\label{rem:no-complexity-above}
The above proof actually shows that in both $\left\{U: |U_2|>c\log n\right\}$ and $\{U: |U_1|>\gamma |U_2|\}$ portions of the state space, the Hamiltonian $H$ has no local minimum (in fact no absorbing state)  besides the global minimum. 
On the other hand, as we will show in Section \ref{sec:empty-init}, $H$ has at least one local minimum in the regime $\{|U_2|\leq c\log n\}$ with high probability, and in fact when $\gamma$ is at least a large constant, there are many local minimizers in the portion of the state space $\{U: |U_2|\leq c\log n\}$.  
\end{Rem}

\section{Recovery from the full-graph initialization}\label{sec:full-graph}

In this section, we analyze the dynamics of the gradient descent and positive temperature chain with full-graph initialization $S_0 = V$ and prove Theorem~\ref{mainthm:success}.
Though the full-graph initialization is uninformative, it provides sufficient time for the algorithm to explore the global structure of the graph in the portion of the state space where there are no local minima to trap the gradient descent. 
The crux of the proof of Theorem~\ref{mainthm:success} are the following two observations: 
\begin{enumerate} 
\item $|S_t|$ will keep decreasing until it contains $O(\log n)$ many non-$\PC$ vertices. Most of the vertices removed are non-$\PC$ vertices, so at the end, a $1-o(1)$ fraction of $S_t$ is in $\PC$.
\item If $S_t$ is such that most of its vertices are members of $\PC$, the gradient descent algorithm will converge to $\PC$ in a further $O(k)$ steps. 
\end{enumerate}
These are formalized by the following two lemmas. 

\begin{Lemma}\label{lemma:monotone}
Let $\gamma>3$. 
There exists an absolute constant $c_0(\gamma)>0$ such that with probability $1-o(1)$, for all $U\subseteq [n]$ with $|U\cap [k]^c|\geq c_0\log n$, 
\begin{align}
\min_{x\in U}H(U\setminus\{x\})\leq\min\left\{H(U), \min_{z\in U^c}H(U\cup\{z\})\right\} - 1. \label{happy}
\end{align}
Consequently, $(|S_t|)_{t\ge 0}$ will only decrease until $|S_t\cap [k]^c|\leq c_0\log n$. 
\end{Lemma}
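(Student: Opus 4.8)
The plan is to show that whenever $|U \cap [k]^c| \geq c_0 \log n$, the single most profitable local move is a vertex \emph{removal}, and moreover it strictly decreases $H$. The key is to identify a cheap vertex to remove — namely the minimum-degree vertex of $U_2 = U \cap [k]^c$ inside $U$ — and to show that removing it beats both staying put and any addition. I would split the argument according to the size of $U_2$.

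\textbf{Step 1: the removal bound.} By~\eqref{pl2}, $H(U \setminus \{z\}) - H(U) = (1+\gamma)|E(z,U)| - \gamma(|U|-1)$, so I want a vertex $z \in U$ with $|E(z,U)|$ small relative to $\frac{\gamma}{1+\gamma}(|U|-1)$. Taking $U' = U_2$ in Lemma~\ref{lemma:000} (valid since $|U_2| \geq c_0 \log n \geq c\log n$ for $c_0$ large), the expected average degree of $U_2$-vertices into $U$ is $\E[\deg(U_2,U)]/|U_2| = \frac{1}{2}(|U|-1)$ (each such vertex connects to the other $|U|-1$ vertices with probability $1/2$). Hence by~\eqref{panger} there is $z \in U_2$ with $|E(z,U)| \leq \frac{1+\delta}{2}(|U|-1)$, giving
\begin{align*}
\min_{x \in U} H(U \setminus \{x\}) - H(U) \;\leq\; \Bigl[(1+\gamma)\tfrac{1+\delta}{2} - \gamma\Bigr](|U|-1) \;=\; -\gamma\Bigl[1 - \tfrac{1}{2}\bigl(1+\tfrac{1}{\gamma}\bigr)(1+\delta)\Bigr](|U|-1).
\end{align*}
When $\gamma > 3$ we have $\frac{1}{2}(1+\frac{1}{\gamma}) < \frac{2}{3} < 1$, so for $\delta$ small (i.e. $c_0$ large, using $\delta = \sqrt{96/c_0}$) the bracket is a positive constant, and since $|U|-1 \geq |U_2|-1 \geq c_0 \log n - 1$ this is $\leq -1$ and in fact $\leq -\Omega(\log n)$. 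This already gives $\min_x H(U\setminus\{x\}) \leq H(U) - 1$; it remains to beat additions.

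\textbf{Step 2: additions are worse than this removal.} For $z \in U^c$, by~\eqref{pl1}, $H(U \cup \{z\}) - H(U) = -(1+\gamma)|E(z,U)| + \gamma|U| \geq -(1+\gamma)|U| + \gamma|U| = -|U|$ trivially, which is not yet enough — I need the addition gain to be smaller in magnitude than the removal gain $\gamma[1-\tfrac12(1+\tfrac1\gamma)(1+\delta)](|U|-1)$, i.e. I need an upper bound on $\max_{z \in U^c}|E(z,U)|$. Here I split on $|U_2|$: if $|U_2| \le n/\log n$ is small I should instead compare against $H(U)$ directly via a global high-probability bound $\max_{z\in[n]}|E(z,U)| \le \frac12|U| + O(\sqrt{|U|\log n})$ (Hoeffding plus union bound over $z$ and over the at-most-$2^{O(\log n)}$ relevant $U$'s, or more simply over all $z$ for a fixed-cardinality class), so that $H(U\cup\{z\}) - H(U) \ge -\frac{1+\gamma}{2}|U| + \gamma|U| - O(\sqrt{|U|\log n}) = \frac{\gamma-1}{2}|U| - O(\sqrt{|U|\log n}) > 0$ once $|U| \gg \log n$ — and if $|U| = O(\log n)$ then $U_2 = O(\log n)$ too and the argument is a finite-type union bound. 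When $|U_2|$ is large, say $\ge n/\log n$, I instead use the sharper Lemma~\ref{lemma:000+} to get $\min_{x\in U_2}|E(x,U)| \le \frac12(|U|-1) + \rho\sqrt n$, sharpening Step 1's removal bound to beat any addition with room to spare; simultaneously $\max_{z}|E(z,U)| \le \frac12|U| + \rho'\sqrt n$, and the gap between $\gamma|U| - (1+\gamma)(\tfrac12 |U| + \rho\sqrt n)$ (removal) versus $-\gamma|U| + (1+\gamma)(\tfrac12|U| + \rho'\sqrt n)$ (what we'd need for an addition to win) is controlled since $|U| \ge n/\log n \gg \sqrt n$. Either way, for all $U$ with $|U_2| \ge c_0\log n$, the RHS of~\eqref{happy} is witnessed by a $U_2$-removal.

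\textbf{Step 3: the consequence.} Since~\eqref{happy} holds simultaneously for all such $U$ on a $1-o(1)$ event, the gradient descent from any state is deterministic-ly pushed to decrease $|S_t|$: at each step, as long as $|S_t \cap [k]^c| \ge c_0 \log n$, the argmin in the gradient-descent rule consists only of removals (additions and staying are strictly worse by~\eqref{happy}), so $|S_{t+1}| = |S_t| - 1$; hence $|S_t|$ strictly decreases until the first time $|S_t \cap [k]^c| \le c_0 \log n$.

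\textbf{Main obstacle.} The delicate point is Step 2 in the \emph{intermediate} regime of $|U_2|$ — roughly $c_0\log n \le |U_2| \le n/\log n$ — where Lemma~\ref{lemma:000+} does not apply but one still wants the removal to beat \emph{every} addition, not merely to beat $H(U)$. The trivial bound $|E(z,U)| \le |U|$ for an added vertex is too weak, so one genuinely needs the $\frac12|U| + O(\sqrt{|U|\log n})$ concentration for $\max_{z\in U^c}|E(z,U)|$ uniformly over the relevant $U$; getting this union bound to hold over all subsets $U$ (not just of logarithmic size) is the part requiring care, and is presumably why the hypothesis is phrased in terms of $|U\cap[k]^c|$ rather than $|U|$ — for $U \supseteq [k]$ with small $U_2$ one is comparing a removal of size-$\Theta(k)$ degree against additions, and the $\gamma > 3$ slack is exactly what makes $\gamma|U| - \frac{1+\gamma}{2}|U|$ dominate the fluctuation terms.
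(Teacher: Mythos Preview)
Your Step~1 is correct and matches the paper exactly. The error is in Step~2, where you assert that the trivial bound $|E(z,U)|\le |U|$ ``is not yet enough''. In fact it \emph{is} enough, and this is precisely where the hypothesis $\gamma>3$ is used. From the trivial bound, $H(U\cup\{z\})-H(U)\ge -|U|$, so combining with your Step~1 estimate,
\[
H(U\setminus\{x\})-H(U\cup\{z\}) \le \Bigl[\tfrac{(1+\delta)(\gamma+1)}{2}-\gamma\Bigr](|U|-1)+|U|
=\Bigl[\tfrac{(1+\delta)(\gamma+1)}{2}-(\gamma-1)\Bigr](|U|-1)+1.
\]
Setting $\omega=\tfrac{(1+\delta)(\gamma+1)}{2(\gamma-1)}$, the bracket equals $-(\gamma-1)(1-\omega)$, and $\omega<1$ holds exactly when $(1+\delta)(\gamma+1)<2(\gamma-1)$, i.e.\ when $\gamma>3$ and $\delta$ is small (that is, $c_0$ is large). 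Since $|U|-1\ge c_0\log n-1$, this is $<-1$ for large $n$. This is the paper's entire argument for~\eqref{happy}: one application of Lemma~\ref{lemma:000} for the removal, the trivial bound for additions, and the algebraic condition $\omega<1$. No case split on $|U_2|$, no Lemma~\ref{lemma:000+}, no concentration on $\max_z|E(z,U)|$.

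Your proposed workaround would not have worked in any case. You write that a union bound over ``at-most-$2^{O(\log n)}$ relevant $U$'s'' gives $\max_{z}|E(z,U)|\le \tfrac12|U|+O(\sqrt{|U|\log n})$, but the relevant $U$'s with $c_0\log n\le |U_2|\le n/\log n$ are exponentially many in $n$, not polylogarithmically many; the Hoeffding tail would need deviation $\Theta(|U|)$ to survive such a union bound, which is useless. Moreover, for $z\in[k]\setminus U$ one has $|E(z,U)|\ge |U\cap[k]|$ deterministically, so the claimed $\tfrac12|U|$ concentration is simply false when $|U\cap[k]|$ is a large fraction of $|U|$. You actually diagnose this correctly in your ``Main obstacle'' paragraph and even write the right inequality (``the $\gamma>3$ slack is exactly what makes $\gamma|U|-\tfrac{1+\gamma}{2}|U|$ dominate''), but do not notice that this is the same as saying the trivial bound already suffices and that the whole case split is unnecessary.
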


\begin{Lemma}\label{lemma:never}
    Let $\gamma>3$ and $0<\xi<1-\frac{1}{2}(1+\frac{1}{\gamma})$. With probability $1-o(1)$ the following holds. For any $S$ having $|S\cap [k]| \ge \max\{\gamma |S\cap [k]^c| + 2, (1-\xi)k\}$, if $S_t'$ is the gradient descent initialized from $S_0'=S$, then $d_\H(S_{t}',\PC)$ is strictly decreasing in $t$ while $S_t'\ne \PC$, where $d_\H$ denotes the Hamming distance. Consequently $S_t'$ will converge to $\PC$ in at most $2k$ steps. 
    
    Moreover, for any $W\subseteq [n]$, $W\ne \PC$ that satisfies $|W\cap [k]|\geq \max\{\gamma|W\cap [k]^c|+2, (1-\xi)k\}$, 
\begin{align}
\min_{U\in\mathcal U} H(U)\leq \min_{U\sim W, U\notin\mathcal U}H(U) - 1, \label{egs}
\end{align} 
where $\mathcal U = \left\{U\sim W: d_\H(U, \PC)<d_\H(W, \PC)\right\}$ represents the set of neighboring states of $W$ that are one Hamming distance closer to the $\PC$ than $W$ is.  
\end{Lemma}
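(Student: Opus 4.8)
\emph{Proof proposal.} The plan is to prove the static inequality \eqref{egs} first and then read off the dynamical statement. Decompose $W=W_1\cup W_2$ with $W_1=W\cap[k]$, $W_2=W\cap[k]^c$, and set $n_1=|W_1|$, $n_2=|W_2|$, $N=|W|$, so the hypotheses read $n_1\ge\gamma n_2+2$ and $n_1\ge(1-\xi)k$. Sort the proper neighbors $U\sim W$ into four kinds: (a) $U=W\cup\{x\}$, $x\in[k]\setminus W_1$; (b) $U=W\setminus\{z\}$, $z\in W_2$; (c) $U=W\cup\{y\}$, $y\in[k]^c\setminus W_2$; (d) $U=W\setminus\{v\}$, $v\in W_1$. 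Types (a),(b) are exactly $\mathcal U$, while types (c),(d) together with $W$ itself form its complement among neighbors. Granting \eqref{egs}, the first paragraph of the lemma follows by induction: the minimizing neighbor of a valid $W\ne\PC$ lies in $\mathcal U$ and strictly lowers $H$, so the chain moves to a neighbor one Hamming step closer to $\PC$; that neighbor is again valid, since adding a $\PC$-vertex only raises $n_1$ and removing a non-$\PC$-vertex only lowers $n_2$, and both defining inequalities are preserved; and $\PC$ is absorbing because it is the global minimum of $H$ (Theorem~\ref{thm:global optimum}). Since a valid $S_0'$ has $d_\H(S_0',\PC)=(k-n_1)+n_2\le\xi k+k/\gamma<k$, the chain reaches $\PC$ within $2k$ steps.

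Two deterministic inputs come straight from \eqref{pl1}--\eqref{pl2}, using that all $\PC$-vertices are mutually adjacent (so $|E(x,W_1)|=n_1$ and $|E(v,W_1)|=n_1-1$) together with $n_1\ge\gamma n_2+2$: every type-(a) move satisfies $H(W\cup\{x\})-H(W)=-n_1-(1+\gamma)|E(x,W_2)|+\gamma n_2\le-2$, and every type-(d) move satisfies $H(W\setminus\{v\})-H(W)=(n_1-1)-\gamma n_2+(1+\gamma)|E(v,W_2)|\ge 1$. Thus type (d) is never a minimizer, and \eqref{egs} reduces to two claims: (I) some $\mathcal U$-move has energy drop at least $2$; and (II) if some type-(c) move $y$ has $H(W\cup\{y\})-H(W)<-1$, then some $\mathcal U$-move has energy drop at least $1$ more than $\min_y\big(H(W\cup\{y\})-H(W)\big)$. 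One checks directly that (I) and (II) imply \eqref{egs}.

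I would prove (I)--(II) on the event that $|E(z,[k])|\le\tfrac k2+\sqrt{2k\log n}$ for every $z\in[k]^c$ (Hoeffding plus a union bound) and that the conclusion of Lemma~\ref{lemma:000} holds, both with probability $1-o(1)$, splitting on $|W_2|$ with threshold $c\log n$ for a large constant $c=c(\gamma)$. If $W_2=\emptyset$, then $W\subsetneq[k]$, every type-(a) move equals $-n_1$, and $|E(y,W_1)|\le|E(y,[k])|\le\tfrac k2+\sqrt{2k\log n}<n_1$ (as $\xi<\tfrac12$ and $k\ge n^\alpha$), which gives $H(W\cup\{y\})-H(W)\ge -(1+\gamma)(n_1-1)+\gamma n_1=-n_1+1+\gamma>-n_1$; hence (I) (as $n_1\ge2$) and (II). If $0<|W_2|<c\log n$, the point is that type-(c) moves \emph{cannot} improve: $|E(y,W)|\le|E(y,[k])|+n_2\le\tfrac k2+o(k)$, while $\kappa N\ge\kappa(1-\xi)k$ and $\kappa(1-\xi)>\tfrac12$ precisely because $\xi<1-\tfrac12(1+\tfrac1\gamma)$, so $H(W\cup\{y\})-H(W)=-(1+\gamma)|E(y,W)|+\gamma N>0$; and (I) holds via a type-(a) move when $W_1\ne[k]$, and when $W_1=[k]$ via any $z\in W_2$, for which $H(W\setminus\{z\})-H(W)\le\tfrac{1+\gamma}{2}k-\gamma k+o(k)=-\tfrac{\gamma-1}{2}k+o(k)\to-\infty$. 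If $|W_2|\ge c\log n$, apply \eqref{panger} of Lemma~\ref{lemma:000} with $U'=W_2\subseteq U=W$; as $z\in W_2$ is non-$\PC$, $\E[\deg(W_2,W)]/|W_2|=(N-1)/2$, so some $z^*\in W_2$ has $|E(z^*,W)|\le(1+\delta)(N-1)/2$ with $\delta=\sqrt{96\log n/|W_2|}\le\sqrt{96/c}$. Then $H(W\setminus\{z^*\})-H(W)\le(N-1)\big[\tfrac{(1+\gamma)(1+\delta)}{2}-\gamma\big]$, a negative constant times $N-1=\Theta(k)$, hence $\le-2$; and for any type-(c) move $y$, using $|E(y,W)|\le N$,
\begin{align*}
\big[H(W\cup\{y\})-H(W)\big]-\big[H(W\setminus\{z^*\})-H(W)\big]\ \ge\ (N-1)\Big[(\gamma-1)-\tfrac{(1+\gamma)(1+\delta)}{2}\Big]-1,
\end{align*}
whose bracket equals $\tfrac{\gamma-3}{2}$ at $\delta=0$ and so is a positive constant once $c$ is large --- this is exactly where $\gamma>3$ enters --- making the difference $\ge1$ for large $n$. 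This gives (II), and (I) holds via $z^*$ (or a type-(a) move if $W_1\ne[k]$).

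The main obstacle is claim (II) when $\xi$ is close to its upper bound $1-\tfrac12(1+\tfrac1\gamma)$: there genuinely exist valid $W$ at which some type-(c) move lowers $H$ by $\Theta(k)$, so one cannot argue such moves are non-improving and must instead out-compete them with a removal; the whole argument for $|W_2|\ge c\log n$ therefore hinges on the concentration bound $\min_{z\in W_2}|E(z,W)|\le(N-1)/2+o(k)$. Obtaining this \emph{uniformly over all valid $W$} is the quantitative crux: a naive union bound over the choices of $W_2$ costs $\Theta(k\log n)$ in the exponent and is far too lossy when $|W_2|$ is small. This is precisely why the proof is split into the regime $|W_2|\ge c\log n$, where Lemma~\ref{lemma:000} supplies the sharp bound, and $|W_2|<c\log n$, where $W_2$ contributes only $o(k)$ and, crucially, type-(c) moves cannot improve at all.
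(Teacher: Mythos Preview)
Your proof is correct. The approach is essentially the same as the paper's---both split on whether $|W\cap[k]^c|$ exceeds a threshold $c\log n$, both use the deterministic bounds \eqref{pl1}--\eqref{pl2} together with $n_1\ge\gamma n_2+2$ to rule out removing a $\PC$-vertex (your type (d), the paper's \eqref{eg1}), and both use Hoeffding on $|E(y,[k])|$ to rule out adding a non-$\PC$-vertex in the small-$n_2$ regime (your middle case, the paper's \eqref{eg2}).

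The organizational difference is that the paper black-boxes the large-$n_2$ case into Lemma~\ref{lemma:monotone} (which already shows the best move is a removal beating any addition by at least $1$) and, for the small-$n_2$ case, appeals to Remark~\ref{rem:no-complexity-above} to conclude the chain is not absorbed and hence must make a good move; the energy gap \eqref{egs} is then read off from \eqref{eg1}--\eqref{eg2} plus \eqref{happy}. You instead prove \eqref{egs} directly and self-containedly: you unroll the large-$n_2$ argument via Lemma~\ref{lemma:000} and the crude bound $|E(y,W)|\le N$ (yielding the bracket $(\gamma-1)-\tfrac{(1+\gamma)(1+\delta)}{2}$, which is exactly where $\gamma>3$ is needed, as in \eqref{goodone}--\eqref{liop}), and you replace the no-local-minimum appeal by an explicit split on whether $W_1=[k]$. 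Your route is a bit longer but more transparent about the role of each hypothesis; the paper's is shorter because the heavy lifting is already done in Lemma~\ref{lemma:monotone}.
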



The proofs of Lemma \ref{lemma:monotone} and Lemma \ref{lemma:never} are based on the degree concentration estimates of Lemma~\ref{lemma:000} and are deferred to Section \ref{sec:proofs}. 
To stitch the two lemmas together, we introduce a \emph{peeling process} $Y_t$, initialized from $S_0 = V$, that at each time $t\geq 1$ removes the vertex in $Y_{t-1}$ having the smallest degree in $Y_{t-1}$ (if there are multiple, pick randomly) to obtain $Y_t$. 
By the form of the Hamiltonian~\eqref{myH},
\begin{align}
    \argmin_{x\in U} H(U\setminus \{x\}) = \argmin_{x\in U} |E(x,U)|\label{xstar},
\end{align}
(where the $\argmin$'s are understood as sets when they are not singletons). Therefore, by Lemma~\ref{lemma:monotone}, with high probability,
$Y_t = S_t$ until the first time $|Y_t\cap [k]^c|\le c_0 \log n$. 
For convenience, let $\tau_0=\tau_0(c_0)$ be the hitting time for this event.  
The process $Y_t$ turns out to coincide exactly with the removal stage of the algorithm of~\cite{feige2010finding} (though now stopped at $\tau_0$).
The following lemma says that with large probability, a large portion of $[k]$ remains well-connected in $Y_t$. In particular, we define the random subset of vertices of $\PC$ that retain degrees close to their expectations (which is larger than those of typical non-$\PC$ vertices) throughout the removal process: for any $c_1>0$ let 
\begin{align}
\mathcal A(c_1) = \left\{x\in [k]: \deg(x, Y_t)\geq (|Y_t\cap [k]|-1) + \frac{1}{2}|Y_t\cap [k]^c| - c_1\sqrt{n}\ \text{ for all $t<T_x$}\right\}.\label{mya}
\end{align} 
The following lemma shows that most of $\PC$ is in this set.

\begin{Lemma}\label{lemma:A}
Let $Y_t$ denote the peeling process defined above. 
For $x\in [n]$, let $T_x$ denote the time that $x$ is removed from $Y_t$, i.e., $T_x = \min\{t: x\notin Y_t\}\wedge \tau_0$. 
For any $\e, \eta>0$, there exists an absolute constant $c_1(\e, \eta)>0$ such that with probability at least $1-\e$, $\left|\mathcal A\right|\geq (1-\eta)k$.
\end{Lemma}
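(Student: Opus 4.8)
The plan is to reduce the statement to a one-vertex estimate and then control the degree of a fixed planted-clique vertex along the peeling process by a submartingale argument. For the reduction, note that $|[k]\setminus \mathcal A(c_1)| = \sum_{x\in[k]}\mathbf 1\{x\notin\mathcal A(c_1)\}$, so by Markov's inequality it suffices to produce $c_1=c_1(\e,\eta)$ with $\P(x\notin\mathcal A(c_1))\le \e\eta$ for every fixed $x\in[k]$; indeed then $\E|[k]\setminus\mathcal A(c_1)|\le \e\eta k$ and $\P(|\mathcal A(c_1)|<(1-\eta)k)\le \e$.

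Next I would fix $x\in[k]$ and rewrite the event $\{x\in\mathcal A(c_1)\}$ in a form amenable to a random-walk analysis. Since $x$ is adjacent to all of $[k]$, whenever $x\in Y_t$ we have $\deg(x,Y_t)=(|Y_t\cap[k]|-1)+|E(x,Y_t\cap[k]^c)|$, so the defining condition of $\mathcal A(c_1)$ is equivalent to $|E(x,Y_t\cap[k]^c)|\ge \tfrac12|Y_t\cap[k]^c|-c_1\sqrt n$ for all $t<T_x$. Writing $p_t,q_t$ for the numbers of $\PC$- and non-$\PC$-vertices removed by time $t$, and $r_t$ for the number of removed non-$\PC$-vertices that are \emph{non}-neighbours of $x$, one has $|Y_t\cap[k]^c|=(n-k)-q_t$ and $|E(x,Y_t\cap[k]^c)|=|E(x,[k]^c)|-(q_t-r_t)$, and a short computation turns the condition into
\[
M_0+D_t\ \ge\ -c_1\sqrt n\qquad\text{for all }t<T_x,
\]
where $M_0:=|E(x,[k]^c)|-\tfrac12(n-k)$ and $D_t:=r_t-\tfrac12 q_t$. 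By Hoeffding's inequality $M_0$ is a centred sum of $n-k$ bounded independent terms, so $\P(M_0\le -\tfrac12 c_1\sqrt n)\le 2e^{-c_1^2/8}$; it remains to bound $\P(\min_{t\le n}D_t\le -\tfrac12 c_1\sqrt n)$ (extending $D_t$ past $T_x$ only enlarges the event).

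The walk $D_t$ is unchanged when a $\PC$-vertex is removed, and changes by $+\tfrac12$ (resp.\ $-\tfrac12$) when the removed non-$\PC$-vertex is a non-neighbour (resp.\ neighbour) of $x$. The key point is that this removed vertex has minimum degree inside the current $Y_{t}$, while every neighbour of $x$ in $Y_t\cap[k]^c$ has degree exactly one larger than the same vertex would have if its edge to $x$ were deleted; hence the removed non-$\PC$-vertex is biased towards being a non-neighbour of $x$, i.e.\ $(D_t)$ is a submartingale with increments bounded by $\tfrac12$ with respect to an appropriate filtration. Granting this, the Doob decomposition $D_t=N_t+A_t$ (with $A_t$ nondecreasing, $A_0=0$, and $N_t$ a martingale with $N_0=0$ and $|N_t-N_{t-1}|\le 1$) gives $D_t\ge N_t$, so by the maximal form of the Azuma--Hoeffding inequality $\P(\min_{t\le n}D_t\le -\tfrac12 c_1\sqrt n)\le \P(\max_{t\le n}(-N_t)\ge \tfrac12 c_1\sqrt n)\le e^{-c_1^2/8}$. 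Combining the two estimates, $\P(x\notin\mathcal A(c_1))\le 3e^{-c_1^2/8}$, which is below $\e\eta$ once $c_1=c_1(\e,\eta)$ is large enough, completing the reduction.

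The step I expect to be the main obstacle is making the submartingale claim for $(D_t)$ rigorous. The difficulty is that the peeling order depends globally on all of $x$'s edges (knowing the first removed vertex already requires knowing every degree in $[n]$), so one cannot simply expose $x$'s edges ``in removal order''. The bias towards removing non-neighbours of $x$ has to be extracted by conditioning on all edges not incident to $x$ and running a coupling that interchanges the neighbours and non-neighbours of $x$ within $[k]^c$, which must be set up so that the relevant comparison is genuinely monotone; this is where care is needed. Everything else — the first-moment reduction, the identity collapsing the event to a $\pm\tfrac12$ walk, and the Doob/Azuma tail bound — is routine.
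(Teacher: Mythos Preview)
Your overall architecture matches the paper's: reduce to a single vertex via Markov's inequality, then control the degree of a fixed $x\in[k]$ along the peeling process by a martingale-type maximal inequality. The decomposition into an initial deviation $M_0$ plus a running walk $D_t$ is essentially the same split the paper uses (initial degree of $x$ via Chebyshev, plus the removed-vertex contribution).

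Where you diverge from the paper is exactly at the step you flag as the main obstacle. The paper does \emph{not} try to show that $(D_t)$ is a submartingale with respect to any natural filtration; as you note, the peeling order is globally determined by all of $x$'s edges, so there is no filtration under which $D_t$ is adapted and its increments have nonnegative conditional mean. Instead, following Feige--Ron, the paper uses an \emph{edge-exposure} argument: one samples the whole graph except the edges between $x$ and $[k]^c$, and runs the peeling process while revealing an edge $(x,z)$ only when $z$ (or $x$) becomes the current minimum-degree candidate. The key observation is that whenever a revealed bit comes up $0$, the corresponding $z$ is forced to be removed, so $\{z\in\mathcal R_t:\tilde\varepsilon_z=0\}\subseteq Y_t^c$; this yields the clean stochastic domination $\deg(x,Y_t^c\cap[k]^c)\le \sum_{i\le |Y_t^c\cap[k]^c|}\varepsilon_{z^{(i)}}$ with the $\varepsilon_{z^{(i)}}$ genuinely i.i.d.\ Bernoulli$(1/2)$ in revealing order. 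Doob's $L^2$ maximal inequality on the centered partial sums then gives the $O(1/c_1^2)$ tail bound.

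Your proposed fix---conditioning on the non-$x$ edges and coupling by interchanging neighbours and non-neighbours of $x$---is aiming at the right intuition (neighbours of $x$ have degree shifted up by one), but it is not clear how to turn a global involution on the edge bits into a step-by-step submartingale statement that respects the peeling order; swapping all bits simultaneously changes every degree and hence the entire removal trajectory. The edge-exposure device is precisely the standard way to convert this heuristic bias into a rigorous stochastic domination, and it is what you should use here rather than the coupling you sketch.
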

The proof is similar to \cite[Corollary 5]{feige2010finding}; for completeness, we reproduce the proof in a more general setting when discussing the robustness of the algorithm in Section \ref{sec:rob}, namely Lemma \ref{lemma:A+v}. 

\begin{proof}[\textbf{\emph{Proof of Theorem \ref{mainthm:success}}}]
We first prove the result for the gradient descent $S_t$ and then extend it to the positive temperature chain $S_t^\beta$ via a coupling argument. 
According to Lemma \ref{lemma:monotone} and~\eqref{xstar}, there exists $c_0>0$ such that, with probability $1-o(1)$, there exists a valid coupling of $S_t$ and $Y_t$ such that they coincide up to time $\tau_0(c_0)$, resulting in $S_{\tau_0} = Y_{\tau_0}$. 
For any $\eta<1-\frac{1}{2}(1+\frac{1}{\gamma})$, we claim that there exists $c_1(\e,\eta)>0$ such that with probability at least $1-\e$, 
\begin{align}\label{nts-main-proof}
|\A|\geq (1-\eta)k \quad \text{and} \quad \A\subseteq Y_t \quad \text{for all $t\leq\tau_0$},
\end{align}
where $c_1$ appears in the definition of $\A$ in \eqref{mya}. 
Assuming that, then for $t = \tau_0$, 
\begin{align*}
\gamma |S_{\tau_0}\cap [k]^c|+2 = \gamma c_0\log n + 2\leq (1-\eta)k\stackrel{\eqref{nts-main-proof}}{\leq} |\A|\leq |Y_{\tau_0}\cap [k]|=|S_{\tau_0}\cap [k]|.
\end{align*}
The proof of the gradient descent part is completed by appealing to Lemma \ref{lemma:never} with $S_0'= S_{\tau_0}$ and noting that the total number of steps to reach $S_t = \PC$ is at most $\tau_0 + 2k\leq n+2k$. Note that once the gradient descent has reached $\PC$, it will be absorbed per Theorem~\ref{thm:global optimum}. 

To verify~\eqref{nts-main-proof}, we apply an inductive argument.  
Let $k \geq C\sqrt{n}$ for some absolute constant $C$ to be determined during the proof.
Taking a union bound for the statements in Lemmas \ref{lemma:000}, \ref{lemma:000+}, \ref{lemma:A}, we see that the following events hold simultaneously with probability at least $1-\e$: there exist $c, c_1, \rho>0$ such that
\begin{align}
\min_{x\in U'}|E(x, U)|&\leq\left(1+\delta\right)\frac{\E[\deg(U', U)]}{|U'|}&\forall U'\subseteq U\subseteq [n], |U'|\geq c\log n\label{t1}\\
\min_{x\in U\cap [k]^c}|E(x, U)|&\leq \frac{1}{2}(|U|-1) + \rho\sqrt{n}&\forall U\subseteq [n], |U\cap [k]^c|\geq\frac{n}{\log n}\label{t2}\\
\left|\mathcal A\right| = \left|\mathcal A(c_1)\right|&\geq (1-\eta)k\label{t3},
\end{align}
where $\delta = \delta(U') = \sqrt{\frac{96\log n}{|U'|}}$. 
It is easy to see that $\A\subseteq Y_0 = [n]$ at $t=0$.
To apply induction, we assume $\A\subseteq Y_{t-1}$ and verify $\A\subseteq Y_{t}$. 
For ease of presentation, we let $|Y_t| = n_{t, 1} + n_{t, 2}$ where $n_{t,1}=|Y_{t}\cap [k]|, n_{t,2} = |Y_{t}\cap [k]^c|$.
If $n_{t-1,2}\geq \frac{n}{\log n}$, letting $U = Y_{t-1}$ in \eqref{t2}, we have
\begin{align*}
\min_{y\in Y_{t-1}\cap [k]^c}& \deg(y, Y_{t-1})  \stackrel{\eqref{t2}}{\leq} \frac{1}{2}(|Y_{t-1}|-1) + \rho\sqrt{n}\\
\leq&\ (n_{t-1, 1}-1) + \frac{1}{2}n_{t-1, 2} - c_1\sqrt{n} - \left[\frac{1}{2}(n_{t-1, 1}-1) - \rho\sqrt{n} - c_1\sqrt{n}\right] \\
\stackrel{\eqref{mya}}{\leq}&\ \min_{x\in \A\subset Y_{t-1}}\deg(x, Y_{t-1}) -  \left[\frac{1}{2}(n_{t-1, 1}-1) - \rho\sqrt{n} - c_1\sqrt{n}\right]\\
\stackrel{\eqref{t3}}{\leq}&\ \min_{x\in \A\subset Y_{t-1}}\deg(x, Y_{t-1}) -  \left[\frac{C(1-\eta)}{2}\sqrt{n} - \rho\sqrt{n} - c_1\sqrt{n}\right]< \min_{x\in \A\subset Y_{t-1}}\deg(x, Y_{t-1}),
\end{align*}
where the last inequality holds for any $C$ satisfying $C>\frac{2(\rho+ c_1)}{1-\eta}$. 
If $c \log n \le n_{t-1,2}\leq \frac{n}{\log n}$, letting $(U, U') = (Y_{t-1}, Y_{t-1}\cap [k]^c)$ in \eqref{t1}, we have 
\begin{align*}
\min_{y\in Y_{t-1}\cap [k]^c} & \deg(y, Y_{t-1}) \stackrel{\eqref{t1}}{\leq} \frac{1}{2}(1+\delta)(|Y_{t-1}|-1)\\
=&\ (n_{t-1, 1}-1) + \frac{1}{2}n_{t-1, 2} - c_1\sqrt{n} - \left[\left(1-\frac{1+\delta}{2}\right)(n_{t-1, 1}-1) -\frac{\delta}{2} n_{t-1, 2} - c_1\sqrt{n} \right] \\
\stackrel{\eqref{mya}}{\leq}&\ \min_{x\in \A\subset Y_{t-1}}\deg(x, Y_{t-1}) - \left[\left(1-\frac{1+\delta}{2}\right)(n_{t-1, 1}-1) -\frac{\delta}{2} n_{t-1, 2} - c_1\sqrt{n} \right]\\
\stackrel{\eqref{t3}}{\leq}&\ \min_{x\in \A\subset Y_{t-1}}\deg(x, Y_{t-1}) - \left[\frac{C(1-\eta)}{4}\sqrt{n} - \sqrt{24n} - c_1\sqrt{n} \right]<\min_{x\in \A\subset Y_{t-1}}\deg(x, Y_{t-1}),
\end{align*}
where the last inequality holds if $C>\frac{4(\sqrt{24}+c_1)}{1-\eta}$. 
Combining the two cases together, we have $\A\subseteq Y_t$.

To extend the above result to the positive temperature chain when $\beta = \Omega(\log n)$, it suffices to show that the following events hold with high probability:
\begin{itemize}
\item $S_t$ and $S^\beta_t$ remain fully coupled before $\tau_0$, i.e. $S_{\tau_0}=S^\beta_{\tau_0}$;
\item For $t>\tau_0$, the Hamming distance between $S^\beta_t$ and $\PC$ is decreasing before reaching $\PC$;
\item Once it has reached $\PC$, $S_t^\beta$ will stay there for $n^{\Theta(k)}$ steps. 
\end{itemize} 
Indeed, for $t<\tau_0$, denote $\mathcal M_t = \argmin_{U\sim S_t}H(U)$ and $m_t = \min_{U\sim S_t}H(U)$.
Suppose that $S_t$ and $S_t^\beta$ are equal at time $t$. 
In this case, $S_t$ and $S_t^\beta$ will remain the same at time $t+1$ if the positive temperature chain moves to a neighboring state of $S_t$ in $\mathcal M_t$ (if $\mathcal M_t$ is not a singleton, the choice of which one can be coupled trivially).  
The probability of this event, according to \eqref{glauber}, is at least
\begin{align}\label{eq:glauber-transition-bound}
\frac{|\mathcal M_t| e^{-\beta m_t}}{|\mathcal M_t| e^{-\beta m_t} + \sum_{U\sim S_t, U\notin\mathcal M_t}e^{-\beta H(U)}}\geq\frac{1}{1+ ne^{-\beta\Delta}}>1-ne^{-\beta\Delta},
\end{align}
where $\Delta:=\min_{U\sim S_t, U\notin\mathcal M_t}H(U)-m_t$ is the energy gap between the smallest and second to smallest energy in the neighboring states of $S_t$. 
Meanwhile, it follows from Lemma \ref{lemma:monotone} that $\mathcal M_t\subseteq \{U\sim S_t: |U|< |S_t|\}$, and consequently, 
\begin{align*}
\Delta = \min\left\{\min_{U\sim S_t, U\notin\mathcal M_t, |U|\geq |S_t|}H(U)-m_t, \min_{U\sim S_t, U\notin\mathcal M_t, |U|< |S_t|}H(U)-m_t\right\}\stackrel{\eqref{pl2}, \eqref{happy}}{\geq} 1.
\end{align*}
Substituting this into the above estimate we obtain that $S^\beta_{t+1}=S_{t+1}$ holds with probability at least $1-ne^{-\beta\Delta}\geq 1- n^{-2}$ if $\beta = \Omega(\log n)$.  
Applying a union bound over $t\leq\tau_0$ and noting $\tau_0<n$ concludes that with probability at least $1-n^{-1}$, $S^\beta_{\tau_0}=S_{\tau_0}$.  

For $t \geq \tau_0$, denote the neighboring states of $S_t^\beta$ that are one Hamming distance closer to $\PC$ than $S_t^\beta$ as $\mathcal U^\beta_{t} = \{U\sim S^\beta_t, d_\H(U, \PC)<d_\H(U, S^\beta_t)\}$.
When $t = \tau_0$, $\mathcal U^\beta_{\tau_0} = \mathcal U_{\tau_0}$ and $S^\beta_{\tau_0}=S_{\tau_0}$ satisfies the condition on the initialization in Lemma \ref{lemma:never}. 
By a similar calculation to~\eqref{eq:glauber-transition-bound}, using \eqref{egs}, we have $S^\beta_{t+1}\in\mathcal U^\beta_{t}$ with probability at least $1-n^{-2}$.
In this case, as opposed to the situation before $\tau_0$, the energy gap (the difference between the smallest neighboring energy and the second lowest neighboring energy) amongst elements of $\mathcal U_t$ may not be lower bounded by an absolute positive constant, so we may not expect $S^\beta_{t+1}$ and $S_{t+1}$ to be fully coupled in this stage. Nevertheless, both chains are moving in the right direction in the sense that their Hamming distance to $\PC$ is strictly decreasing. 
In particular, $d_\H(S^\beta_{t+1}, \PC)<d_\H(S^\beta_{t}, \PC)$ and $S^\beta_{t+1}$ can be analyzed similarly as in the previous step.  
By repeating this process, we can show via a union bound that with probability at least $1-|\mathcal U^\beta_{t}|\cdot n^{-2}\geq 1-n^{-1}$, $d_\H(S^\beta_{\ell+1}, \PC)<d_\H(S^\beta_{\ell}, \PC)$ for all $\ell\geq\tau_0$ until $d_\H(S^\beta_{\ell}, \PC) = 0$. 
Consequently, $S^\beta_{t} = \PC$ with $t = \tau_0 + |\mathcal U^\beta_{\tau_0}|$. 
Combining the portion before $\tau_0$ and the portion after, we establish that the positive temperature chain takes $n+2k$ steps to find $\PC$ with probability at least $1-o(1)$ if $\beta =\Omega(\log n)$.
Meanwhile, there exists an absolute constant $c_2>0$ such that
\begin{align}\label{eq:min-gap-PC}
\min_{U\sim [k], U\neq [k]}H(U) - H([k])\stackrel{\eqref{obs}, \eqref{obs1}}{\geq} c_2k. 
\end{align}
Thus, upon reaching $\PC$, the probability of staying in $\PC$ in the next move is at least $1-e^{-(c_2k-1)\log n}$.
Thus, it will stay in $\PC$ for at least $e^{\frac{(c_2k-1)}{2}\log n}\geq n^{\frac{k}{C}}$ ($C>3/c_2$) steps with probability tending to one. 
\end{proof}

\section{Failure from the empty-set initialization}\label{sec:empty-init}

In this section, we prove Theorem~\ref{mainthm:failure}, showing that if the initialization were $S_0 = \emptyset$ instead of $S_0 = V$, then the gradient descent and positive temperature chain processes would fail whenever the planted clique has any sub-linear size. 
This matches the failure result of~\cite{chen2023almost} and demonstrates that while both $\emptyset$ and $V$ are natural uninformed initializations one could hope the MCMC succeeds from, it is crucial that the latter choice be made in this problem. 

We start with some intuition as to why starting from the empty set does not work, even with the relaxed Hamiltonian of~\eqref{myH}.
In its initial stages, started from $S_0 = \emptyset$, $S_t$ increases while remaining a clique until its size reaches $\Theta(\log n)$ and may not have any lower-energy neighbors that are cliques.
In this period, it is simply a greedy algorithm moving in the space of cliques, and the relaxation to non-cliques plays no role. 
When the gradient descent has reached the $O(\log n)$-size and starts to move off of cliques, it is already close to some near-clique local minimum with no overlap with $\PC$, and gets absorbed into that state.  

To make this a rigorous proof, we consider a coupled \ER random graph $G_0 \sim \G(n, \frac{1}{2})$ with $G$, where one first generates $G_0 \sim \G(n, \frac{1}{2})$ and then completes the missing edges between vertices of $[k]$ to obtain $G\sim \G(n, \frac{1}{2}, k)$. 
We can then run coupled gradient descent processes on both $G_0$ and $G$, denoting the corresponding processes by $\widetilde{S}_t$ and $S_t$.
The trajectories of $\widetilde{S}_t$ and $S_t$ can be fully coupled up to time $\tau$ where $\tau$ is the first time $S_i\cap [k] \ne \emptyset$. 
This is seen by noting that while $\widetilde{S}_i = S_i$, the energies of all possible transitions are identical, as they only depend on edges incident to $S_i$ (and not any internal edges of $\PC$). Thus the next steps of the processes are coupled identically.  
This observation leads to the following.

\begin{Lemma}\label{lm:change}
Let $S_t$ and $\widetilde{S}_t$ be the two Markov chains running on $G, G_0$ that are coupled in the way described above, and $\tau$ be the first time that $S_t$ intersects with $\PC$. 
For any $\gamma_1>0$ and $L = (\log n)^{\gamma_1}$, $\P(\tau>L) = 1-o(1)$. 
Consequently, $S_t=\widetilde{S}_t$ for all $t\leq L$ with probability $1-o(1)$. 
\end{Lemma}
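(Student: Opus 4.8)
The plan is to push the question onto the unplanted graph $G_0\sim\G(n,\frac12)$ and there close it with a crude union bound: a gradient‑descent trajectory of polylogarithmic length started from $\emptyset$ can only ever visit polylogarithmically many vertices, and by symmetry these are spread uniformly over $[n]$, so with high probability none of them lands in the fixed $k$‑set $\PC=[k]$ once $k\le n^{\alpha}$ with $\alpha<1$.

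First I would record the coupling consequence already noted above: as long as $S_{t-1}\cap\PC=\emptyset$, every transition out of $S_{t-1}$ is governed by edges of $G$ incident to $S_{t-1}$, none of which is an internal edge of $\PC$ (at most one endpoint lies in $\PC$), so these agree with the corresponding quantities in $G_0$; running the two chains with common tie‑breaking randomness therefore gives $S_t=\widetilde S_t$ for every $t\le\tau$, and in particular on $\{\tau\le L\}$ the vertex of $\PC$ first picked up at time $\tau$ also lies in $\widetilde S_\tau$. Hence
\[
\P(\tau\le L)\ \le\ \P\Big(\exists\,t\le L:\ \widetilde S_t\cap[k]\neq\emptyset\Big)\ =\ \P\big(V_L\cap[k]\neq\emptyset\big),\qquad V_L:=\bigcup_{t=0}^{L}\widetilde S_t .
\]
Next I would bound the right‑hand side. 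Since $\widetilde S_0=\emptyset$ and each step of the gradient descent toggles exactly one vertex, at most $L$ vertices are ever added, so $|V_L|\le L$ deterministically. The law of $G_0$ is invariant under relabelings of $[n]$ and the gradient descent (including the uniform tie‑breaking) is defined equivariantly; equivalently, by the symmetry of the planted model one may take $\PC$ to be a uniformly random $k$‑subset independent of $G_0$, hence of $\widetilde S$. Either way $V_L$ is exchangeable, so $\P(v\in V_L)=\E|V_L|/n\le L/n$ for each $v$, and a union bound over $v\in[k]$ gives $\P(V_L\cap[k]\neq\emptyset)\le kL/n$. With $k\le n^{\alpha}$ and $L=(\log n)^{\gamma_1}$ this is at most $n^{\alpha-1}(\log n)^{\gamma_1}=o(1)$, which proves $\P(\tau>L)=1-o(1)$; the stated consequence $S_t=\widetilde S_t$ for all $t\le L$ then follows from the coupling.

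There is essentially no hard step here — the whole content is the elementary observation that a polylog‑length trajectory explores only polylog many, symmetrically placed, vertices. The only points needing a little care are (i) checking that the coupling survives the single step at time $\tau$ itself, which it does because $|S_\tau\cap\PC|\le 1$ forces no internal $\PC$‑edge into play; and (ii) the harmless degeneracy that from $S_0=\emptyset$ all one‑vertex additions are energetically tied, so the initial move is, by convention, a uniformly random vertex addition — this only makes $V_L$ "more uniform" and does not affect the bound.
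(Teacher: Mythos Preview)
Your proof is correct and follows essentially the same approach as the paper: exchangeability of $G_0$ plus the observation that a length-$L$ trajectory from $\emptyset$ visits at most $L$ vertices, combined with a union bound giving $\P(\tau\le L)\le kL/n=o(1)$. The only cosmetic difference is that the paper organizes the union bound over time steps (bounding the probability at each step $t$ that the added vertex lies in $[k]$ by $k/|S_t^c|\le k/(n-L)$), whereas you organize it over the vertices of $[k]$ (bounding $\P(v\in V_L)\le L/n$ for each $v$); these are dual views of the same estimate.
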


That is to say, if the un-planted model's gradient descent is absorbed in $\text{poly}(\log n)$ time, then the two chains $S_t, \widetilde S_t$ are perfectly coupled and $ S_t$ is also absorbed.
This suggests studying the dynamics of $\widetilde{S}_t$ as a proxy for $S_t$ as long as it terminates in $\text{poly}(\log n)$ time. The next lemma verifies this statement.

\begin{Lemma}\label{lm:changee}
Let $\widetilde{S}_t$ be the gradient descent running on $G_0 \sim \G(n, \frac{1}{2})$. 
Denote the absorption time of $\widetilde{S}_t$ by $\widetilde{T}$. 
There exists  $\gamma_1>0$ such that with probability $1-o(1)$, $\widetilde{T}\leq (\log n)^{\gamma_1}$. 
Moreover, the terminal state $\widetilde S_\infty$ has no overlap with $\PC$, i.e. $\widetilde S_\infty \cap [k] = \emptyset$.
\end{Lemma}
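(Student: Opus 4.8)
The plan is to track the trajectory of $\widetilde S_t$ (run on $G_0\sim\G(n,\tfrac12)$ from $S_0=\emptyset$) in two stages, a \emph{clique-growth stage} followed by a \emph{near-clique absorption stage}, revealing the edges of $G_0$ lazily as the greedy rule queries them. Throughout, recall from \eqref{pl1}--\eqref{pl2} that, with $\kappa=\tfrac{\gamma}{1+\gamma}\in(\tfrac12,1)$, deleting $z$ from $U$ strictly lowers $H$ exactly when $\deg_U(z)<\kappa(|U|-1)$ and adding $x$ strictly lowers $H$ exactly when $\deg_U(x)>\kappa|U|$; so each gradient-descent step compares the largest interior deficiency against the largest exterior surplus, and on a clique only additions can decrease the energy.

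\emph{Clique-growth stage.} Starting from $\emptyset$, while $\widetilde S_t$ is a clique of size $m$ possessing a common neighbour, that neighbour has degree $m$ into $\widetilde S_t$ and adding it lowers $H$ by $m$, so $\widetilde S_{t+1}$ is a clique of size $m+1$. Under lazy revealing, the count $N_m$ of common neighbours of the clique grown so far obeys $N_{m+1}\mid N_m\sim\mathrm{Bin}(N_m-1,\tfrac12)$ with $N_1\sim\mathrm{Bin}(n-1,\tfrac12)$, since the edges consulted to extend the clique (between the new vertex and the other common neighbours) are fresh. Standard first- and second-moment estimates for this binomial-halving chain give $N_m=(1+o(1))n2^{-m}$ while it remains positive and $N_m=0$ once $m\ge\log_2 n+\omega(1)$, so with high probability this stage lasts $m^\ast=\log_2 n+O(\log\log n)$ steps and ends at a clique $C^\ast$ of size $m^\ast$ with no common neighbour (that $\widetilde S_t$ stays a clique up to then also uses that $G_0$'s largest clique has size $(2+o(1))\log_2 n$).

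\emph{Absorption stage; size and time bounds.} After $C^\ast$ the process may both add and remove. The crucial point is again lazy revealing: the greedy rule only inspects edges incident to already-selected vertices, so for an exterior vertex $y$ the increments of $\deg_{\widetilde S_t}(y)$ at add-steps are fresh $\mathrm{Bernoulli}(\tfrac12)$'s, independent of the selection; a martingale/maximal-inequality argument over the $n$ vertices then yields, with high probability and uniformly in $t$, $\max_{y\notin\widetilde S_t}\deg_{\widetilde S_t}(y)\le\tfrac12|\widetilde S_t|+O(\sqrt{|\widetilde S_t|\log n})$. Since this is $<\kappa|\widetilde S_t|$ once $|\widetilde S_t|\ge m_{\max}:=C(\gamma)\log n$ (using $\kappa>\tfrac12$ and taking $C$ large), no set of size $m_{\max}$ admits an energy-decreasing addition, and as additions are the only way to grow $|\widetilde S_t|$ we conclude $|\widetilde S_t|\le m_{\max}$ for all $t$. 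Confined to subsets of size $\le m_{\max}$, the Hamiltonian $H(U)=-(1+\gamma)|E(U)|+\gamma\binom{|U|}{2}$ attains $O(m_{\max}^3)$ distinct values and is bounded below, while $H(\widetilde S_t)$ strictly decreases until absorption; hence $\widetilde T\le O(m_{\max}^3)=\mathrm{polylog}(n)$, say with $\gamma_1=4$. The absorbing state $\widetilde S_\infty$ is then a strict local minimum of $H$ of size $\Theta(\log n)$ (a near-clique configuration of the type studied in Theorem~\ref{prop:many-local-minimum}). Finally, the whole run touches only the vertex set $V^\ast:=\bigcup_{t\le\widetilde T}\widetilde S_t$ with $|V^\ast|\le\widetilde T=\mathrm{polylog}(n)$, and since $G_0$ is exchangeable (with symmetric tie-breaking) $\P(i\in V^\ast)=\E|V^\ast|/n$ for each $i$; a union bound over $i\in[k]$ gives $\P(\widetilde S_\infty\cap[k]\ne\emptyset)\le k\,\E|V^\ast|/n=O(n^{\alpha-1}\mathrm{polylog}(n))=o(1)$ as $\alpha<1$, which is essentially the estimate also used in Lemma~\ref{lm:change}.

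I expect the main obstacle to be the absorption stage. Once the process leaves the clique, the relevant degrees hover just around the threshold $\kappa|\widetilde S_t|$, both additions and removals can be energy-decreasing, and each removal lowers the degrees of the removed vertex's $\widetilde S_t$-neighbours, so one must rule out long (possibly cascading) sequences of removals and, in tandem, make the concentration bound on exterior degrees hold uniformly over a time window that is not a priori known to be short. The clique-growth stage (binomial halving) and the exchangeability argument for $\widetilde S_\infty\cap\PC=\emptyset$ are routine.
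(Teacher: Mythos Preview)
Your proposal reaches the right conclusions but takes a different route from the paper in both the size bound and the time bound, and the two routes have opposite merits.

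\textbf{Size bound.} Your lazy-revealing/martingale argument for $\max_{y\notin\widetilde S_t}\deg_{\widetilde S_t}(y)\le\tfrac12|\widetilde S_t|+O(\sqrt{|\widetilde S_t|\log n})$ is the place where, as you yourself flag, the argument is incomplete: once the process removes and potentially re-adds vertices, the ``fresh Bernoulli'' justification for increments breaks down, and a naive union bound over all $O(\log n)$-sized subsets is too weak (the fluctuation becomes $O(m\sqrt{\log n})$, not $O(\sqrt{m\log n})$). The paper sidesteps this entirely: it does not control exterior degrees at all, but instead applies Lemma~\ref{lemma:000} (with $U'=U$) to get $|E_{G_0}(U)|<\tfrac{\gamma}{1+\gamma}\binom{|U|}{2}$ simultaneously for every $|U|\ge c\log n$, hence $H_{G_0}(U)>0$ there. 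Since $H_{G_0}(\widetilde S_0)=0$ and $H_{G_0}$ is strictly decreasing along the trajectory, $|\widetilde S_t|<c\log n$ for all $t$. This is a one-line consequence of an already-proved lemma and makes your entire clique-growth stage and the martingale machinery unnecessary.

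\textbf{Time bound.} Here your argument is \emph{simpler} than the paper's. You observe that on subsets of size at most $m_{\max}=O(\log n)$, the pair $(|E(U)|,|U|)$ takes at most $O(m_{\max}^3)$ values, so $H$ takes at most that many distinct values; strict monotonicity then forces $\widetilde T=O((\log n)^3)$. The paper instead proves that at least half of the steps before absorption have energy drop $\ge\tfrac12$, via a somewhat delicate case analysis of the fractional parts of $\tfrac{\gamma m}{1+\gamma}$ for rational $\gamma$, followed by a continuity argument for irrational $\gamma$. Your counting argument avoids this entirely and is worth keeping.

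\textbf{No overlap with $\PC$.} Your exchangeability/union-bound argument is essentially the same as the paper's use of Lemma~\ref{lm:change}.

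In short: drop the two-stage decomposition and the martingale size bound, invoke Lemma~\ref{lemma:000} to get $|\widetilde S_t|\le c\log n$ directly, and then your distinct-values counting gives $\widetilde T=\mathrm{polylog}(n)$ more cleanly than the paper's own proof.
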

The proof of Lemma \ref{lm:changee} follows by noting that the global minima in the \ER must have polylog energy at most, and that in a constant fraction of its steps before $\widetilde T$, $H_{G_0}(S_t)$ decreases by an $\Omega(1)$ amount. This latter step is actually somewhat delicate, since for general $\gamma$, the energy gaps in $H$ do not have uniform ($n$ independent) lower bounds. The proofs of both Lemmas \ref{lm:change} and \ref{lm:changee} are given in Section \ref{sec:proofs}.

\begin{proof}[\textbf{\emph{Proof of Theorem \ref{mainthm:failure}}}]
By Lemma~\ref{lm:changee}, with probability $1-o(1)$, the process $\widetilde S_t$ is absorbed before time $L$ from Lemma~\ref{lm:change}. On that event, by Lemma~\ref{lm:change}, $\tau >L$ and $S_{\widetilde T} = \widetilde S_{\widetilde T}$, and finally this is also an absorbing state for $S_{\widetilde T}$ since while $S_{\widetilde T} \cap [k] = \emptyset$, under the coupling of $(G_0,G)$, its energy and those of all its neighbors on the hypercube are identical to those for $\widetilde S_{\widetilde T}$.  
\end{proof}

\section{Deferred proofs}\label{sec:proofs}
In this section, we include the technical proofs that were deferred from the above sections. 
\subsection{Degree concentration}
\begin{proof}[\textbf{\emph{Proof of Lemma \ref{lemma:000}}}]
The proof follows from a union bound argument. Write $\deg(U', U)$ as 
\begin{align*}
\deg(U', U) = \sum_{x\in U', z\in U}\mathbb I\{(x, z)\in E\} = \sum_{x\in U', z\in U\setminus U'}\mathbb I\{(x, z)\in E\} + 2\sum_{x, z\in U', z\neq x}\mathbb I\{(x, z)\in E\},
\end{align*}
which is a sum of $n_{U'}:=|U'|(|U|-|U'|) + {|U'|\choose 2}$ independent random variables bounded by $2$. 
By the Chernoff bound, for $\delta\in (0,1)$, 
\begin{align*}
\P\left(|\deg(U', U)-\E[\deg(U', U)]|\geq \delta\E[\deg(U', U)]\right)\leq 2e^{-\frac{\delta^2\E[\deg(U', U)]}{8}}\leq 2e^{-\frac{\delta^2n_{U'}}{16}},
\end{align*}
where the last inequality follows from $\E[\deg(U', U)]\geq \frac{n_{U'}}{2}$. 
Taking $\delta\geq\sqrt{\frac{96\log n}{|U'|}}$ yields 
\begin{align}
\P\left(|\deg(U', U)-\E[\deg(U', U)]|\geq \delta\E[\deg(U', U)]\right)\leq 2n^{-3|U|}. \label{bad}
\end{align}
Taking a union bound of \eqref{bad} shows that $|\deg(U', U)-\E[\deg(U', U)]|\geq \delta\E[\deg(U', U)]$ holds simultaneously for all $U'\subseteq U\subseteq [n]$ with $|U'|\geq c\log n$ with probability at least
\begin{align*}
\sum_{U\subseteq [n]: |U|\geq c\log n}\sum_{U'\subseteq U: c\log n\leq |U'|\leq |U|}2n^{-3|U|}\leq \sum_{r\geq c\log n}{n\choose r}\cdot 2^rn^{-3r}\,.
\end{align*}
This is at most $n^{-\alpha}$ for any $\alpha>0$ by taking $c$ large enough.
In particular, this proves the desired bound of~\eqref{eq:degree-concentration}. The second part of the lemma follows immediately from the fact that the minimum degree has to be at most the average degree of $U'$ in $U$.  
\end{proof}

\begin{proof}[\textbf{\emph{Proof of Lemma \ref{lemma:000+}}}]
The proof is similar to the proof of Lemma \ref{lemma:000}.  
For fixed $U$, choosing $U' = U\cap [k]^c$ as in the previous proof and applying the Chernoff bound, we have for $\delta\in (0, 1)$, 
\begin{align*}
\P\left(|\deg(U', U)-\E[\deg(U', U)]|\geq \delta\E[\deg(U', U)]\right)\leq 2e^{-\frac{\delta^2\E[\deg(U', U)]}{8}}\leq 2e^{-\frac{\delta^2n_{U'}}{16}},
\end{align*}
where $n_{U'}:=|U'|(|U|-|U'|) + {|U'|\choose 2}$. 
To obtain a bound for all $U$ with $|U\cap [k]^c|\geq\frac{n}{\log n}$ (the total number of such $U$ is bounded by $2^n$) with probability at least $1-n^{-2}$, we require $2^{n}\cdot 2e^{-\frac{\delta^2n_{U'}}{16}} = e^{-\frac{\delta^2n_{U'}}{16} + (n+1)\log 2}\leq n^{-2}$, which holds if we choose $\delta = \sqrt{\frac{16(2\log n+(n+1)\log 2)}{n_{U'}}}$.
In this case, 
\begin{align*}
\min_{x\in U'}|E(x, U)|&\leq\frac{\deg(U', U)}{|U'|}\leq (1+\delta)\frac{\E[\deg(U', U)]}{|U'|}\\
& = \frac{1}{2}(|U|-1)+\sqrt{\frac{16(2\log n+(n+1)\log 2)(|U|-1)^2}{n_{U'}}}\\
&\leq \frac{1}{2}(|U|-1)+\rho\sqrt{n},
\end{align*}
where the last step holds for some constant $\rho>0$ since 
\begin{align*}
(|U|-1)^2 \leq (|U'|+k)^2\stackrel{|U'|\geq \frac{n}{\log n}> k}{\leq} 4|U'|^2  = O( {|U'|\choose 2})\,,
\end{align*}
and, by definition, $n_{U'}\ge \binom{|U'|}{2}$.
\end{proof}

\subsection{Deferred proofs for success from full initialization}

\begin{proof}[\textbf{\emph{Proof of Lemma \ref{lemma:monotone}}}]
Choose $c_0$ at least as large as the constant $c$ in Lemma \ref{lemma:000}, and furthermore large enough that 
\begin{align}
\omega:=\frac{\left(1+\delta\right)\left(\gamma + 1\right)}{2(\gamma - 1)}\leq \frac{\left(1+\sqrt{\frac{96}{c}}\right)\left(\gamma + 1\right)}{2(\gamma - 1)}<1. \label{goodone}
\end{align} 
Note that such a $c$ always exists for $\gamma>3$. 
By Lemma \ref{lemma:000}, specifically~\eqref{panger} applied with $U = U$ and $U' = U \cap [k]^c$, there exists $x\in U\cap [k]^c$ such that 
\begin{align}
H(U\setminus\{x\}) - H(U)\stackrel{\eqref{pl2}}{\leq} \frac{1}{2}\left(1+\delta\right)\left(\gamma + 1\right)(|U|-1)-\gamma(|U|-1)\stackrel{\eqref{goodone}}{\leq}-(|U|-1)< -1. \label{lkhs}
\end{align}
For the same $x$ and any $z\in U^c$, 
\begin{align}
H(U\setminus\{x\})-H(U\cup\{z\})&\stackrel{\eqref{pl1}, \eqref{lkhs}}{\leq} \left[-\gamma + 1 + \frac{1}{2}(1+\delta)(\gamma + 1)\right](|U|-1) + \gamma\nonumber\\
&\leq -(1-\omega)(|U|-1)+\gamma\nonumber\\
&\leq -(1-\omega)(c_0\log n-1)+\gamma<-1. \label{liop}
\end{align}
Combining \eqref{lkhs} and \eqref{liop} finishes the proof. 
\end{proof}

\begin{proof}[\textbf{\emph{Proof of Lemma \ref{lemma:never}}}]
Assume the event of Lemma \ref{lemma:monotone} holds for some $c_0>0$. Recall that $S_t'$ is a realization of the gradient descent chain initialized from $S_0'=S$. 
If $|S_t' \cap [k]^c|>c_0\log n$, then $|S_{t+1}'|<|S_t'|$. 
Meanwhile, a similar calculation as \eqref{obs} shows 
\begin{align}
&H(S_{t}'\setminus\{x\})-H(S_t') \geq (|S_t'\cap [k]|-1) - \gamma |S_t'\cap [k]^c|\geq 1&\forall x\in S_t'\cap [k], \label{eg1}
\end{align}
i.e., none of the vertices in $S_t' \cap [k]$ will be removed in the first step. 
Therefore, the removed vertex is from $S_t'\cap [k]^c$, which implies $d_\H(S_{t+1}', \PC)<d_\H(S_t', \PC)$. 

If $|S_t'\cap [k]^c|\leq c_0\log n$, \eqref{eg1} still holds. 
Meanwhile, under the assumption $|S_t'\cap [k]|\geq (1-\xi)k$ with $\xi<1-\frac{1}{2}(1+\frac{1}{\gamma})$, letting $\zeta = \frac{1-\xi - \frac{1}{2}(1+\frac{1}{\gamma})}{2(1+\frac{1}{\gamma})}>0$, 
we have for any $x\in (S_t')^c\cap [k]^c$, 
\begin{align}
H(S_t'\cup\{x\}) - H(S_t') &= \gamma |S_t'| - (1+\gamma)|E(x, S_t')| \geq \gamma |S_t'\cap [k]| - (1+\gamma)|E(x, [k])|- |E(x, S_t'\cap [k]^c)|\nonumber\\
&\geq \gamma(1-\xi) k - (1+\gamma)\left(\frac{1}{2}+\zeta\right)k - c_0\log n\nonumber\\
& \geq \frac{\gamma(1-\xi)}{2}k - c_0\log n>1,\label{eg2}
\end{align}
where the second inequality holds with high probability $1-o(1)$ as a result of standard concentration bounds for $|E(x, [k])|$ for all $x\in [k]^c$, and the last step uses $k>\frac{2c_0}{\gamma (1-\xi)}\log n$. 
This shows $|S_t'\cap [k]^c|$ is nonincreasing.
Since $S_t'$ is in a regime where $H$ has no local minima (Remark~\ref{rem:no-complexity-above}), either $|S_t'\cap [k]^c|$ decreases or $S_t'\cap [k]$ increases. 
In particular, $d_\H(S_{t+1}', \PC)<d_\H(S_{t}', \PC)$. 
The proof is finished by noting such a process can last at most $d_\H(S_t', \PC)$ steps, which can be bounded as
\begin{align*}
d_\H(S_t', \PC) = |S_t'\cap [k]^c|+|(S_t')^c\cap [k]|\leq \frac{1-\xi}{\gamma}|S_t'\cap [k]| + k\leq 2k.
\end{align*}
The corresponding energy gap estimate of \eqref{egs} follows by combining \eqref{eg1} and \eqref{eg2}.  
\end{proof}

\subsection{Deferred proofs for failure from empty initialization}
\begin{proof}[\textbf{\emph{Proof of Lemma \ref{lm:change}}}]
Assuming $t<\tau$ so that $S_t = \widetilde S_t$ and $S_t\cap [k] = \emptyset$, then the probability of $S_{t+1}\cap [k]\neq\emptyset$ is equal to the probability that a vertex is added and that vertex is from $[k]$. In particular, the probability that $\tau\le L$ is bounded by the probability that in one of the first $L$ steps, $\widetilde S_t$ adds a vertex in $[k]$.
By the exchangeability of vertices in $G_0$, this probability is upper bounded by $\frac{k}{|S_t^c|}\leq \frac{k}{n-L}$. 
The desired result follows by taking a union bound over $t\leq L$. 
\end{proof}

\begin{proof}[\textbf{\emph{Proof of Lemma \ref{lm:changee}}}]
Throughout the proof, we use notation $E_{G_0}$, $H_{G_0}$, etc., to emphasize that we are working under the \ER model $G_0$. 
We start by noting that Lemma \ref{lemma:000} does not depend on $k$ so it also applies to $G_0$. 
As a result, there exists $c(\gamma)>0$ such that with probability at least $1-o(1)$, for all $U\subseteq [n]$ with $|U|\geq c\log n$ and $\delta = \sqrt{\frac{96\log n}{|U|}}\leq\sqrt{\frac{96}{c}}<\frac{\gamma -1}{\gamma+1}$, 
\begin{align*}
|E_{G_0}(U)| = \frac{1}{2}\deg_{G_0}(U, U)\leq (1+\delta)\E[\deg_{G_0}(U, U)]<\frac{\gamma}{1+\gamma}{|U|\choose 2}.
\end{align*}
The Hamiltonian of such $U$'s are lower bounded by 
\begin{align*}
&H_{G_0}(U) = \gamma\left[{|U|\choose 2}-\left(1+\frac{1}{\gamma}\right)|E_{G_0}(U)|\right]>0&|U|\geq c\log n. 
\end{align*}
Since $H_{G_0}(\widetilde{S}_t)$ is strictly decreasing prior to $\widetilde T$, and $H_{G_0}(\widetilde{S}_0) = 0$, $|\widetilde{S}_t|\leq c\log n$ for all $t\leq \widetilde{T}$. 
Consequently, 
\begin{align}
\min_{t\leq \widetilde{T}}H_{G_0}(\widetilde{S}_t)\geq -{c\log n\choose 2}\geq -\frac{(c\log n)^2}{2}.\label{ak}
\end{align}
Meanwhile, we claim that $|\{t<\widetilde{T}, H_{G_0}(\widetilde{S}_{t+1}) - H_{G_0}(\widetilde{S}_t)\leq -\frac{1}{2}\}|\geq\frac{\widetilde{T}}{2}$.  
This combined with \eqref{ak} shows that $\widetilde{T}\leq 2(c\log n)^2$. 
In particular, since $\widetilde T\le L$ (for large choice of $\gamma_1$), the probability that the absorbed state $\widetilde{S}_{\widetilde{T}}$ has no intersection with $\PC$ is at least $1-o(1)$. 

It remains to verify the claim to finish the proof.
Without loss of generality, we first assume that $\gamma$ is a rational number, i.e., $\gamma = \frac{q_1}{q_2}$ for some $q_1>q_2$ and $q_2$ is a prime number, and then extend to the general case of real numbers via a continuity argument. 
For any $t<\widetilde{T}-1$, suppose that $\widetilde{S}_{t+1}$ is obtained from $\widetilde{S}_{t}$ ($|\widetilde{S}_t| = m$) by adding a vertex $x\in \widetilde{S}_{t}^c$, i.e. $\widetilde{S}_{t+1} = \widetilde{S}_{t}\cup\{x\}$.
By definition of gradient descent, 
\begin{align*}
H_{G_0}(\widetilde{S}_{t}\cup\{x\})-H_{G_0}(\widetilde{S}_{t})<0\stackrel{\eqref{pl1}}{\Longrightarrow} |E_{G_0}(x, \widetilde{S}_{t})|>\frac{\gamma m}{1+\gamma}.
\end{align*}
If $\frac{\gamma m}{1+\gamma}$ is an integer, then $|E_{G_0}(x, \widetilde{S}_{t})|\geq\frac{\gamma m}{1+\gamma}+1$. 
In this case, it is easy to check $H_{G_0}(\widetilde{S}_{t}\cup\{x\})-H_{G_0}(\widetilde{S}_{t})<-(1+\gamma)<-1$. 
When $\frac{\gamma m}{1+\gamma}$ is not an integer, we can represent it as the difference between its ceiling part and the decimals:
\begin{align*}
&\frac{\gamma m}{1+\gamma} = \frac{q_1 m}{q_1+q_2} = s - \frac{r}{q_1+q_2}& r<q_1+q_2.
\end{align*}
In this case, we have 
\begin{align*}
&H_{G_0}(\widetilde{S}_{t}\cup\{x\})-H_{G_0}(\widetilde{S}_{t}) \leq -(1+\gamma)\left(\left\lceil\frac{\gamma m}{\gamma+1}\right\rceil - \frac{\gamma m}{\gamma+1}\right)\\=&\ -\frac{q_1+q_2}{q_2}\left(\left\lceil\frac{q_1 m}{q_1+q_2}\right\rceil - \frac{q_1 m}{q_1+q_2}\right)= -\frac{r}{q_2}.
\end{align*}
A key observation is the following. If $r\geq \frac{q_2}{2}$, then $H_{G_0}(\widetilde{S}_{t}\cup\{x\})-H_{G_0}(\widetilde{S}_{t})<-\frac{1}{2}$. 
Otherwise, a large energy decrease must occur in the next step, i.e., $H_{G_0}(\widetilde{S}_{t+2})-H_{G_0}(\widetilde{S}_{t+1})<-1$.
To see this, note there are two cases that might be happening at $t+2$:
\begin{itemize}
\item If $\widetilde{S}_{t+2} = \widetilde{S}_{t+1}\cup\{z\}$ for some $z\in \widetilde{S}^c_{t+1}$, by a similar computation to the above, 
\begin{align*}
H_{G_0}(\widetilde{S}_{t+1}\cup\{z\})-H_{G_0}(\widetilde{S}_{t+1}) &\leq -\frac{q_1+q_2}{q_2}\left(\left\lceil\frac{q_1 (m+1)}{q_1+q_2}\right\rceil - \frac{q_1 (m+1)}{q_1+q_2}\right).
\end{align*}
Note 
\begin{align*}
&\frac{q_1 (m+1)}{q_1+q_2} = s - \frac{r}{q_1+q_2} + \frac{q_1}{q_1+q_2} = (s+1) - \frac{q_2+r}{q_1+q_2},
\end{align*}
where the subtracted fraction in the last equality is strictly less than $1$ since $r<\frac{q_2}{2}<q_1$. 
Consequently, 
\begin{align*}
    H_{G_0}(\widetilde{S}_{t+1}\cup \{z\}) - H_{G_0}(\widetilde{S}_{t+1}) \le - \frac{q_1 + q_2}{q_2}\left(\frac{q_2 + r}{q_1+q_2}\right) = - \frac{q_2 +r}{q_2} \le -1.
\end{align*}
\item If $\widetilde{S}_{t+2} = \widetilde{S}_{t+1}\setminus\{z\}$ for some $z\in \widetilde{S}_{t+1}$, then
\begin{align*}
H_{G_0}(\widetilde{S}_{t+1}\setminus\{z\})-H_{G_0}(\widetilde{S}_{t+1}) &\stackrel{\eqref{pl2}}{\leq} -\frac{q_1+q_2}{q_2}\left(\frac{q_1 m}{q_1+q_2} - \left\lfloor\frac{q_1 m}{q_1+q_2}\right\rfloor\right) = -\frac{q_1+q_2-r}{q_2}<-1.
\end{align*}
\end{itemize}
The case where $\widetilde{S}_{t+1} = \widetilde{S}_{t}\setminus\{x\}$ for some $x\in\widetilde{S}_{t}$ can be reasoned similarly. 
Putting the above discussions together, we conclude that either the energy decrease from $t$ to $t+1$ is at least $-\frac{1}{2}$, or the energy decrease from $t+1$ to $t+2$ is at least $-\frac{1}{2}$. 
This implies that in at least half of the steps of the gradient descent prior to termination, there is an energy decrease of $-\frac{1}{2}$ (uniformly over all rational $\gamma$). For any $\gamma$ (possibly irrational), there is a rational $\gamma'$ sufficiently close to $\gamma$ such that the gradient descent moves are all identical for at least $n$ steps---note that this $\gamma'$, and in particular its denominator, will depend on $n$. But the uniformity of the above estimate over the denominator in $\gamma'$ implies that also for the  $\gamma$-dynamics in the first $n$ steps, half of the steps prior to absorption lower the energy by at least $\frac{1}{2}$. 
\end{proof}

\subsection{Proof of landscape complexity}
In this section, we prove Theorem~\ref{prop:many-local-minimum}. 
By definition, $U\subseteq [n]$ is a local minimizer of $H$ if the following conditions hold: 
\begin{align}
\max_{x\in U^c}|E(x, U)|&<\kappa |U|\label{loc1}\\
\min_{x\in U}|E(x, U)|&> \kappa (|U|-1)&\kappa:= \frac{\gamma}{1+\gamma}.\label{loc2}
\end{align} 
Computing the number of local minimizers amounts to counting the number of $U$ satisfying the above conditions. 
More specifically, we count the number of minimizers with fixed sizes, i.e., $|U| = m$ for some $m$.
We have seen in Theorem \ref{thm:global optimum} that local minimizers can only exist in the regime $m= O(\log n)$. 
Hence, we parameterize $m = c\log_2 n$ for some constant $c$, where we use the base $2$ to simplify computation. 
To obtain a lower bound on the number of local minimizers, we only consider $U$'s with no intersection with $\PC$, which take a dominant portion of $m$-subsets of $[n]$ (${n-k\choose m}/{n\choose m}$) and are easier to analyze due to the independence assumptions. 
The following lemma shows that for such $U$'s with no intersection with $\PC$, it suffices to check condition \eqref{loc2} only.  
\begin{Lemma}\label{lm:out}
With probability tending to one, \eqref{loc1} holds for all $U$ with $|U| = c\log_2 n$ for $c>\frac{1}{1-h(\kappa)}$ and $U\cap\PC=\emptyset$.
\end{Lemma}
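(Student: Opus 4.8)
The plan is a single union bound over the vertices of $U^c$, with an error term that is insensitive to which $U$ is chosen. Fix any $U\subseteq[n]$ with $U\cap\PC=\emptyset$ and $|U|=m:=c\log_2 n$. For a fixed $x\in U^c$, the quantity $|E(x,U)|$ counts the neighbours of $x$ lying in $U$; since $U\cap\PC=\emptyset$, every potential edge $\{x,u\}$ with $u\in U$ has an endpoint ($u$) outside $\PC$, hence is not forced by the planting and is present independently with probability $\tfrac12$. Consequently $|E(x,U)|$ is distributed as $\mathrm{Bin}(m,\tfrac12)$. This is the only place the hypothesis $U\cap\PC=\emptyset$ is used, and it is exactly what renders the relevant edges independent fair coins.

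Next I would bound the upper tail of this binomial and sum over $x$. Because $\kappa=\tfrac{\gamma}{1+\gamma}>\tfrac12$, the standard entropy estimate $\sum_{j\ge\lceil\kappa m\rceil}\binom{m}{j}\le 2^{m h(\kappa)}$ applies, so
\[
\P\big(|E(x,U)|\ge\kappa m\big)=2^{-m}\sum_{j\ge\lceil\kappa m\rceil}\binom{m}{j}\le 2^{-m(1-h(\kappa))}=n^{-c(1-h(\kappa))}.
\]
Taking a union bound over the at most $n$ vertices $x\in U^c$ gives
\[
\P\Big(\max_{x\in U^c}|E(x,U)|\ge\kappa|U|\Big)\le n^{\,1-c(1-h(\kappa))},
\]
which tends to $0$ precisely because $c>\tfrac{1}{1-h(\kappa)}$. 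The point is that this bound depends on $U$ only through $m=|U|$, so it is uniform over all admissible $U$: for every such $U$, condition~\eqref{loc1} holds with probability $1-o(1)$. This is exactly the input needed for the first moment in Theorem~\ref{prop:many-local-minimum}: \eqref{loc1} is an event about edges incident to $U^c$ and \eqref{loc2} about edges internal to $U$, hence the two are independent, and the uniform $1-o(1)$ obtained here lets one replace $\P\big(\eqref{loc1}\text{ and }\eqref{loc2}\big)$ by $(1-o(1))\,\P\big(\eqref{loc2}\big)$ inside the sum over $\PC$-disjoint sets $U$.

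There is no genuinely hard step; the one thing to be careful about is the order of quantifiers. One cannot promote this to a single graph event holding simultaneously for all $U$ by a further union bound: there are $\binom{n-k}{m}=n^{\Theta(\log n)}$ admissible sets $U$, which overwhelms the only polynomially small per-$U$ failure probability $n^{1-c(1-h(\kappa))}$ (indeed, any set of $m$ common neighbours of a fixed vertex shows a fully uniform ``\eqref{loc1} for all $U$'' cannot hold). Accordingly the lemma is to be read, and will be used, as the uniform-in-$U$ estimate ``for every admissible $U$, $\P(\eqref{loc1})=1-o(1)$'' with the $o(1)$ independent of $U$. The only remaining care is that the exponent $1-h(\kappa)$ in the entropy tail bound is what produces the threshold $c>\tfrac{1}{1-h(\kappa)}$, so no slack may be lost there; the requirement $\gamma>9$ (equivalently $h(\kappa)<\tfrac12$, i.e.\ $\tfrac{1}{1-h(\kappa)}<2$) is what makes this threshold compatible with the range $c<2$ used in Theorem~\ref{prop:many-local-minimum}.
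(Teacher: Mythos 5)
Your per-$U$ computation is correct: for a fixed $U$ of size $m=c\log_2 n$ disjoint from $\PC$, each $|E(x,U)|$ with $x\in U^c$ is $\mathrm{Bin}(m,\tfrac12)$, the entropy bound gives $\P(|E(x,U)|\ge\kappa m)\le 2^{-(1-h(\kappa)+o(1))m}=n^{-c(1-h(\kappa))+o(1)}$, and the union bound over the $n-m$ vertices of $U^c$ makes the failure probability $n^{1-c(1-h(\kappa))+o(1)}=o(1)$ exactly when $c>\frac{1}{1-h(\kappa)}$. But you and the paper part ways on the quantifier, and the discrepancy is substantive rather than interpretive. The paper's proof asserts the simultaneous statement: it sets $X(m)=\sum_U W(U)$ with $W(U)$ the indicator that $U$ violates \eqref{loc1}, claims $\E[X(m)]=o(1)$, and concludes by Markov that with high probability \emph{no} admissible $U$ violates \eqref{loc1}. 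Your objection to that stronger statement appears to be valid: taking $U$ to be any $m$ non-$\PC$ common neighbours of a fixed vertex $v\notin U$ gives $|E(v,U)|=m>\kappa m$, and such $U$ exist as soon as some vertex has $m$ neighbours outside $\PC$. Consistently with this, the quantity the paper bounds, $\prod_{x\in U^c}\P(|E(x,U)|\le\kappa m)$, is the probability that $U$ \emph{satisfies} \eqref{loc1}, not that it fails; it equals $\exp(-n^{1-c(1-h(\kappa))+o(1)})$, which tends to $1$ in the stated regime $c>\frac{1}{1-h(\kappa)}$ and cannot absorb the factor $\binom{n-k}{m}=n^{\Theta(\log n)}$, so the claimed $\E[X(m)]=o(1)$ does not follow (the bound would only be useful for $c<\frac{1}{1-h(\kappa)}$, the opposite regime).

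Your reading --- a bound per fixed $U$, uniform in $U$ only through $|U|$ --- is therefore the version that is both true and usable, and your remark about its role in Theorem \ref{prop:many-local-minimum} is the right one, with one small addition. Since Paley--Zygmund there controls $Q(m)=\sum_U Z(U)$ (condition \eqref{loc2} only), one must separately bound the number of $U$ counted by $Q(m)$ that violate \eqref{loc1}. By the independence you note (edges internal to $U$ versus edges leaving $U$), this count has expectation $\sum_U\P(Z(U)=1)\,\P(\text{\eqref{loc1} fails for }U)\le n^{-\delta}\,\E[Q(m)]$ for some $\delta>0$, and Markov's inequality then shows that on the Paley--Zygmund event the number of genuine local minima is still at least $(\e-o(1))\E[Q(m)]$; the paper's choice $\e=1/\log_2 n$ is harmless against the polynomially small $n^{-\delta}$. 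With that patch your argument is complete, and it is the statement Lemma \ref{lm:out} needs to make for the theorem to stand.
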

\begin{proof}
For every $U\subseteq [n]$, let $W(U) = \mathbb I\{\text{$U$ does not satisfy \eqref{loc1}}\}$, and define $X(m)$ as $X(m) = \sum_{U: |U| = m, U\cap [k] = \emptyset}W(U)$. 
By a rapid calculation, one can show that for $|U| = m = c\log_2 n$, with $U \cap \PC = \emptyset$
\begin{align*}
\P(W(U) = 1) &= \prod_{x\in U^c}\P(|E(x, U)|\leq \kappa m) = \left(1- \P(|E(x, U)|> \kappa m)\right)^{n-m}\\
&\stackrel{\kappa>\frac{1}{2}}{\leq} \Big[1- \big(\tfrac{1}{2}\big)^{(1-h(\kappa)+o(1))m}\Big]^{n-m}\leq e^{-\frac{n-c\log_2 n}{n^{c(1-h(\kappa)+o(1))}}}.
\end{align*} 
Hence, when $c>\frac{1}{1-h(\kappa)}$, 
\begin{align*}
\E[X(m)]\leq{n-k\choose c\log_2 n}e^{-\frac{n-c\log_2 n}{n^{c(1-h(\kappa)+o(1))}}} \leq e^{-\frac{n-c\log_2 n}{n^{c(1-h(\kappa)+o(1))}}+c(\log_2 n)^2} = o(1).
\end{align*}
The desired result follows by applying Markov's inequality. 
\end{proof}

We are now ready to prove Theorem \ref{prop:many-local-minimum}.

\begin{proof}[\textbf{\emph{Proof of Theorem \ref{prop:many-local-minimum}}}]
Throughout the proof we ignore the integer rounding effects to simplify discussion. 
Fixing $\frac{1}{1-h(\kappa)}<c<2$, we count the number of size-$m$ subsets with no intersection with $\PC$ and satisfying conditions \eqref{loc1} and \eqref{loc2}. 
By Lemma \ref{lm:out}, \eqref{loc1} holds for all size-$m$ subsets that do not intersect with $\PC$ with probability tending to one, so we only need to check \eqref{loc2}. 
Let $Z(U) = \mathbb I\{\text{$U$ satisfies \eqref{loc2}}\}$, and we are interested in obtaining a lower bound for $Q(m) = \sum_{U: |U|=m, U\cap [k] = \emptyset}Z(U)$, for which we apply a second-moment estimate. 
We first note
\begin{align}
\P\left(Z(U) = 1\right)\geq\P\left(\text{$U$ is a $\kappa m$-regular graph}\right).
\end{align}
To count the number of $\kappa m$-regular graphs among all graph configurations on $m$ vertices, we appeal to a result in \cite{mckay1990asymptotic} that states this number is asymptotically 
\begin{align*}
\sqrt{2}e^{\frac{1}{4}}\left(\frac{1}{2}\right)^{h(\kappa){m\choose 2}}{m-1\choose\kappa m}^m = \sqrt{2}e^{\frac{1}{4}}2^{(h(\kappa)+o(1)){m\choose 2}}.
\end{align*}
Consequently, for all sufficiently large $n$ (hence $m$), 
\begin{align}
\E[Z(U)] = \P\left(Z(U) = 1\right)\geq\left(\frac{1}{2}\right)^{(1-(h(\kappa)+o(1)){m\choose 2}},\label{guoer}
\end{align}
and
\begin{align*}
\E[Q(m)] \geq {n-k\choose m}\left(\frac{1}{2}\right)^{(1-(h(\kappa)+o(1)){m\choose 2}}&\geq\left(\frac{1}{2}\right)^{(1-(h(\kappa)+o(1)){m\choose 2}-m\log_2 (n-k-m)}\\
& = n^{\left(1-\frac{(1-h(\kappa))m}{2\log_2 n} +o(1)\right)m}.
\end{align*}
We next compute the second moment of $Q(m)$. 
\begin{align*}
\E[Q(m)^2] = \sum_{r=0}^m\underbrace{{n-k\choose m}{m\choose r}{n-k-m\choose m-r}}_{:=\xi_r}\underbrace{\E[Z(V_1)Z(V_2)]}_{:=\zeta_r},
\end{align*}
where $(V_1, V_2)$ in the summand is any fixed pair of vertex subsets $V_1,V_2$ with $|V_1\cap V_2| = r, |V_1|=|V_2| = m, V_1\cap [k] = V_2\cap [k] = \emptyset$ (the expectation is independent of the choice of $(V_1, V_2)$ due to exchangeability). 
For $\xi_r$, one can easily check that 
\begin{align}
&\frac{\xi_r}{\xi_0} = (1+o(1))\frac{m^r}{n^r}&r\geq 0.
\end{align}
To analyze $\zeta_r$, we have the following observations. 
For fixed pair $(V_1, V_2)$, we write $\E[Z(V_1) Z(V_2)]= \E[Z(V_1)\E[Z(V_2) \mid E(V_1)]]$ where the conditioning is on the edge configuration on $V_1$. 
By monotonicity of the \ER random graph model, and the fact that~\eqref{loc2} is an increasing event on the edge-set, the conditional expectation $\E[Z(V_2) \mid E(V_1)]$ is maximized when all edges of $V_1$ are present, whence it becomes measurable with respect to the edges on $E\setminus E(V_1)$ and therefore conditionally independent of $Z(V_1)$. As such, 
\begin{align*}
 \mathbb E[Z(V_1)Z(V_2)] \le \mathbb E[Z(V_1)] \mathbb E\Big[Z(V_2) \mid |E(V_1 \cap V_2)| = {r\choose 2}\Big]. 
\end{align*}
By the inequality $\P(A \mid B) \le \P(A) \P(B)^{-1}$, since the event that we have conditioned on has probability $2^{- \binom{r}{2}}$, we get 
\begin{align*}
\zeta_r = \mathbb E[Z(V_1)Z(V_2)]\le \mathbb E[Z(V_1)]\mathbb E[Z(V_2)] 2^{\binom{r}{2}}.
\end{align*}
Consequently, since $\zeta_0 = \E[Z(V_1)]\E[Z(V_2)]$, we have 
\begin{align*}
\frac{\xi_r\zeta_r}{\xi_0\zeta_0}&\leq\frac{m^r2^{{r\choose 2}}}{n^r} = 2^{r(\log_2 m + \frac{r-1}{2}-\log_2 n)}.
\end{align*}
Since $r \le m \le c \log_2 n$ for $c<2$, for $n$ large the quantity in the exponent is negative for $n$ large; thus $r \ge 1$ gives 
\begin{align}
   \frac{\xi_r\zeta_r}{\xi_0\zeta_0} \le  2^{\log_2 m + \frac{c}{2}\log_2 n-\log_2 n}\stackrel{c<2}{\leq} n^{-\frac{1}{2}(1-\frac{c}{2})}.\label{pz}
\end{align}
The proof is finished by appealing to the Paley--Zygmund inequality: for any $\e\in [0, 1]$, we have 
\begin{align*}
\P(Q(m)>\e\E[Q(m)]) & >(1-\e)^2\frac{\E[Q(m)]^2}{\E[Q(m)^2]}= (1-\e)^2\frac{{n-k\choose m}^2\zeta_0}{\sum_{r\in [m]}\xi_r\zeta_r}\\
\geq&\ (1-\e)^2\frac{{n-k\choose m}^2\zeta_0}{(1+(m-1)n^{-\frac{1}{2}(1-\frac{c}{2})})\xi_0\zeta_0}\geq (1-o(1))(1-\e)^2. 
\end{align*}
The proof is completed by taking $\e = \frac{1}{\log_2 n}$.  
\end{proof}

\section{Extension to the robust case}\label{sec:rob}

We now consider a more general setting where a subset of non-PC vertices is allowed to have a higher edge probability.   
Let $\frac{1}{2}\leq q<1$ be fixed parameters, and $k \geq n^{\alpha}, m \geq n^{\lambda}$ for some $0<\alpha, \lambda<1$.  
A contaminated planted clique model $\G(n, \frac{1}{2}, q, k, m)$ with vertices $[n]$ is defined as follows.
One first uniformly samples a $k$-size subset of $[n]$ as the planted clique (denoted by $\PC$) and connects all the edges on it.
Then one arbitrarily picks an $m$-size subset of $[n]\setminus\PC$ (denoted by $\V$) for large-degree vertices and forms the potential edges connected to $\V$ independently with probability $q$.
Finally, one connects the remaining edges independently with probability $\frac{1}{2}$.
When $q = \frac{1}{2}$, $\G(n, \frac{1}{2}, q, k, m)$ reduces to the planted clique model $\G(n, \frac{1}{2}, k)$. 
In general, $\G(n, \frac{1}{2}, q, k, m)$ admits high-degree non-PC vertices to obscure the planted $\PC$ structure, e.g. $m>k$. 
We show that the above algorithms based on the Hamiltonian $H(U)$ of~\eqref{myH} still works when initialized from the full graph to recover $\PC$. 

\begin{Th}\label{lemma:n1+v}
Suppose $\gamma>\frac{1+q}{1-q}$.
For every $\e>0$, there exists $C_0(\e,\gamma), C_1(\e,\gamma)>0$ such that for all $k\geq C_0\sqrt{n}$ and $m^2<\frac{k^3}{C_1\log n}$, with probability at least $1-\e$, the gradient descent $S_t$ initialized from $S_0 = [n]$ achieves 
$$S_t = \PC \qquad \text{ for all \,$t\ge n+2k$}\,.$$
The same holds for the positive temperature chain $S_t^\beta$ for all $n+2k \le t \le n^{k/C_0}$ if $\beta  = \Omega(\log n)$.
\end{Th}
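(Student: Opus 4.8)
The plan is to replay the proof of Theorem~\ref{mainthm:success} in the contaminated model, substituting a robust analogue for each ingredient. Write $\PC=[k]$, let $\V$ be the contamination set, and decompose every $U\subseteq[n]$ into its clique part $U\cap[k]$, its contamination part $U\cap\V$, and its plain part $U\cap([k]\cup\V)^c$; the expected degree of a vertex into $U$ now depends only on which of these three classes it belongs to (a clique vertex is fully joined to $U\cap[k]$ and joined to $U\cap\V$ with probability $q$ and to the plain part with probability $\tfrac12$; a contamination vertex is joined to all of $U$ with probability $q$; a plain vertex is joined to $U\cap\V$ with probability $q$ and to everything else with probability $\tfrac12$). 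The first step is to re-establish Lemmas~\ref{lemma:000} and~\ref{lemma:000+} essentially verbatim: the Chernoff-plus-union-bound proofs only use that the aggregate counts $\deg(U',U)$ are sums of independent indicators, so they carry over with the obvious shift in the means, and the upshot is that $\min_{x\in U\cap[k]^c}|E(x,U)|$ is at most $(1+\delta)$ times the largest per-class average degree among non-clique vertices, which is the contamination average $\approx q(|U|-1)$.

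This is exactly what forces the hypothesis $\gamma>\tfrac{1+q}{1-q}$. Running the computation in the proof of Lemma~\ref{lemma:monotone} with $|E(x,U)|\le (1+\delta)q|U|$ in place of $|E(x,U)|\le\tfrac{1+\delta}{2}|U|$, the condition that removing a lowest-degree non-clique vertex strictly lowers $H$ and beats every addition (via~\eqref{pl1}--\eqref{pl2} and the trivial bound $|E(z,U)|\le|U|$) becomes $(1+\gamma)(1+\delta)q-\gamma+1<0$, i.e.\ $\gamma>\tfrac{1+(1+\delta)q}{1-(1+\delta)q}$, which holds for $\delta$ small. Hence the robust analogue of Lemma~\ref{lemma:monotone}: there is $c_0$ so that while $|S_t\cap[k]^c|\ge c_0\log n$ the gradient descent only removes vertices and coincides with the degree-peeling process $Y_t$ up to the stopping time $\tau_0$. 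The robust analogue of Lemma~\ref{lemma:never} needs only the weaker condition $\gamma>\tfrac{q}{1-q}$ (implied above): once $|S\cap[k]|\ge\max\{\gamma|S\cap[k]^c|+2,(1-\xi)k\}$ for a small enough constant $\xi$, no clique vertex is ever removed (a clique vertex $z\notin S$ has $|E(z,S)|\ge|S\cap[k]|$) and no plain or contamination vertex is ever added (a contamination vertex $z\notin S$ has $|E(z,S)|\le(1+\delta)q|S|$, so by~\eqref{pl1} adding it raises $H$ once $\gamma>(1+\delta)q/(1-(1+\delta)q)$), so $d_\H(S_t,\PC)$ decreases monotonically to zero in at most $2k$ steps and~\eqref{egs} follows as before. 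The same inputs give the robust versions of Theorem~\ref{thm:global optimum} and Remark~\ref{rem:no-complexity-above} (replace $\max_{x\in[k]^c}|E(x,[k])|\le\tau k$ by one with $\tau$ slightly above $q$, still below $\tfrac{\gamma}{1+\gamma}$): $\PC$ stays the unique global minimizer with an $\Omega(k)$ gap to its neighbours, and there is no absorbing state in $\{|U\cap[k]^c|>c\log n\}$.

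The substantive point is the robust version of Lemma~\ref{lemma:A}, namely Lemma~\ref{lemma:A+v}: along the peeling all but an $\eta$-fraction of $[k]$ keep degree at least $(|Y_t\cap[k]|-1)+q|Y_t\cap\V|+\tfrac12|Y_t\cap([k]\cup\V)^c|-c_1\sqrt n$. As in Feige--Ron, a clique vertex removed at step $i$ must have its clique-degree at most the average non-clique degree in $Y_i$ (up to a concentration slack), which forces $(1-q)|Y_i\cap[k]|$ to be comparable to $|Y_i\cap([k]\cup\V)^c|$ plus the fluctuation budget; the delicate part is that the contamination vertices are precisely the ones the peeling removes last (they have strictly larger expected degree than plain vertices whenever $q>\tfrac12$), so they persist in $Y_t$ while the plain part is exhausted and their collective presence inflates both the comparison average and the deviation scale a clique vertex's degree must beat. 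Carrying the Feige--Ron counting through with this extra term shows the number of ``bad'' clique vertices is of order $n/k$ plus a contamination contribution, and the hypotheses $k\ge C_0\sqrt n$ and $m^2<k^3/(C_1\log n)$ are exactly what make this $\le\eta k$. Granting Lemma~\ref{lemma:A+v}, the induction from the proof of Theorem~\ref{mainthm:success} --- showing $\A\subseteq Y_t$ for all $t\le\tau_0$ by splitting on the size of $|Y_{t-1}\cap[k]^c|$ and comparing the minimum non-clique degree (via the robust Lemmas~\ref{lemma:000}/\ref{lemma:000+}) against $\min_{x\in\A}\deg(x,Y_{t-1})$ --- goes through unchanged, so $|S_{\tau_0}\cap[k]|\ge(1-\eta)k$ satisfies the hypothesis of Lemma~\ref{lemma:never}, which finishes the gradient descent in a further $2k$ steps; once at $\PC$ it is absorbed there by the robust Theorem~\ref{thm:global optimum}. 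The positive-temperature chain is handled by the identical coupling as in Theorem~\ref{mainthm:success}: the $\Omega(1)$ energy gaps from the robust Lemma~\ref{lemma:monotone} keep $S_t^\beta=S_t$ before $\tau_0$ with probability $1-n^{-2}$ per step when $\beta=\Omega(\log n)$, the robust~\eqref{egs} forces the Hamming distance to $\PC$ to keep decreasing afterwards, and the $\Omega(k)$ gap at $\PC$ keeps it absorbed for $n^{\Theta(k)}$ steps.

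The main obstacle is Lemma~\ref{lemma:A+v}: since the contamination vertices survive the peeling longest, one must verify that they neither cause a clique vertex to be peeled prematurely nor erode the $\Theta(k)$ gap between clique and non-clique degrees, and it is in balancing this against the concentration error that $m^2<k^3/(C_1\log n)$ is forced. Everything else is bookkeeping: tracking the three vertex classes through the elementary degree estimates and re-running the arguments of Sections~\ref{sec:energy} and~\ref{sec:full-graph}.
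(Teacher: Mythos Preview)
Your high-level plan is the paper's plan, and the robust versions of Lemmas~\ref{lemma:000}, \ref{lemma:000+}, \ref{lemma:monotone}, \ref{lemma:never} go through as you describe. But you have mislocated where the hypothesis $m^2<k^3/(C_1\log n)$ is actually used, and this leads to a real gap.

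You attribute the constraint to Lemma~\ref{lemma:A+v}: ``carrying the Feige--Ron counting through with this extra term shows the number of `bad' clique vertices is of order $n/k$ plus a contamination contribution, and the hypotheses $k\ge C_0\sqrt n$ and $m^2<k^3/(C_1\log n)$ are exactly what make this $\le\eta k$.'' In fact Lemma~\ref{lemma:A+v} holds for every $m$: its proof is a per-vertex stochastic-domination argument (Lemma~\ref{lemma:sd}) plus Doob's maximal inequality and Markov, with no dependence on $m$ beyond the mean shift. The contamination vertices do not make any additional clique vertices ``bad'' in the sense of~\eqref{mya+v}.

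The constraint enters instead in the induction showing $\A\subseteq Y_t$, which you assert ``goes through unchanged'' by splitting on $|Y_{t-1}\cap[k]^c|$. It does not. In the contaminated model you must track $n_{t,2}=|Y_t\cap\V|$ and $n_{t,3}=|Y_t\cap([k]\cup\V)^c|$ separately, and the two-case split of Theorem~\ref{mainthm:success} breaks down when $n_{t-1,2}$ is large but $n_{t-1,3}$ is small. If you apply~\eqref{panger+v} with $U'=U_{t-1,3}$ (as in the original proof), the error term is $\delta\,n_{t-1,2}\asymp n_{t-1,2}\sqrt{\log n/n_{t-1,3}}$, which can be much larger than $k$ once $n_{t-1,3}$ drops below $m^2\log n/k^2$. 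The paper handles this with a \emph{three}-way split: once $n_{t-1,3}<K m^2\log n/k^2$ and $n_{t-1,2}>n_{t-1,3}$, one switches to $U'=U_{t-1,2}$ and compares the minimum contamination degree $\approx q|Y_{t-1}|$ against $\min_{x\in\A}\deg(x,Y_{t-1})$; the leftover term $n_{t-1,3}$ is then bounded by $K m^2\log n/k^2\le (1-q)k/8$ precisely via $k^3\ge C_1 m^2\log n$. So the hypothesis on $m$ is what makes the transition between the two concentration regimes seamless in the induction, not what bounds the number of bad clique vertices.
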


For convenience and without loss of generality, we let $\PC = [k]:=\{1, \ldots, k\}$ and $\V = [k+m]\setminus [k]:=\{k+1, \ldots, k+m\}$ in the subsequent analysis. 
Similar to the analysis of the planted clique model, for any $U\subseteq [n]$, we write $U = U_1\cup U_2\cup U_3$, where $U_1 = U\cap [k]$, $U_2 = U\cap ([k+m]\setminus [k])$, $U_3 = U\cap [k+m]^c$, and $n_1 = |U_1|$, $n_2 = |U_2|$, $n_3 = |U_3|$, and define $H(U_1, U_2, U_3) = H(U)$.
The following lemmas, which are the analogues of Lemmas \ref{lemma:000}, \ref{lemma:000+}, \ref{lemma:monotone}, \ref{lemma:never}, respectively, in the planted clique setting, still hold. 
The proofs are essentially identical, and thus are omitted. 

\begin{Lemma}\label{lemma:00+v}
For any $\gamma>1$, there exists an absolute constant $c(\gamma)>0$ such that, with probability at least $1-o(1)$, for all pairs $(U', U)$ satisfying $U'\subseteq U\subseteq [n]$ with $|U'|\geq c\log n$, 
\begin{align*}
\left(1-\delta\right)\E[\deg(U', U)]\leq \deg(U', U)\leq \left(1+\delta\right)\E[\deg(U', U)],
\end{align*}
where $\deg(U', U) = \sum_{x\in U'} |E(x,U)|$ is the total degree of $U'$ in $U$ and $\delta\geq \sqrt{\frac{96\log n}{|U'|}}$.
Consequently, 
\begin{align}
\min_{x\in U'}|E(x, U)|\leq\left(1+\delta\right)\frac{\E[\deg(U', U)]}{|U'|}.\label{panger+v}
\end{align}
\end{Lemma}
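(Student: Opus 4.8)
The plan is to reprove Lemma~\ref{lemma:000} essentially verbatim, since the only features of the edge law that enter its proof are preserved by the contamination. In $\G(n,\tfrac12,q,k,m)$ the edges are still included \emph{mutually independently}; only the marginal probabilities change, and they only ever \emph{increase} relative to $\tfrac12$: internal $\PC$-edges are present with probability $1$, edges with an endpoint in $\V$ with probability $q\ge\tfrac12$, and all remaining edges with probability $\tfrac12$. Fixing a pair $U'\subseteq U\subseteq[n]$ and writing, exactly as in the proof of Lemma~\ref{lemma:000},
\begin{align*}
\deg(U',U)=\sum_{x\in U',\,z\in U\setminus U'}\mathbb I\{(x,z)\in E\}+2\sum_{\{x,z\}\subseteq U'}\mathbb I\{(x,z)\in E\},
\end{align*}
we see that $\deg(U',U)$ is a sum of $n_{U'}:=|U'|(|U|-|U'|)+\binom{|U'|}{2}$ independent random variables valued in $[0,2]$, and since every edge probability is at least $\tfrac12$ we still have $\E[\deg(U',U)]\ge n_{U'}/2$. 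The key point is that the asserted inequality concentrates $\deg(U',U)$ around its \emph{own} expectation, so it is irrelevant that $\E[\deg(U',U)]$ now depends in a complicated way on how $U$ meets $\PC$ and $\V$: only the lower bound on the mean enters the estimate.

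Granting this, the two-sided bound follows from the multiplicative Chernoff bound as before: for $\delta\in(0,1)$,
\begin{align*}
\P\big(|\deg(U',U)-\E[\deg(U',U)]|\ge\delta\,\E[\deg(U',U)]\big)\ \le\ 2e^{-\delta^2\E[\deg(U',U)]/8}\ \le\ 2e^{-\delta^2 n_{U'}/16}.
\end{align*}
Using $n_{U'}\ge\tfrac12|U'|(|U|-1)$ (valid whenever $U'\subseteq U$), the choice $\delta=\sqrt{96\log n/|U'|}$ makes the right side at most $2n^{-c'|U|}$ for an absolute constant $c'>1$, and a union bound over all pairs $(U',U)$ with $|U'|\ge c\log n$ costs at most $\sum_{r\ge c\log n}\binom nr 2^r n^{-c'r}=o(1)$ once $c$ is a large enough absolute constant---large enough, in particular, that $\delta<1$ throughout, i.e.\ $c>96$. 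The ``consequently'' clause~\eqref{panger+v} is then immediate, as $\min_{x\in U'}|E(x,U)|$ is at most the average $\deg(U',U)/|U'|$, which on the good event is at most $(1+\delta)\E[\deg(U',U)]/|U'|$.

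There is no real obstacle particular to this lemma: the contamination touches the argument only through the marginal edge probabilities, and the proof of Lemma~\ref{lemma:000} used only independence together with the fact that each edge is present with probability between $\tfrac12$ and $1$, both of which persist (and $q<1$ plays no role here). The genuine difficulties in porting Section~\ref{sec:full-graph} to the robust setting appear later, in the analogues of Lemmas~\ref{lemma:monotone}, \ref{lemma:never}, and \ref{lemma:A}, where the $q$-biased degrees of $\V$-vertices actively shift the energy-change identities~\eqref{pl1}--\eqref{pl2} and force the stronger hypotheses $\gamma>\tfrac{1+q}{1-q}$ and $m^2<k^3/(C_1\log n)$ of Theorem~\ref{lemma:n1+v}; that is where the effort should be concentrated.
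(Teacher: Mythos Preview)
Your proposal is correct and is exactly the approach the paper intends: the authors explicitly state that the proof of Lemma~\ref{lemma:00+v} is ``essentially identical'' to that of Lemma~\ref{lemma:000} and omit it, and your argument reproduces that proof verbatim while noting the one thing that needs checking in the contaminated model, namely that edges remain independent with marginals at least $\tfrac12$ so that $\E[\deg(U',U)]\ge n_{U'}/2$ still holds. Your closing remarks correctly identify that the real work in the robust setting lies in the later lemmas.
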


\begin{Lemma}\label{lemma:000+v}
There exists an absolute constant $\rho>0$ such that, for any $\alpha, \lambda\in (0, 1)$ and $k \geq n^\alpha, m\leq n^\lambda$, with probability at least $1-o(1)$, for all $U$ with $|U\cap [k+m]^c|\geq\frac{n}{\log n}$, 
\begin{align*} 
\min_{x\in U\cap [k+m]^c}|E(x, U)|&\leq \frac{\E[\deg(U\cap [k+m]^c, U)]}{|U\cap [k+m]^c|} + \rho\sqrt{n}\\
& = \frac{1}{2}(|U\cap ([k+m]\setminus [k])^c|-1) + q|U\cap ([k+m]\setminus [k])| + \rho\sqrt{n}.
\end{align*}
\end{Lemma}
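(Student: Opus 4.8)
The plan is to repeat, essentially verbatim, the proof of Lemma~\ref{lemma:000+}, with $[k]$ replaced by $[k+m]$ and with the edge probabilities bookkept for the contamination block $\V$. Fix a subset $U$ with $n_3 := |U\cap[k+m]^c| \ge n/\log n$ and take $U' = U\cap[k+m]^c$ as the block whose minimum degree we control; write $n_1 = |U\cap[k]|$ and $n_2 = |U\cap\V|$. The first step is to read off $\E[\deg(U',U)]$: every edge incident to $U'$ with its other endpoint in $\V$ is present with probability $q$, and every other edge incident to $U'$ with probability $\frac{1}{2}$, so $\E|E(x,U)| = \frac{1}{2}(n_1+n_3-1) + qn_2$ for each $x\in U'$; summing over $U'$ and dividing by $|U'|$ gives $\E[\deg(U',U)]/|U'| = \frac{1}{2}(n_1+n_3-1)+qn_2$, which is the claimed value since $([k+m]\setminus[k])^c = [k]\cup[k+m]^c$.

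Next, exactly as in the proof of Lemma~\ref{lemma:000}, one writes $\deg(U',U)$ as a sum of $n_{U'} := n_3(n_1+n_2) + \binom{n_3}{2}$ independent random variables bounded by $2$, notes $\E[\deg(U',U)] \ge \frac{1}{2} n_{U'}$ (here is where $q\ge\frac{1}{2}$ enters), and applies a multiplicative Chernoff bound to get $\P(|\deg(U',U) - \E[\deg(U',U)]| \ge \delta\,\E[\deg(U',U)]) \le 2e^{-\delta^2 n_{U'}/16}$. Choosing $\delta = \delta(U') = \sqrt{16(2\log n + (n+1)\log 2)/n_{U'}}$ and taking a union bound over the at most $2^n$ admissible $U$ places us, with probability $1-o(1)$, on the event that $\deg(U',U) \le (1+\delta)\,\E[\deg(U',U)]$ for every such $U$ simultaneously. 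Since the minimum degree is at most the average degree, this yields $\min_{x\in U'}|E(x,U)| \le (1+\delta)\,\E[\deg(U',U)]/|U'|$.

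It then remains to turn the multiplicative slack $\delta\,\E[\deg(U',U)]/|U'|$ into an additive $\rho\sqrt n$, and this is the only point requiring care. One uses $|U|-|U'| = n_1+n_2 \le k+m$ together with $|U'| \ge n/\log n$: in the regime where $k+m = O(n/\log n)$ — the same size restriction imposed in Lemma~\ref{lemma:000+}, which holds here since $m\le n^\lambda$ with $\lambda<1$ and $k$ is taken $O(n/\log n)$ — one has $|U|-1 \le 2|U'|$, whence $\E[\deg(U',U)]/|U'| \le \frac{1}{2}(|U|-1) + \frac{1}{2} n_2 = O(|U'|)$, while $n_{U'} \ge \binom{|U'|}{2} = \Omega(|U'|^2)$ forces $\delta = O(\sqrt n/|U'|)$; multiplying the two gives $O(\sqrt n)$, and the implied constant is $\rho$. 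The main obstacle is precisely this last point: keeping the mean degree comparable to $|U'|$ so that the relative fluctuation $\delta$ becomes an $O(\sqrt n)$ absolute error — without the size control $k+m = O(n/\log n)$ one loses a $\log n$ factor — while everything else (the Chernoff estimate, the union bound, and the parallel replacements $[k]\mapsto[k+m]$ needed for the contaminated analogues of Lemmas~\ref{lemma:monotone} and~\ref{lemma:never}) is line-for-line the planted-clique argument.
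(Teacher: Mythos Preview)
Your proposal is correct and follows precisely the route the paper intends: it defers the proof of Lemma~\ref{lemma:000+v} to that of Lemma~\ref{lemma:000+} with $[k]$ replaced by $[k+m]$, and your argument carries out exactly that substitution, including the bookkeeping for the $q$-block in the expectation and the observation that $q\ge\tfrac12$ preserves the lower bound $\E[\deg(U',U)]\ge\tfrac12 n_{U'}$ needed for the Chernoff step. Your identification of the implicit size restriction $k+m=O(n/\log n)$ (mirroring the hypothesis $k\le n/\log n$ in Lemma~\ref{lemma:000+}) is also on point and is indeed what makes the final $\delta\cdot\E[\deg(U',U)]/|U'|=O(\sqrt n)$ conversion go through.
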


\begin{Lemma}\label{lemma:monotone+v}
Let $\gamma>\frac{1+q}{1-q}$. 
There exists an absolute constant $c_0(\gamma)>0$ such that with probability $1-o(1)$, for all $U\subseteq [n], |U\cap [k]^c|\geq c_0\log n$,
\begin{align}
\min_{x\in U}H(U\setminus\{x\})\leq\min\left\{H(U), \min_{z\in U^c}H(U\cup\{z\})\right\} - 1. \label{happy+v}
\end{align}
Consequently, $|S_t|$ will only decrease until $|S_t\cap [k]^c|\leq c_0\log n$. 
\end{Lemma}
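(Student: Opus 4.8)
The plan is to mimic the proof of Lemma~\ref{lemma:monotone} in the uncontaminated setting, since the only structural change is the presence of the high-degree block $\V$, whose size is controlled by the hypothesis $m^2 < k^3/(C_1\log n)$, i.e. $m \ll (k/\log n)^{1/2}\cdot k \ll$ (something negligible compared to $k$ once $k = \Omega(\sqrt n)$). Concretely, for a set $U$ with $|U\cap[k]^c|\ge c_0\log n$, I want to exhibit a vertex $x\in U$ whose removal decreases $H$ by more than $1$, and moreover more than $H(U\cup\{z\})$ for every $z\in U^c$. As in the original proof, the natural candidate is the minimum-degree vertex \emph{within} $U\cap[k]^c$, but now $U\cap[k]^c = U_2\cup U_3$ splits into the contaminated block $U_2$ and the clean block $U_3$. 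The key point is that the \emph{average} degree of $U\cap[k]^c$ inside $U$ is, by Lemma~\ref{lemma:00+v} (or directly Lemma~\ref{lemma:000+v} in the narrower regime), at most roughly $\tfrac12(|U|-1) + q|U_2|$, and $q|U_2| \le q m$ is a lower-order term relative to $|U|-1 \ge |U\cap[k]^c| - 1\gtrsim \log n$ only when $|U|$ is itself large; when $|U|$ is small (comparable to $\log n$) we instead need $m$ small enough that $qm$ is dominated by the $\gamma$-gap. This is exactly where $m = O(n^{3/4 - o(1)})$, and more precisely the relation between $m$, $k$, $\sqrt n$, enters.

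The steps, in order, are: (1) Condition on the high-probability events of Lemmas~\ref{lemma:00+v} and~\ref{lemma:000+v}, and fix $\gamma > \frac{1+q}{1-q}$ so that $\kappa_q := \frac{1}{2}\cdot\frac{\gamma+1}{\gamma} \cdot \frac{1}{?}$—more cleanly, choose $c_0$ large enough (as in~\eqref{goodone}) that $\omega_q := \frac{(1+\delta)(\gamma+1)(1+\text{correction})}{2(\gamma-1)} < 1$ where $\delta = \sqrt{96/c_0}$; the role of $\gamma > \frac{1+q}{1-q}$ rather than $\gamma>3$ is to absorb the extra $q|U_2|$ contribution to the average degree. (2) Using~\eqref{panger+v} with $U' = U\cap[k]^c$, produce $x\in U\cap[k]^c$ with $|E(x,U)| \le (1+\delta)\E[\deg(U',U)]/|U'|$; bound this expectation by $\tfrac12(|U|-1) + q m$ and argue via~\eqref{pl2} that $H(U\setminus\{x\}) - H(U) \le (1+\gamma)\big(\tfrac12(1+\delta)(|U|-1) + qm\big) - \gamma(|U|-1) < -1$. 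The term $(1+\gamma)qm$ is dealt with in two regimes: if $|U|-1 \ge K m$ for a suitable constant $K = K(\gamma,q)$ it is absorbed into the negative drift $-(\gamma - \tfrac12(1+\delta)(\gamma+1))(|U|-1)$; if $|U| < Km$ then $|U\cap[k]^c| < Km$ as well, and here I would argue directly that a vertex of $U\cap[k]^c$ still has degree at most $\tfrac12 K m + qm + O(\sqrt{Km\log n})$ which is $< \gamma(|U|-1)$ is the delicate inequality—wait, this fails if $|U|$ is order $\log n$; so instead one must use that $|U\cap[k]^c| \ge c_0\log n$ is the only lower bound, and that $m \le n^\lambda$ together with $k \ge n^\alpha$, $m^2 < k^3/(C_1\log n)$ forces the clean block to actually be large whenever $U$ is small—hmm, no. (3) For the ``no addition is better'' part, combine with~\eqref{pl1}: for $z\in U^c$, $H(U\cup\{z\}) - H(U) \ge \gamma|U| - (1+\gamma)|E(z,U)| \ge \gamma|U| - (1+\gamma)|U|$ trivially, which is not enough, so instead chain through $H(U\setminus\{x\}) - H(U\cup\{z\}) = [H(U\setminus\{x\}) - H(U)] - [H(U\cup\{z\}) - H(U)]$ and use the already-established strong decrease for removal against the crude $H(U\cup\{z\}) - H(U) \ge -\gamma - (1+\gamma)|E(z,U)| \ge -(1+\gamma)|U|$, requiring $(1-\omega_q)(c_0\log n - 1) > 1 + \gamma$ as in~\eqref{liop}.

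The main obstacle I anticipate is the small-$|U|$ regime, precisely the interface between the $c_0\log n$ lower bound on $|U\cap[k]^c|$ and the additive contamination term $qm$. When $|U|$ is of order $\log n$ but $m = n^{\lambda}$, the term $(1+\gamma)qm$ is \emph{enormous} compared to $\gamma(|U|-1) = O(\log n)$, so the naive bound fails catastrophically. The resolution must be that in this regime $U_2 = U\cap\V$ is necessarily \emph{small}—indeed $|U_2| \le |U| = O(\log n)$—so $q|U_2| \le q|U|$, not $qm$; the point is that Lemma~\ref{lemma:000+v} is stated with $q|U\cap\V|$, not $qm$, and $|U\cap\V|$ is always at most $|U|$. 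So the correct bound on the average degree of $U\cap[k]^c$ in $U$ is $\tfrac12(|U\cap[k]|+|U_3|-1) + q|U_2| \le \tfrac12(|U|-1) + (q-\tfrac12)|U_2| \le \tfrac12(|U|-1) + (q-\tfrac12)|U|$, and so the removal decrease is at most $(1+\gamma)[\tfrac12(1+\delta)(|U|-1) + (q-\tfrac12)|U|] - \gamma(|U|-1)$, which is $< -1$ precisely because $\gamma > \frac{1+q}{1-q}$ makes the coefficient of $|U|$ negative. Thus $m$ never appears in Lemma~\ref{lemma:monotone+v} at all—the constraint $m^2 < k^3/(C_1\log n)$ is needed later (in the analogue of Lemma~\ref{lemma:A}, to ensure a $(1-\eta)$-fraction of $\PC$ survives the peeling despite the $\Theta(qm)$ degree inflation of $\V$-vertices), not here. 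I would double-check the constant chase in~\eqref{goodone}–\eqref{liop} with $\gamma - 1$ replaced by $\gamma(1-q) - q$ wherever the clean-case proof used the gap between the $\gamma$-penalty and the $\tfrac12$-density, and otherwise the argument is verbatim.
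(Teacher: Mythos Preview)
Your proposal is correct and, once you arrive at the final paragraph, takes exactly the approach the paper intends (the paper simply says the proof is ``essentially identical'' to that of Lemma~\ref{lemma:monotone} and omits it). The key realization you eventually reach---that $|U_2|\le |U|$ so the contamination contributes at most $(q-\tfrac12)|U|$ to the average degree, and hence $m$ never enters this lemma---is precisely the point; more cleanly, every edge incident to $U'=U\cap[k]^c$ has probability at most $q$, so $\E[\deg(U',U)]/|U'|\le q(|U|-1)$, and then \eqref{goodone}--\eqref{liop} go through verbatim with $\tfrac12$ replaced by $q$ and the condition $\gamma>3$ replaced by $\gamma>\tfrac{1+q}{1-q}$ (equivalently $\gamma-1>(\gamma+1)q$), defining $\omega_q:=\tfrac{(1+\delta)(\gamma+1)q}{\gamma-1}<1$.
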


\begin{Lemma}\label{lemma:never+v}
Let $\gamma>\frac{1+q}{1-q}$ and $\xi<1-q(1+\frac{1}{\gamma})$. With probability $1-o(1)$ the following holds. For any $S$ having $|S\cap [k]|\geq \max\{\gamma|S\cap [k]^c|+2, (1-\xi)k\}$, if $S_t'$ is the gradient descent initialized from $S_0'= S$ then $d_\H(S_t', \PC)$ is strictly decreasing in $t$ while $S_t'\ne \PC$. Consequently, $S_t'$ will converge to $\PC$ in at most $2k$ further steps. 

Moreover, for any $W\subseteq [n], W\neq\PC$ that satisfies $|W\cap [k]|\geq \max\{\gamma|W\cap [k]^c|+2, (1-\xi)k\}$, 
\begin{align}
\min_{U\in\mathcal U}H(U)\leq \min_{U\sim W, U\notin\mathcal U}H(U) - 1, \label{egs+v}
\end{align} 
where $\mathcal U = \left\{U\sim W: d_\H(U, \PC)<d_\H(W, \PC)\right\}$ represents the set of neighboring states of $W$ that are one Hamming distance closer to the $\PC$ than $W$ is.  
\end{Lemma}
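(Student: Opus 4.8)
The plan is to transport the proof of Lemma \ref{lemma:never} to the contaminated model almost verbatim, carrying along the extra vertex class $\V = [k+m]\setminus[k]$ and replacing the edge-density $\tfrac12$ by $q$ wherever the connections of a non-$\PC$ vertex to $[k]$ enter an estimate. The two hypotheses $\gamma>\frac{1+q}{1-q}$ and $\xi<1-q(1+\tfrac1\gamma)$ are precisely the $q$-analogues of $\gamma>3$ and $\xi<1-\tfrac12(1+\tfrac1\gamma)$ that preserve the two mechanisms driving the original argument: no $\PC$-vertex is ever deleted, and once the chain is ``balanced'' no extraneous vertex is ever added. I would condition throughout on the $1-o(1)$ events of Lemmas \ref{lemma:00+v} and \ref{lemma:000+v}, on the monotonicity event of Lemma \ref{lemma:monotone+v} with its constant $c_0$, and on one further high-probability event: fixing a constant $\zeta=\zeta(\gamma,\xi,q)>0$ small enough that $1-\xi>(q+\zeta)(1+\tfrac1\gamma)$, the bound $\max_{x\in[k]^c}|E(x,[k])|\le(q+\zeta)k$ holds, which follows by applying Hoeffding to each of the at most $n$ potential non-$\PC$ vertices (the extremal case being a $\V$-vertex, whose degree into $[k]$ is $\mathrm{Bin}(k,q)$) and a union bound.

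With these in hand I would first record two elementary energy inequalities, the exact analogues of~\eqref{eg1} and~\eqref{eg2}. If $|U\cap[k]|\ge\gamma|U\cap[k]^c|+2$ and $x\in U\cap[k]$, then using $|E(x,U)|\ge|U\cap[k]|-1$ (all internal $\PC$-edges are present, and the adversary only adds edges) and~\eqref{pl2} gives $H(U\setminus\{x\})-H(U)\ge(|U\cap[k]|-1)-\gamma|U\cap[k]^c|\ge1$, so a $\PC$-vertex is never removed. If instead $|U\cap[k]|\ge(1-\xi)k$ and $|U\cap[k]^c|\le c_0\log n$, then for $x\notin U$ with $x\in[k]^c$ we have $|E(x,U)|\le|E(x,[k])|+|U\cap[k]^c|\le(q+\zeta)k+c_0\log n$, so~\eqref{pl1} yields $H(U\cup\{x\})-H(U)=\gamma|U|-(1+\gamma)|E(x,U)|\ge\tfrac12\big(\gamma(1-\xi)-(1+\gamma)(q+\zeta)\big)k-O(\log n)=\Omega(k)>1$; running the same computation with the opposite sign shows that in this balanced regime deleting \emph{any} non-$\PC$ vertex lowers $H$ by $\Omega(k)$, and adding a missing $\PC$-vertex lowers $H$ by at least $(1-\xi)k$.

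The trajectory analysis then proceeds exactly as for Lemma \ref{lemma:never}. While $|S_t'\cap[k]^c|\ge c_0\log n$, Lemma \ref{lemma:monotone+v} (through~\eqref{happy+v}) forces a removal, and by the first inequality the removed vertex lies in $[k]^c$, so $d_\H(\cdot,\PC)$ drops by one and the constraint $|S_t'\cap[k]|\ge\max\{\gamma|S_t'\cap[k]^c|+2,(1-\xi)k\}$ is preserved. Once $|S_t'\cap[k]^c|<c_0\log n$ one has $|S_t'\cap[k]|\ge(1-\xi)k$, so $S_t'$ sits in the region $\{|U_1|>\gamma|U_2|\}$, where by Remark \ref{rem:no-complexity-above} the only absorbing state is $\PC$; hence whenever $S_t'\ne\PC$ there is a strictly energy-decreasing move, and by the two inequalities above every such move either deletes a non-$\PC$ vertex or adds a $\PC$-vertex, so $d_\H(\cdot,\PC)$ again decreases and the constraint is preserved. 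Strict monotonicity of $d_\H(S_t',\PC)$ together with the bound $d_\H(S_0',\PC)=|S\cap[k]^c|+k-|S\cap[k]|\le\tfrac1\gamma|S\cap[k]|+k\le2k$ gives convergence to $\PC$ within $2k$ steps. For the energy-gap claim~\eqref{egs+v}, I would split the neighbors of $W$: those outside $\mathcal U$ are $W$ itself or are obtained by deleting a $\PC$-vertex or adding a non-$\PC$-vertex, hence have energy $\ge H(W)$ (in fact $\ge H(W)+1$ for the nontrivial ones) by the two inequalities, while some $\PC$-directed move---deleting the minimum-degree non-$\PC$ vertex (via Lemma \ref{lemma:monotone+v} when $|W\cap[k]^c|\ge c_0\log n$, and via the explicit $\Omega(k)$ estimate otherwise) or adding a missing $\PC$-vertex when $W\subsetneq[k]$---lowers $H$ by at least $1$; combining the two bounds gives~\eqref{egs+v}.

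Since Lemmas \ref{lemma:00+v}, \ref{lemma:000+v} and \ref{lemma:monotone+v} are taken as given, I expect no serious obstacle; the one place demanding care is the balanced regime, where the degree-concentration bound of Lemma \ref{lemma:00+v} is unavailable because $W\cap[k]^c$ is too small to meet its $|U'|\ge c\log n$ hypothesis. There one must instead control the worst non-$\PC$ vertex directly through $\max_{x\in[k]^c}|E(x,[k])|\le(q+\zeta)k$, and the point is that the slack $1-\xi>(q+\zeta)(1+\tfrac1\gamma)$---worth $\Omega(k)$---comfortably dominates the $O(\log n)$ error coming from the at most $c_0\log n$ edges internal to $W\cap[k]^c$.
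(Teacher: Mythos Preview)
Your proposal is correct and follows essentially the same approach as the paper, which explicitly omits the proof of Lemma~\ref{lemma:never+v} as ``essentially identical'' to that of Lemma~\ref{lemma:never}; you have carried out precisely that adaptation, replacing the $\tfrac12$-concentration of $|E(x,[k])|$ by the $q$-concentration and checking that the parameter constraints $\gamma>\tfrac{1+q}{1-q}$, $\xi<1-q(1+\tfrac1\gamma)$ preserve both mechanisms. One small remark: your appeal to Remark~\ref{rem:no-complexity-above} is formally for the uncontaminated model, but it is in fact redundant---your own direct computations (deleting any non-$\PC$ vertex lowers $H$ by $\Omega(k)$, adding any missing $\PC$ vertex lowers $H$) already show that in the balanced regime $S_t'\ne\PC$ always has a $\PC$-directed energy-decreasing neighbor, so you may simply drop that citation.
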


Let $Y_t$ be the same peeling process introduced in Section~\ref{sec:full-graph} and $\tau_0=\tau_0(c_0)$ be the first time $|Y_t\cap [k]^c|\leq c_0\log n$; see the paragraph above \eqref{xstar} for the details of their definitions. 
To establish a similar version of Lemma \ref{lemma:A}, we need the following stochastic dominance result concerning the dynamics of $Y_t$ while running on the contaminated planted clique model.  
For convenience, we introduce the following notation to keep track of the statistics of the dynamics:
\begin{align*}
U_{t, 1}& = Y_{t}\cap [k]&\quad&n_{t, 1} = |U_{t, 1}|\,,\\
U_{t, 2} &= Y_{t}\cap ([k+m]\setminus [k])&\quad& n_{t, 2} = |U_{t, 2}|\,,\\
U_{t, 3} &= Y_{t}\cap [k+m]^c&\quad& n_{t, 3} = |U_{t, 3}|\,,\\
\bar{U}_{t, 1}& = Y^c_{t}\cap [k]&\quad&\bar{n}_{t, 1} = |\bar{U}_{t, 1}|\,,\\
\bar{U}_{t, 2} &= Y^c_{t}\cap ([k+m]\setminus [k])&\quad& \bar{n}_{t, 2} = |\bar{U}_{t, 2}|\,,\\
\bar{U}_{t, 3} &= Y^c_{t}\cap [k+m]^c&\quad& \bar{n}_{t, 3} = |\bar{U}_{t, 3}|\,.
\end{align*}
For $x\in [n]$, let $T_x$ denote the time that $x$ is removed from $Y_t$ before $\tau_0$, i.e., $T_x = \min\{t: x\notin Y_t\}\wedge \tau_0$. 
The below lemma is saying that for any $x\in [k]$ and before its removal from $Y_t$, both $\{\deg(x, \bar{U}_{t, 2})\}_{t\leq T_x}$ and $\{\deg(x, \bar{U}_{t, 3})\}_{t\leq T_x}$ are upper bounded by the respective correct Binomial processes under appropriate coupling.

\begin{Lemma}\label{lemma:sd}
Let $Y_t$ be the peeling process and $\mathcal F_t$ be the associated filtration. 
For every $x\in [k]$, there exist processes $\{\bar{b}_{t,2}(x)\}_{t< T_x}$, $\{\bar{b}_{t,3}(x)\}_{t< T_x}$ defined on the same probability space of $\{\deg(x, \bar{U}_{t, 2})\}_{t< T_x}$, $\{\deg(x, \bar{U}_{t, 3})\}_{t< T_x}$ and upper bound them, respectively, where 
\begin{align}
&\bar{b}_{t,2}(x) = \sum_{i=0}^{\bar{n}_{t,2}}\e_i&\bar{b}_{t,3}(x) = \sum_{i=0}^{\bar{n}_{t,3}}\e_i',\label{veryhappy}
\end{align}
and $\{\e_i\}_{i\geq 1}, \{\e_i'\}_{i\geq 1}$ are independent Bernoulli random variables with parameters $q$ and $1/2$, respectively, with initial values $\e_0 = \e_0' = 0$. 
\end{Lemma}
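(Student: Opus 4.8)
The plan is to obtain the dominating processes $\bar b_{t,2}(x)$ and $\bar b_{t,3}(x)$ from a coupling of the peeling process with a lazy revelation of the edges incident to $x$. Fix $x\in[k]$. Since the statement only concerns times $t<T_x$, I would first replace $Y_t$ by the \emph{$x$-protected} peeling process, which at each step removes a uniformly chosen minimum-degree vertex of $Y_{t-1}\setminus\{x\}$; this agrees with $Y_t$ for all $t<T_x$ and has the advantage that $x$ is never removed, so one never has to decide whether $x$ itself is of minimum degree. Next I would reveal all edges of $G$ \emph{except} the edges $\{(x,z):z\in\V\cup[k+m]^c\}$; call this information $\mathcal F_0$. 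Conditionally on $\mathcal F_0$ the bits $b(z):=\mathbb I\{(x,z)\in E\}$ are independent, equal to $1$ with probability $q$ if $z\in\V$ and with probability $\tfrac12$ if $z\in[k+m]^c$. The key structural point is that, writing $d^\circ(v):=\deg(v,Y_{t-1}\setminus\{x\})$ (which is $\mathcal F_0$-measurable), one has $\deg(v,Y_{t-1})=d^\circ(v)+\mathbb I\{v\in[k]\}+\mathbb I\{v\notin[k]\}\,b(v)$, so the minimum-degree vertex of $Y_{t-1}\setminus\{x\}$ is forced to lie in the $\mathcal F_0$-measurable candidate set $M_t:=\{v\in Y_{t-1}\setminus\{x\}:d^\circ(v)\le\min_w d^\circ(w)+1\}$. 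I would therefore reveal the bits $b(v)$ of the non-$\PC$ vertices of $M_t$ that have not been revealed yet, one at a time, and then the removed vertex is determined, ties being broken with fresh independent randomness.

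The crux of the argument is the elementary monotonicity estimate that, holding fixed the degrees of all other vertices of $Y_{t-1}$, the probability that a given vertex is the uniformly chosen minimum-degree vertex is nonincreasing in that vertex's degree. Combined with the independence of the $b(\cdot)$'s under $\mathcal F_0$, this yields, for the $j$-th vertex $z_j\in\V$ removed by the process, a bound of the form $\P(b(z_j)=1\mid\mathcal G_{j-1})\le q$ for a suitable filtration $\mathcal G_{j-1}$ containing $\mathcal F_0$, the tie-breaking randomness, and $b(z_1),\dots,b(z_{j-1})$: an extra edge to $x$ only makes $z_j$ \emph{less} likely to be removed, so conditioning on $z_j$ having been removed biases $b(z_j)$ toward $0$. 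Given such conditional bounds, a standard coupling on an enlarged space produces independent Bernoulli$(q)$ variables $\e_j$ with $\e_j\ge b(z_j)$, and then $\deg(x,\bar U_{t,2})=\sum_{j\le\bar n_{t,2}}b(z_j)\le\sum_{j\le\bar n_{t,2}}\e_j=\bar b_{t,2}(x)$ for every $t<T_x$. Running the identical argument on the vertices removed from $[k+m]^c$ with parameter $\tfrac12$ produces the $\e_i'$ and the bound for $\bar b_{t,3}(x)$, and the two Bernoulli families can be taken jointly independent because the underlying bits $b(\cdot)$ are.

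The step I expect to be the main obstacle is making the lazy revelation and the conditioning above rigorous at the same time: a non-$\PC$ vertex $z$ may enter the candidate set $M_t$ (so that $b(z)$ is revealed) several steps before it is actually removed, after which $b(z)$ is already measurable with respect to the coupling's history and $\P(b(z)=1\mid\text{history})$ is no longer $\le q$ but $0$ or $1$. The resolution---which is exactly the Feige--Ron-style bookkeeping behind Lemma~\ref{lemma:A} and its robust analogue Lemma~\ref{lemma:A+v}---is to serialise the revelations using the principle of deferred decisions, so that the sequence of inspected bits is independent with the correct marginals, and then to show, using the monotonicity estimate, that the \emph{removal-ordered} sequence $(b(z_j))_j$ is a rearrangement of an independent sequence in which the value $1$ is systematically pushed to later indices, hence is prefix-dominated by an independent Bernoulli$(q)$ sequence. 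Once this structural fact is in place, the construction of $\{\e_i\}$, $\{\e_i'\}$ and their joint independence follow as above. I would expect the degree-concentration inputs (Lemmas~\ref{lemma:00+v} and~\ref{lemma:000+v}) to play no role here: this lemma is purely a statement about the existence of a coupling, with no underlying high-probability event.
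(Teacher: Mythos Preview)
Your general framework---fix $x$, hide the edges from $x$ to the relevant non-$\PC$ vertices, and reveal them lazily via deferred decisions---is exactly what the paper does, and you are right that no concentration input is needed. But the heart of your argument is more convoluted than necessary, and the ``rearrangement'' step you identify as the main obstacle is bypassed entirely by the paper.

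The paper does \emph{not} try to control $\P(b(z_j)=1\mid\mathcal G_{j-1})$ for the removal-ordered sequence, nor does it invoke any probabilistic monotonicity of selection probabilities. Instead, it uses a \emph{deterministic} observation about the tentative peeling process: a bit $b(z)$ is only ever revealed when $z$ is the current minimum-degree candidate (based on revealed degrees), and if the revealed bit is $0$ then $z$'s degree is unchanged, so $z$ remains the minimum and is removed immediately. Hence every vertex in $\mathcal R_t\setminus\bar U_{t,2}$ (revealed but not yet removed) necessarily has bit $1$. This gives the exact identity
\[
\deg(x,\bar U_{t,2})=\sum_{z\in\mathcal R_t}\widetilde\e_z-\bigl(|\mathcal R_t|-\bar n_{t,2}\bigr),
\]
and since each $\widetilde\e_{z^{(i)}}\le 1$ and $|\mathcal R_t|\ge\bar n_{t,2}$, the right-hand side is at most $\sum_{i\le\bar n_{t,2}}\widetilde\e_{z^{(i)}}$. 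The revelation-ordered bits $(\widetilde\e_{z^{(i)}})_i$ are i.i.d.\ Bernoulli$(q)$ by deferred decisions, and one sets $\e_i:=\widetilde\e_{z^{(i)}}$. No coupling on an enlarged space, no conditional-probability bound, no rearrangement claim is needed.

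Your ``rearrangement in which the value $1$ is systematically pushed to later indices'' is an imprecise description of this phenomenon: the removal-ordered sequence is not a rearrangement of the revelation-ordered one (revealed $1$'s may never be removed at all), and your monotonicity estimate, while true, is doing no work once you have the deterministic fact above. A smaller point: the paper hides only the $x$-to-$\V$ edges for the $\bar U_{t,2}$ bound and runs a separate, identical argument hiding the $x$-to-$[k+m]^c$ edges for $\bar U_{t,3}$; this keeps each revelation stream homogeneous with a single Bernoulli parameter, whereas your simultaneous hiding would require tracking two interleaved bit types.
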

\begin{proof}
The proof follows from a similar edge-exposure argument of \cite[Lemma 4]{feige2010finding} and we only provide a sketch here for completeness.
For convenience, we focus on the stopped process $\{\deg(x, \bar{U}_{t, 2})\}_{t< T_x}$, the argument for $\{\deg(x,\bar{U}_{t,3})_{t<T_x}$ being identical. 

Fixing $x\in [k]$, we first draw an edge configuration of the contaminated model except for the edges between $x$ and $[k+m]\setminus [k]$. 
We call such a configuration partially observed.  
We run a \emph{tentative} peeling process $Y_t$ on the partially observed configuration as follows. 
Suppose that at time $t<T_x$, $z$ is chosen as the candidate for removal because it has the lowest degree in $Y_t$ among the revealed portion of the graph ($z$ could be equal to $x$).
If all edges adjacent to $z$ have been revealed, then remove $z$ as usual. 
Otherwise, randomly choose one of the unrevealed edges adjacent to $z$ and reveal it (this corresponds to drawing an independent Bernoulli random variable Ber($q$) to decide whether to form an edge or not), and then re-find the smallest-degree vertex in $Y_t$ with this additional information. 
Note that all unrevealed edges at the beginning have exactly one vertex $x$ and the other vertex $z\in[k+m]\setminus [k]$.
For each $z\in[k+m]\setminus [k]$, we let $\widetilde{\e}_z$ denote the edge outcome between $x$ and $z$ after revealing, i.e., $\widetilde{\e}_z = 1$ if there is an edge between $x$ and $z$ and $\widetilde{\e}_z = 0$ if there is not. 
The tentative peeling process agrees with the original peeling process before $T_x$. 

At each time $t< T_x$, the edge information between $x$ and every vertex in $\bar{U}_{t,2}$ must have been revealed. Denoting all revealed vertices in $[k+m]\setminus [k]$ at time $t$ as $\mathcal R_t$, it is easy to see that
$\left\{z\in\mathcal R_t: \widetilde{\e}_z=0\right\}\subseteq\bar{U}_{t,2}$ so $\{z\in \mathcal R_t: z\notin \bar U_{t,2}\}\subseteq \{z\in \mathcal R_t: \widetilde{\varepsilon}_z =1\}$.
Consequently, 
\begin{align*}
\deg(x, \bar{U}_{t, 2}) = \sum_{z\in \bar{U}_{t, 2}}\widetilde{\e}_{z} = \sum_{z\in\mathcal R_t}\widetilde{\e}_{z}-(|\mathcal R_t|-\bar{n}_{t,2})\leq\sum_{i=0}^{\bar{n}_{t,2}}\widetilde{\e}_{z^{(i)}},
\end{align*}
where $z^{(i)}$ denotes the $i$-th vertex in $\mathcal R_t$ whose edge to $x$ was revealed, with the convention that $\tilde{\e}_{z^0}=0$ identically.
By definition, $\{\widetilde{\e}_{z^{(i)}}\}_{i\in [\bar{n}_{t,2}]}$ are i.i.d. Bernoulli random variables with parameter $q$. 
Letting $\e_i = \widetilde{\e}_{z^{(i)}}$ yields the desired result. 
\end{proof}

Using Lemma \ref{lemma:sd} and applying the Optional Stopping Theorem and Doob's inequality, we can prove the following result which can be viewed as a more general version of Lemma \ref{lemma:A}. 

\begin{Lemma}\label{lemma:A+v}
Let $Y_t$ denote the peeling process defined above. 
For any $\e, \eta>0$, there exists an absolute constant $c_1(\e, \eta)>0$ such that if we define the random subset of $[k]$, 
\begin{align}
\mathcal A(c_1) = \left\{x\in [k]: \deg(x, Y_t)\geq (n_{t,1}-1) + qn_{t,2} + \frac{1}{2}n_{t,3} - c_1\sqrt{n}\ \text{for all $t<T_x$}\right\}.\label{mya+v}
\end{align} 
 then with probability at least $1-\e$, $\left|\mathcal A\right|\geq (1-\eta)k$.
\end{Lemma}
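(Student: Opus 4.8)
The plan is to adapt the argument of \cite[Corollary 5]{feige2010finding}, substituting the stochastic domination of Lemma~\ref{lemma:sd} for the exact binomial increments available in the uncontaminated model. Fix $x\in[k]$ and work on the event $\{t<T_x\}$, so $x\in Y_t$; since $[k]$ is a clique, $\deg(x,Y_t\cap[k])=n_{t,1}-1$. Decomposing $\deg(x,Y_t)$ over the three blocks $[k]$, $\V$, $[k+m]^c$ and using that $\deg(x,\V)$ and $\deg(x,[k+m]^c)$ are fixed quantities (independent of $t$), one has
\[
\deg(x,Y_t)=(n_{t,1}-1)+\big(\deg(x,\V)-\deg(x,\bar U_{t,2})\big)+\big(\deg(x,[k+m]^c)-\deg(x,\bar U_{t,3})\big).
\]
By Lemma~\ref{lemma:sd}, $\deg(x,\bar U_{t,2})\le\bar b_{t,2}(x)$ and $\deg(x,\bar U_{t,3})\le\bar b_{t,3}(x)$ pointwise, with $\bar b_{t,2},\bar b_{t,3}$ the partial sums of i.i.d.\ $\mathrm{Ber}(q)$, $\mathrm{Ber}(1/2)$ sequences as in~\eqref{veryhappy}. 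Centering each $\bar b$ around its mean $q\bar n_{t,2}$ (resp.\ $\tfrac12\bar n_{t,3}$), writing $\bar n_{t,2}=m-n_{t,2}$ and $\bar n_{t,3}=(n-k-m)-n_{t,3}$, and using $\bar n_{t,2}\le m$ and $\bar n_{t,3}\le n-k-m$ for every $t$, I would deduce that, uniformly in $t<T_x$,
\[
\deg(x,Y_t)\ \ge\ (n_{t,1}-1)+q\,n_{t,2}+\tfrac12 n_{t,3}-B_x,
\]
where $B_x:=\big|\deg(x,\V)-qm\big|+\big|\deg(x,[k+m]^c)-\tfrac12(n-k-m)\big|+M_2+M_3$, and $M_2$ (resp.\ $M_3$) is the running maximum over a \emph{fixed} length $m$ (resp.\ $n-k-m$) of the centered $\mathrm{Ber}(q)$ (resp.\ $\mathrm{Ber}(1/2)$) random walk from~\eqref{veryhappy}.

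It therefore suffices to show that, for $c_1=c_1(\e,\eta)$ large enough, at least $(1-\eta)k$ of the vertices $x\in[k]$ have $B_x\le c_1\sqrt n$; by~\eqref{mya+v} such $x$ lie in $\mathcal A(c_1)$. Each of the four summands in $B_x$ has second moment $O(n)$: $\deg(x,\V)\sim\mathrm{Bin}(m,q)$ and $\deg(x,[k+m]^c)\sim\mathrm{Bin}(n-k-m,\tfrac12)$ have variance $\le n/4$, and the two martingale walks underlying $M_2,M_3$ have terminal second moment $\le n/4$. Chebyshev's inequality for the two centered binomials and the Doob/Kolmogorov maximal inequality for $M_2,M_3$ (this is where optional stopping would enter, phrasing $M_2,M_3$ via the walks stopped at $T_x$) each give $\P(\text{term}>\tfrac{c_1}{4}\sqrt n)=O(c_1^{-2})$, hence $\P(B_x>c_1\sqrt n)=O(c_1^{-2})$. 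Picking $c_1$ so that this probability is $\le\e\eta$ gives $\E\,\big|\{x\in[k]:B_x>c_1\sqrt n\}\big|\le\e\eta k$, and Markov's inequality then yields $\big|\{x:B_x>c_1\sqrt n\}\big|\le\eta k$, i.e.\ $|\mathcal A(c_1)|\ge(1-\eta)k$, with probability at least $1-\e$. Since $q(1-q)\le\tfrac14$, $c_1$ depends only on $\e,\eta$.

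The genuinely delicate point is obtaining control uniform over all $t<T_x$ at once: a union bound over $t\le\tau_0\le n$ is useless here, since the second-moment estimates bound each step only with probability $O(1)$ rather than $o(1/n)$. Lemma~\ref{lemma:sd} is exactly what resolves this --- it dominates the intricate, graph-dependent complement-degree processes $\deg(x,\bar U_{t,2})$, $\deg(x,\bar U_{t,3})$ by monotone functionals of \emph{fixed-length} i.i.d.\ walks, so the supremum over $t$ collapses into one application of a maximal inequality. Everything else --- the concentration of $\deg(x,\V)$ and $\deg(x,[k+m]^c)$, and the final averaging over $x\in[k]$ --- is routine.
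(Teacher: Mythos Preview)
Your proposal is correct and follows essentially the same route as the paper: invoke the stochastic domination of Lemma~\ref{lemma:sd}, control the supremum over $t<T_x$ by a single maximal inequality on the centered $\mathrm{Ber}(q)$ and $\mathrm{Ber}(1/2)$ walks, add Chebyshev bounds on the static binomials, and finish with Markov over $x\in[k]$. The only cosmetic difference is that the paper bundles the two binomial fluctuations into one Chebyshev bound on the total degree $\deg(x)$ (using the trivial $\deg(x,\bar U_{t,1})\le \bar n_{t,1}$), whereas you exploit $\deg(x,U_{t,1})=n_{t,1}-1$ exactly and treat $\deg(x,\V)$ and $\deg(x,[k+m]^c)$ separately; and the paper phrases the maximal inequality as OST$+$Doob on the $t$-indexed martingale rather than on the fixed-length walk, which is the same estimate.
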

\begin{proof}
Fix $x\in [k]$ and let $\bar{b}_{t,2}(x)$ be the process defined in \eqref{veryhappy} in Lemma \ref{lemma:sd} that bounds $\deg(x, \bar{U}_{t, 2})$. 
Observe that the process $\bar{b}_{t,2}(x)- q \bar n_{t,2}$ is a martingale with bounded increments. 
Applying the Optional Stopping Theorem (OST), Doob's maximal inequality to the centered dominating process, and Lemma \ref{lemma:sd}, we have 
\begin{align*}
&\P\left(\max_{t<T_x}(\deg(x, \bar{U}_{t, 2})-q \bar{n}_{t, 2})>c_{11}\sqrt{n}\right)\stackrel{\text{Lemma \ref{lemma:sd}}}{\leq}\P\left(\max_{t<T_x}(\bar{b}_{t,2}(x)-q \bar{n}_{t, 2})>c_{11}\sqrt{n}\right)\\
& \qquad \stackrel{\text{OST+Doob}}{\leq}\frac{\E[(\bar{b}_{T_x-1,2}(x)-q \bar{n}_{T_x-1, 2})^2]}{c_{11}^2 n}\stackrel{\bar{n}_{T_x-1, 2}\leq n}{\leq} \frac{q(1-q)}{c_{11}^2}\leq \frac{\e\eta}{3},
\end{align*}
where the last inequality holds if $c_{11}^2\geq\frac{q(1-q)}{\e}$. 
For the same $c_{11}$, by the same reasoning, we have 
\begin{align*}
\P\left(\max_{t<T_x}(\deg(x, \bar{U}_{t, 3})-\bar{n}_{t, 3}/2)>c_{11}\sqrt{n}\right)\leq \frac{\e\eta}{3}.
\end{align*}
Putting the two cases together and noting $\deg(x, \bar{U}_{t, 1})\leq \bar{n}_{t,1}$ yields that, with probability at most $1-\frac{2\e\eta}{3}$, 
\begin{align}
\max_{t<T_x}\left(\deg(x, \bar{U}_{t, 1})+\deg(x, \bar{U}_{t, 2})+\deg(x, \bar{U}_{t, 3})-\bar{n}_{t,1} - q\bar{n}_{t,2}-\bar{n}_{t, 3}/2\right)\leq 2c_{11}\sqrt{n}.\label{213}
\end{align}
Meanwhile, since $\deg(x)$ is a sum of independent Bernoulli random variables, by Chebyshev's inequality, there exists $c_{21}(\e, \eta)>0$ such that with probability at least $1-\frac{\e\eta}{3}$,  
\begin{align}
\deg(x)\geq k-1 + qm + \frac{n-k-m}{2}-c_{21}\sqrt{n}.\label{214}
\end{align} 
It is easy to check that \eqref{213} and \eqref{214} together implies the property defined by $\A$ in \eqref{mya+v} with $c_1 = 2c_{11}+c_{21}$, and such a property holds with probability at least $1-\e\eta$ for each $x\in [k]$.  
The desired follows by an application of Markov's inequality. 
\end{proof}

We are ready to prove the success result for the robust version of the planted clique problem.

\begin{proof}[Proof of Theorem \ref{lemma:n1+v}]
Given Lemmas~\ref{lemma:sd}--\ref{lemma:A+v}, the proof is similar to the proof of Theorem \ref{mainthm:success}, and we only describe the differences that arise in the gradient descent case.
By similar reasoning to the proof of Theorem~\ref{mainthm:success}, it suffices to verify that for any $\eta<1-q(1+\frac{1}{\gamma})<\frac{1}{2}$, there exists $c_1(\eta)>0$ such that with probability at least $1-\e$, 
\begin{align*}
|\A|\geq (1-\eta)k \quad \text{and} \quad \A\subseteq Y_t \quad \text{for all $t\leq\tau_0$},
\end{align*}
where $c_1$ appears in the definition of $\A$ in \eqref{mya+v}. 
To this end, we apply an inductive argument.  
Let $k \geq C_0\sqrt{n}$ for some absolute constant $C_0$ to be determined during the proof.
Taking a union bound for the statements in Lemmas \ref{lemma:00+v}, \ref{lemma:000+v}, \ref{lemma:A+v}, we have the following events hold simultaneously with probability at least $1-\e$: there exist $c, c_1, \rho>0$ such that 
\begin{align}
\min_{x\in U'}|E(x, U)|&\leq\left(1+\delta\right)\frac{\E[\deg(U', U)]}{|U'|}\qquad \forall U'\subseteq U\subseteq [n], |U'|\geq c\log n\label{t1+v}\\
\min_{x\in U\cap [k+m]^c}|E(x, U)|&\leq  \frac{1}{2}(|U\cap ([k+m]\setminus [k])^c|-1) + q|U\cap ([k+m]\setminus [k])| + \rho\sqrt{n}\label{t2+v}\\
&\ \ \ \ \ \ \ \ \ \ \ \ \ \ \ \ \ \ \ \ \ \ \ \ \ \ \ \ \ \ \ \ \ \ \ \ \ \forall U\subseteq [n], |U\cap [k+m]^c|\geq\frac{n}{\log n}\nonumber\\
\left|\mathcal A\right|=\left|\mathcal A(c_1)\right|&\geq (1-\eta)k\label{t3+v},
\end{align}
where $\delta = \delta(U') = \sqrt{\frac{96\log n}{|U'|}}$. 
It is easy to see that $\A\subseteq Y_0 = [n]$ at $t=0$.
To apply induction, we assume $\A\subseteq Y_{t-1}$ and verify $\A\subseteq Y_{t}$. 
We now consider three separate cases. 

If $n_{t-1, 3}\geq \frac{n}{\log n}$, letting $U = Y_{t-1}$ in \eqref{t2+v}, we have
\begin{align*}
\min_{y\in U_{t-1, 3}}& \deg(y, Y_{t-1}) \stackrel{\eqref{t2+v}}{\leq} \frac{1}{2}(|Y_{t-1}|-1) + \rho\sqrt{n}\\
\leq&\ n_{t-1, 1}-1 + qn_{t-1, 2} + \frac{1}{2}n_{t-1, 3} - c_1\sqrt{n} - \left[\frac{1}{2}(n_{t-1, 1}-1) - \rho\sqrt{n} - c_1\sqrt{n}\right]\\
\stackrel{\eqref{mya+v}}{\leq}&\ \min_{x\in \A\subset Y_{t-1}}\deg(x, Y_{t-1}) -  \left[\frac{1}{2}(n_{t-1, 1}-1) - \rho\sqrt{n} - c_1\sqrt{n}\right]\\
\stackrel{\eqref{t3+v}}{\leq}&\ \min_{x\in \A\subset Y_{t-1}}\deg(x, Y_{t-1}) -  \left[\frac{C_0(1-\eta)}{2}\sqrt{n} - \rho\sqrt{n} - c_1\sqrt{n}\right]<\min_{x\in \A\subset Y_{t-1}}\deg(x, Y_{t-1}),
\end{align*}
where the last inequality holds for any $C_0$ satisfying $C_0>\frac{2(\rho+ c_1)}{1-\eta}$. 

If $n_{t-1, 2}\leq n_{t-1, 3}$ or $\frac{K m^2\log n}{k^2}\leq n_{t-1, 3}<\frac{n}{\log n}$ for some large absolute constant $K$ (e.g. $K = 10^5$ works), then letting $(U, U') = (Y_{t-1}, U_{t-1, 3})$ in \eqref{t1+v}, we have
\begin{align*}
\min_{y\in U_{t-1, 3}}&\deg(y, Y_{t-1})\stackrel{\eqref{t1+v}}{\leq}(1+\delta)\left[\frac{1}{2}(n_{t-1, 1} + n_{t-1, 3}-1) + qn_{t-1, 2}\right]\\
\leq&\  n_{t-1, 1}-1 + qn_{t-1, 2} + \frac{1}{2}n_{t-1, 3} - c_1\sqrt{n} - \left[\frac{1-\delta}{2}(n_{t-1, 1}-1) - \delta (qn_{t-1, 2}+n_{t-1, 3}) - c_1\sqrt{n} \right]\\
\stackrel{\eqref{mya+v}}{\leq}&\  \min_{x\in \A\subset Y_{t-1}}\deg(x, Y_{t-1}) - \left[\frac{1-\delta}{2}(n_{t-1, 1}-1) - \delta (qn_{t-1, 2}+n_{t-1, 3}) - c_1\sqrt{n} \right]\\
\leq\ & \min_{x\in \A\subset Y_{t-1}}\deg(x, Y_{t-1}) - \left(\frac{k}{8} - \frac{k}{16} - 2\sqrt{96 n} - c_1\sqrt{n} \right)\\
<\ &\min_{x\in \A\subset Y_{t-1}}\deg(x, Y_{t-1}),
\end{align*}
where the second to last inequality follows by noting that $n_{t-1, 1}\geq |\mathcal A|\geq (1-\eta)k>\frac{k}{2}$ and under the conditions $n_{t-1, 2}\leq n_{t-1, 3}$ or $\frac{K m^2\log n}{k^2}\leq n_{t-1, 3}<\frac{n}{\log n}$, 
\begin{align*}
\delta(n_{t-1, 2}+n_{t-1, 3})&\leq\max\left\{2\sqrt{96n_{t-1,3}\log n}, \sqrt{96n_{t-1,3}\log n}+ \sqrt{\frac{96n^2_{t-1,2}\log n}{n_{t-1, 3}}}\right\}\\
&\leq \max\left\{2\sqrt{96 n}, \sqrt{96 n}+ \frac{k}{16}\right\}\leq 2\sqrt{96 n}+ \frac{k}{16},
\end{align*}
and the last inequality holds if $\frac{C_0}{16} - 2\sqrt{96} - c_1>0$.

Finally, if $n_{t-1, 2}> n_{t-1, 3}$ and $n_{t-1, 3}< \frac{K m^2\log n}{k^2}$ where $K$ is the same as in the previous case, then letting $(U, U') = (Y_{t-1}, U_{t-1, 2})$ in \eqref{t1+v}, we have
\begin{align*}
\min_{y\in U_{t-1, 2}}&\deg(y, Y_{t-1})\leq (1+\delta)q(n_{t-1, 1} + n_{t-1, 2} + n_{t-1, 3}-1)\\
\leq&\  n_{t-1, 1}-1 + qn_{t-1, 2} + \frac{1}{2}n_{t-1, 3} - c_1\sqrt{n} - \left[(1-(1+\delta)q)(n_{t-1, 1}-1) - n_{t-1, 3} - 2\delta qn_{t-1, 2} - c_1\sqrt{n} \right]\\
\stackrel{\eqref{mya+v}}{\leq}&\ \min_{x\in \A\subset Y_{t-1}}\deg(x, Y_{t-1}) - \left[(1-(1+\delta)q)(n_{t-1, 1}-1) - n_{t-1, 3} - 2\delta qn_{t-1, 2} - c_1\sqrt{n} \right]\\
\leq\ & \min_{x\in \A\subset Y_{t-1}}\deg(x, Y_{t-1}) - \left[\frac{(1-q)k}{4} -\frac{(1-q)k}{8} - 2\sqrt{96 n_{t-1, 2}} - c_1\sqrt{n} \right]\\
<\ &\min_{x\in \A\subset Y_{t-1}}\deg(x, Y_{t-1}),
\end{align*}
where the second to last inequality follows by noting $n_{t-1, 1}\geq |\mathcal A|\geq (1-\eta)k>\frac{k}{2}$, and $k^3\geq C_1m^2\log n$ with $C_1 = \frac{8K}{C_0 (1-q)}$ implies
\begin{align*}
n_{t-1, 3}< \frac{K m^2\log n}{k^2}\leq\frac{(1-q)k}{8}, 
\end{align*}
and the last inequality holds if $\frac{C_0(1-q)}{8} - 2\sqrt{96} - c_1>0$. 
\end{proof}

\subsection*{Acknowledgements}
The authors thank anonymous referees for their helpful feedback. The research of R.G. was supported in part by NSF DMS-2246780. A.J. acknowledges the support of the Natural Sciences and Engineering Research Council of Canada (NSERC) and the Canada Research Chairs programme. Cette recherche a \'et\'e enterprise gr\^ace, en partie, au 
soutien financier du Conseil de Recherches en Sciences Naturelles et en G\'enie du Canada (CRSNG),  [RGPIN-2020-04597, DGECR-2020-00199], et du Programme des chaires de recherche du Canada.
The authors thank Kailong Zhang for helping make the phase diagram plot  Figure~\ref{fig:1}.

\bibliographystyle{amsalpha}
\bibliography{ref}
\end{document}